\begin{document}

\markboth{G. Gottlob et al.}{Query Rewriting and Optimization for Ontological Databases}

\markboth{}{Query Rewriting and Optimization for Ontological
Databases}



\newcommand{\A}{\mathcal{A}} \newcommand{\B}{\mathcal{B}}
\newcommand{\C}{\mathcal{C}} \newcommand{\D}{\mathcal{D}}
\newcommand{\E}{\mathcal{E}} \newcommand{\F}{\mathcal{F}}
\newcommand{\G}{\mathcal{G}} \renewcommand{\H}{\mathcal{H}}
\newcommand{\I}{\mathcal{I}} \newcommand{\J}{\mathcal{J}}
\newcommand{\K}{\mathcal{K}} \renewcommand{\L}{\mathcal{L}}
\newcommand{\M}{\mathcal{M}} \newcommand{\N}{\mathcal{N}}
\renewcommand{\O}{\mathcal{O}} \renewcommand{\P}{\mathcal{P}}
\newcommand{\Q}{\mathcal{Q}} \newcommand{\R}{\mathcal{R}}
\renewcommand{\S}{\mathcal{S}} \newcommand{\T}{\mathcal{T}}
\newcommand{\U}{\mathcal{U}} \newcommand{\V}{\mathcal{V}}
\newcommand{\W}{\mathcal{W}} \newcommand{\X}{\mathcal{X}}
\newcommand{\Y}{\mathcal{Y}} \newcommand{\Z}{\mathcal{Z}}


\newcommand{\setone}[2][1]{\set{#1\cld #2}}
\newcommand{\eset}{\emptyset}
\newcommand{\ol}[1]{\overline{#1}}                
\newcommand{\ul}[1]{\underline{#1}}               
\newcommand{\uls}[1]{\underline{\raisebox{0pt}[0pt][0.45ex]{}#1}}

\newcommand{\ra}{\rightarrow}
\newcommand{\Ra}{\Rightarrow}
\newcommand{\la}{\leftarrow}
\newcommand{\La}{\Leftarrow}
\newcommand{\lra}{\leftrightarrow}
\newcommand{\Lra}{\Leftrightarrow}
\newcommand{\lora}{\longrightarrow}
\newcommand{\Lora}{\Longrightarrow}
\newcommand{\lola}{\longleftarrow}
\newcommand{\Lola}{\Longleftarrow}
\newcommand{\lolra}{\longleftrightarrow}
\newcommand{\Lolra}{\Longleftrightarrow}
\newcommand{\ua}{\uparrow}
\newcommand{\Ua}{\Uparrow}
\newcommand{\da}{\downarrow}
\newcommand{\Da}{\Downarrow}
\newcommand{\uda}{\updownarrow}
\newcommand{\Uda}{\Updownarrow}


\newcommand{\incl}{\subseteq}
\newcommand{\imp}{\rightarrow}
\newcommand{\deq}{\doteq}
\newcommand{\dleq}{\dot{\leq}}                   


\newcommand{\per}{\mbox{\bf .}}                  

\newcommand{\cld}{,\ldots,}                      
\newcommand{\ld}[1]{#1 \ldots #1}                 
\newcommand{\cd}[1]{#1 \cdots #1}                 
\newcommand{\lds}[1]{\, #1 \; \ldots \; #1 \,}    
\newcommand{\cds}[1]{\, #1 \; \cdots \; #1 \,}    

\newcommand{\dd}[2]{#1_1,\ldots,#1_{#2}}             
\newcommand{\ddd}[3]{#1_{#2_1},\ldots,#1_{#2_{#3}}}  
\newcommand{\dddd}[3]{#1_{11}\cld #1_{1#3_{1}}\cld #1_{#21}\cld #1_{#2#3_{#2}}}

\newcommand{\ldop}[3]{#1_1 \ld{#3} #1_{#2}}   
\newcommand{\cdop}[3]{#1_1 \cd{#3} #1_{#2}}   
\newcommand{\ldsop}[3]{#1_1 \lds{#3} #1_{#2}} 
\newcommand{\cdsop}[3]{#1_1 \cds{#3} #1_{#2}} 


\newcommand{\quotes}[1]{{\lq\lq #1\rq\rq}}
\newcommand{\set}[1]{\{#1\}}                      
\newcommand{\Set}[1]{\left\{#1\right\}}
\newcommand{\bigset}[1]{\Bigl\{#1\Bigr\}}
\newcommand{\bigmid}{\Big|}
\newcommand{\card}[1]{|{#1}|}                     
\newcommand{\Card}[1]{\left| #1\right|}
\newcommand{\cards}[1]{\sharp #1}
\newcommand{\sub}[1]{[#1]}
\newcommand{\tup}[1]{\langle #1\rangle}            
\newcommand{\Tup}[1]{\left\langle #1\right\rangle}


\def\qed{\hfill{\qedboxempty}      
  \ifdim\lastskip<\medskipamount \removelastskip\penalty55\medskip\fi}

\def\qedboxempty{\vbox{\hrule\hbox{\vrule\kern3pt
                 \vbox{\kern3pt\kern3pt}\kern3pt\vrule}\hrule}}

\def\qedfull{\hfill{\qedboxfull}   
  \ifdim\lastskip<\medskipamount \removelastskip\penalty55\medskip\fi}

\def\qedboxfull{\vrule height 4pt width 4pt depth 0pt}

\newcommand{\markfull}{\qedboxfull}
\newcommand{\markempty}{\qed}



\newcommand{\assign}{:=}
\newcommand{\alg}[1]{$\mathsf{#1}$}
\newcommand{\tool}[1]{\textsc{#1}}
\newcommand{\tgdrewrite}{\alg{XRewrite}\xspace}
\newcommand{\iris}{\tool{Iris}\xspace}
\newcommand{\irispm}{\tool{Sysname}\xspace}
\newcommand{\alaska}{\tool{Alaska}\xspace}
\newcommand{\nyaya}{\tool{Nyaya}\xspace}


\newcommand{\norm}[1]{\mathsf{N}(#1)}
\newcommand{\cmpitem}{\noindent\ensuremath{-}\xspace}

\newcommand{\lin}[2]{\mathsf{b}(#1,#2)}
\newcommand{\mgu}[2]{\textsf{MGU}(#1,#2)}
\newcommand{\f}[1]{\textsf{f}(#1)}
\newcommand{\ff}[1]{\textsf{ff}(#1)}
\newcommand{\fr}[2]{\textsf{fr}_{#1}(#2)}

\newcommand{\DB}{\mathit{DB}} \newcommand{\wrt}[0]{w.r.t.\ }

\newcommand{\ifdirection}{``$\Leftarrow$'' {}}
\newcommand{\onlyifdirection}{``$\Rightarrow$'' {}}

\renewcommand{\emptyset}{\varnothing}

\newcommand{\arc}{^\curvearrowright}

\newcommand{\qans}[3]{\mathsf{QAns}(#1,#2,#3)} 

\newcommand{\relevent}[2]{\substack{#1,#2 \\ \twoheadrightarrow}} 


\newcommand{\rel}[1]{\mathsf{#1}}
\newcommand{\attr}[1]{\mathit{#1}}
\newcommand{\const}[1]{\mathit{#1}}
\newcommand{\vett}[1]{\vec{#1}}

\newcommand{\ext}[2]{#1^{#2}}


\newcommand{\pred}[1]{\mathit{pred}(#1)}

\newcommand{\atom}[1]{\underline{#1}}
\newcommand{\tuple}[1]{\mathbf{#1}}

\newcommand{\parent}[1]{\mathit{par}(#1)}
\newcommand{\dept}[1]{\mathit{depth}(#1)}

\newcommand{\mar}[1]{\hat{#1}}


\newcommand{\dom}{\Gamma}
\newcommand{\freshdom}{\Gamma_N}
\newcommand{\variables}{\Gamma_V}
\newcommand{\adom}[1]{\mathit{terms}(#1)}
\newcommand{\var}[1]{\mathit{var}(#1)}
\newcommand{\varpos}[2]{\mathit{var}(#1,#2)}


\newcommand{\entities}{\mathit{Ent}}
\newcommand{\relationships}{\mathit{Rel}}
\newcommand{\attributes}{\mathit{Att}}
\newcommand{\symbols}{\mathit{Sym}}


\newcommand{\dep}{\Sigma}
\newcommand{\tdep}{\Sigma_T}
\newcommand{\edep}{\Sigma_E}
\newcommand{\fdep}{\Sigma_F}
\newcommand{\kdep}{\Sigma_K}
\newcommand{\ndep}{\Sigma_\bot}

\newcommand{\key}[1]{\mathit{key}(#1)}
\newcommand{\isa}[1]{\mathit{ISA}}


\newcommand{\Var}[1]{\mathit{Var}(#1)}
\newcommand{\head}[1]{\mathit{head}(#1)}
\newcommand{\heads}[1]{\mathsf{heads}(#1)}
\newcommand{\body}[1]{\mathit{body}(#1)}
\newcommand{\arity}[1]{\mathit{arity}(#1)}
\newcommand{\conj}[1]{\mathit{conj}(#1)}

\newcommand{\cover}[1]{\mathit{cover}(#1)}

\newcommand{\ans}[3]{\mathit{ans}(#1,#2,#3)}


\newcommand{\ins}[1]{\mathbf{#1}}
\newcommand{\insA}{\ins{A}}
\newcommand{\insB}{\ins{B}}
\newcommand{\insC}{\ins{C}}
\newcommand{\insD}{\ins{D}}
\newcommand{\insX}{\ins{X}}
\newcommand{\insY}{\ins{Y}}
\newcommand{\insZ}{\ins{Z}}
\newcommand{\insK}{\ins{K}}
\newcommand{\insU}{\ins{U}}
\newcommand{\insT}{\ins{T}}
\newcommand{\insV}{\ins{V}}


\newcommand{\chase}[2]{\mathit{chase}(#1,#2)}
\newcommand{\ochase}[2]{\mathit{Ochase}(#1,#2)}
\newcommand{\rchase}[2]{\mathit{Rchase}(#1,#2)}
\newcommand{\instant}{\mathit{inst}}
\newcommand{\base}{\mathit{base}}

\newcommand{\pchase}[3]{\mathit{chase}^{#1}(#2,#3)}
\newcommand{\apchase}[3]{\mathit{chase}^{[#1]}(#2,#3)}
\newcommand{\aprchase}[3]{\mathit{Rchase}^{[#1]}(#2,#3)}
\newcommand{\level}[1]{\mathit{level}(#1)}
\newcommand{\freeze}[1]{\mathit{fr}(#1)}
\newcommand{\mods}[2]{\mathit{mods}(#1,#2)}
\newcommand{\subs}[2]{\gamma_{#1,#2}}


\renewcommand{\paragraph}[1]{\textbf{#1}}
\newenvironment{proofsk}{\textsc{Proof (sketch).}}{$\Box$\newline}
\newenvironment{proofrsk}{\textsc{Proof (rough sketch).}}{$\Box$\newline}
\newenvironment{proofidea}{\textsc{Proof idea.}}{$\Box$\newline}


\newcommand{\dg}[2]{\mathit{DG}(#1,#2)}
\newcommand{\rank}[1]{\mathit{rank}(#1)}


\title{Query Rewriting and Optimization for Ontological Databases}
\author{GEORG GOTTLOB
\affil{University of Oxford} GIORGIO ORSI \affil{University of
Oxford} ANDREAS PIERIS \affil{University of Oxford}}

\begin{abstract}
Ontological queries are evaluated against a knowledge base
consisting of an extensional database and an ontology (i.e., a set
of logical assertions and constraints which derive new intensional
knowledge from the extensional database), rather than directly on
the extensional database. The evaluation and optimization of such
queries is an intriguing new problem for database research. In this
paper, we discuss two important aspects of this problem: query
rewriting and query optimization.
Query rewriting consists of the compilation of an ontological query
into an equivalent first-order query against the underlying
extensional database. We present a novel query rewriting algorithm
for rather general types of ontological constraints which is
well-suited for practical implementations. In particular, we show
how a conjunctive query against a knowledge base, expressed using
linear and sticky existential rules, that is, members of the
recently introduced Datalog$^\pm$ family of ontology languages, can
be compiled into a union of conjunctive queries (UCQ) against the
underlying database.
Ontological query optimization, in this context, attempts to improve
this rewriting process so to produce possibly small and
cost-effective UCQ rewritings for an input query.
\end{abstract}

\category{H.2.4}{Database Management}{Systems --- query processing,
rule-based databases, relational databases}

\category{I.2.3}{Artificial Intelligence}{Deduction and Theorem
Proving --- inference engines, logic programming, resolution}

\terms{Algorithms, Theory, Languages, Performance}

\keywords{Ontological query answering, tuple-generating
dependencies, query rewriting, query optimization}

%
%

\maketitle

\section{Introduction}\label{sec:introduction}

\subsection{Ontological Database Management Systems}

The use of ontological reasoning in companies, governmental
organizations, and other enterprises has become widespread in recent
years. An ontology is an explicit specification of a
conceptualization of an area of interest, and consists of a formal
representation of knowledge as a set of concepts within a domain,
and the relationships between instances of those concepts.
Moreover, ontologies have been adopted as high-level conceptual
descriptions of the data contained in data repositories that are
sometimes distributed and heterogeneous in the data models. Due to
their high expressive power, ontologies are also replacing more
traditional conceptual models such as UML class diagrams and Entity
Relationship schemata.

We are currently witnessing the marriage of ontological reasoning
and database technology, which gives rise to a new type of database
management systems, the so-called ontological database management
systems, equipped with advanced reasoning and query processing
mechanisms~\cite{CDLL*07,CGLP11}.
More precisely, an extensional database $D$ is combined with an
ontology $\dep$ which derives new intensional knowledge from the
extensional database.  An input conjunctive query is not just
answered against the database, as in the classical setting, but
against the logical theory (a.k.a. ontological database) $D \cup
\dep$ --- recall that conjunctive queries correspond to the
select-project-join fragment of relational algebra, and form one of
the most natural and commonly used languages for querying relational
databases~\cite{AbHV95}. Therefore, the answer to a conjunctive
query $\exists \insY \, \varphi(\insX,\insY)$ with distinguished
variables $\insX$ over the ontological database consists of all
tuples $\tuple{t}$ of constants such that, when we substitute the
variables $\insX$ with $\tuple{t}$, $\exists \insY
\varphi(\tuple{t},\insY)$ evaluates to $\mathit{true}$ in every
model of $D \cup \dep$, i.e., in every instance which contains $D$
and satisfies $\dep$.

This amalgamation of different technologies stems from the need for
semantically enhancing existing databases with ontological
constraints. Indeed, database technology providers have recognized
this need, and have recently started to build ontological reasoning
modules on top of their existing software with the aim of delivering
effective database management solutions to their customers.
For example, Oracle Inc. offers a system, called Oracle Database
11g, enhanced by modules performing ontological reasoning
tasks\footnote{http://www.oracle.com/technetwork/database/enterprise-edition/overview/index.html}.
Also, Ontotext offers a family of semantic repositories, called
OWLIM\footnote{http://www.ontotext.com/owlim}, and Semafora Systems
develops an inference machine, called
Ontobroker\footnote{http://www.semafora-systems.com/en/products/ontobroker/},
for processing ontologies that support all of the World Wide Web
Consortium (W3C) recommendations.
Enhancing databases with ontologies is also at the heart of several
research-based systems such as QuOnto~\cite{ACDL*05} and
Quest~\cite{RoCa12}.

\subsection{Ontology Languages}

Ontologies are modeled using formal languages called
ontology languages.
Description Logics (DLs)~\cite{BCMN+03} are a family of knowledge
representation languages widely used in ontological modeling. In
fact, DLs model a domain of interest in terms of concepts and roles,
which represent classes of individuals and binary relations on
classes of individuals, respectively. Interestingly, DLs provide the
logical underpinning for the Web Ontology Language (OWL), and its
revision OWL~2, as standartized by the
W3C\footnote{http://www.w3.org/TR/owl2-overview/}.
Unfortunately, in order to achieve favorable computational
properties, DLs are able only to describe knowledge for which the
underlying relational structure is treelike. Moreover, they usually
support only unary and binary relations.
The overcoming of the above limitations, through the definition of
expressive rule-based ontology languages, has become the last years
a field of intense research in the KR and database communities. In
fact, traditional database constraints such as
\emph{tuple-generating dependencies (TGDs)} (a.k.a.
\emph{existential rules} and \emph{Datalog$^\pm$ rules}) of the form
$\forall \insX \forall \insY \, \varphi(\insX,\insY) \ra \exists
\insZ \, \psi(\insX,\insZ)$, where $\varphi$ and $\psi$ are
conjunctions of atoms over a relational schema, appeared to be a
suitable formalism for ontological modeling and reasoning
--- examples of such languages can be found in~\cite{BLMS11,KrRu11,CaGL12,CaGP12}.

A vital computational property of an ontology language, apart from
ensuring the decidability, is to guarantee the tractability of
conjunctive query answering w.r.t.~the \emph{data complexity}, i.e.,
the complexity calculated by considering only the database as part
of the input. Indeed, the data complexity of query answering is
widely regarded as more meaningful and relevant in practice than the
\emph{combined complexity} (calculated by considering everything as
part of the input), since the query and the ontology are typically
of a size that can be productively assumed to be fixed, and usually
are much smaller than a typical relational database.
Several lightweight DLs have been proposed which guarantee that
conjunctive query answering is feasible in polynomial time w.r.t.
the data complexity. Such DLs are $\mathcal{EL}$~\cite{Baad03} and
the members of the \emph{DL-Lite} family~\cite{CDLL*07,PLCD*08},
i.e., DL-Lite$_{\R}$, DL-Lite$_{\F}$ and DL-Lite$_{\A}$. These
languages can be seen as tractable sublanguages of OWL; in fact, the
language DL-Lite$_\R$ forms the OWL 2
QL\footnote{http://www.w3.org/TR/owl2-profiles/} profile of OWL 2.
It was convincingly argued that, despite their simplicity,
$\mathcal{EL}$ and the DL-Lite formalisms are powerful enough for
modeling an overwhelming number of real-life scenarios.
More recently, several classes of TGDs have been identified which
guarantee the same low data complexity for conjunctive query
answering. For example, the class of \emph{guarded} TGDs, inspired
by the guarded fragment of first-order logic~\cite{AnBN98}, which is
noticeably more general than $\mathcal{EL}$ and the members of the
DL-Lite family, has been investigated in~\cite{CaGK08}
--- extensions of guarded TGDs can be found in~\cite{BLMS11,KrRu11}.
Moreover, the classes of \emph{linear} and \emph{sticky} TGDs, which
both encompass the DL-Lite family, have been proposed
in~\cite{CaGL12} and~\cite{CaGP12}.

\subsection{First-Order Rewritability}

\begin{figure}[t]
\begin{center}
\includegraphics[]{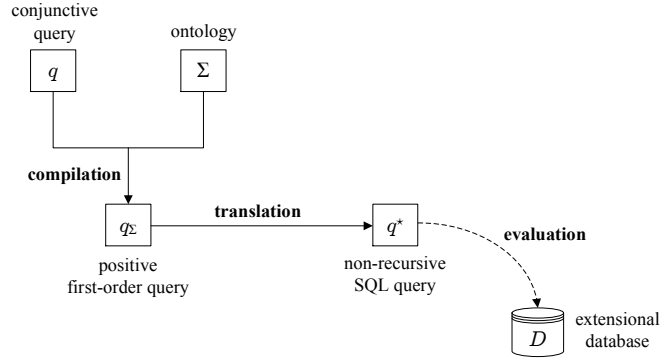}
\caption{Answering queries via rewriting.}\label{fig:rewriting}
\end{center}
\end{figure}

Polynomial time tractability is often considered not to be good
enough for efficient query processing. Ideally, one would like to
achieve the same complexity as for processing first-order queries,
or, equivalently, (non-recursive) SQL queries.
An ontology language $\mathcal{L}$ guarantees the \emph{first-order
rewritability} of conjunctive query answering if, for every
conjunctive query $q$ and ontology $\dep$ expressed in
$\mathcal{L}$, a positive first-order query $q_{\dep}$, called
\emph{perfect rewriting}\footnote{In general, there exist more than
one perfect rewritings. However, for query answering, all the
possible rewritings are equivalent, and thus we can refer to
\emph{the} perfect rewriting.}, can be constructed such that, given
a database $D$, $q_{\dep}$ evaluated over $D$ yields exactly the
same result as $q$ evaluated against the ontological database $D
\cup \dep$~\cite{CDLL*07}.
Since answering first-order queries is in \textsc{ac}$_{0}$ in data
complexity~\cite{Vard95}, it immediately follows that query
answering under ontology languages that guarantee the first-order
rewritability of the problem is also in \textsc{ac}$_{0}$ in data
complexity.

First-order rewritability is a most desirable property since it
ensures that the query answering process can be largely decoupled
from data access. In fact, as depicted in
Figure~\ref{fig:rewriting}, to answer a query $q$ over an
ontological database $D \cup \dep$, a separate software can compile
$q$ into $q_{\dep}$, then translate $q_\dep$ into a standard SQL
query $q^{\star}$, and finally submit it to the underlying
relational database management system holding $D$, where it is
evaluated and optimized in the usual way.

\begin{example}\label{exa:query-rewriting}
Consider the set $\dep$ consisting of the TGD:
\[
\forall X \forall Y \, \mathit{project}(X),\mathit{inArea}(X,Y)\
\ra\ \exists Z \, \mathit{hasCollaborator}(Z,Y,X),
\]
asserting that each project has an external collaborator specialized
in the area of the project.
We can ask for projects in the area of databases for which there are
external collaborators by posing the CQ $\exists A \,
\mathit{hasCollaborator}(A,db,B)$. Intuitively, due to the above
TGD, not only we have to query $\mathit{hasCollaborator}$, but we
also need to look for projects in the area of databases, as such
projects will necessarily have an external collaborator. The perfect
rewriting $q_\dep$ will thus be the union of CQs:
\[
\left(\exists A \, \mathit{hasCollaborator}(A,\mathit{db},B)\right)\
\vee\ \left(\mathit{project}(B) \wedge
\mathit{inArea}(B,\mathit{db})\right).
\]
Assuming the schema $\mathit{project}(\mathsf{p\_id}),
\mathit{inArea}(\mathsf{p\_id},\mathsf{area}),
\mathit{hasCollaborator}(\mathsf{c\_id},\mathsf{area},\mathsf{p\_id})$,
it is clear that $q_\dep$ can be written in SQL as shown in
Figure~\ref{fig:sql-query}. \hfill\markfull
\end{example}

\begin{figure}
\[
\begin{array}{lll}
&& \footnotesize \textrm{SELECT~C.p\_id~FROM~hasCollaborator~C}\\
&& \footnotesize \textrm{WHERE~C.area~=~'db'}\\
&& \footnotesize \textrm{UNION}\\
&& \footnotesize \textrm{SELECT~P.p\_id~FROM~project~P,~inArea~A}\\
&& \footnotesize \textrm{WHERE~A.area~=~'db'~AND~P.p\_id~=~A.p\_id.}
\end{array}
\]
\caption{The SQL query of Example~\ref{exa:query-rewriting}.}
\label{fig:sql-query}
\end{figure}

Interestingly, the members of the DL-Lite family of DLs, as well as
the classes of linear and sticky TGDs, guarantee the first-order
rewritability of conjunctive query answering. Actually, the above
languages guarantee a stronger property than first-order
rewritability: given a conjunctive query $q$, and an ontology $\dep$
expressed in one of the above formalisms, the perfect rewriting
$q_\dep$ can be expressed as a union of conjunctive queries, i.e.,
we do not need the full expressive power of positive first-order
queries.
As we explain below, the main problem that we address in this paper
is precisely the question of how to compute $q_\dep$ correctly and
efficiently, when the input ontology $\dep$ is expressed as a set of
linear or sticky TGDs.

\subsection{Aims and Objectives}\label{sec:aims-objectives}

The advantage of first-order rewritability is obvious, that is,
conjunctive query answering can be deferred to a standard query
language such as SQL, which in turn allows us to exploit mature and
efficient existing database technology that is accessible via the
underlying database management system.
However, there is a drawback in this approach: if the algorithm
which constructs the perfect rewriting inflates the query
excessively, and creates from a reasonably sized ontological query a
massive exponentially sized SQL query, then even the best database
management system may be of little use.
This problem gave rise to a flourishing research activity in the DL
community. A remarkable number of rewriting algorithms, with the aim
of compiling a conjunctive query and a DL-Lite ontology into a
``small'' union of conjunctive queries, have been proposed the last
five years (see, e.g.,~\cite{CDLL*07,PeMH10,ChTS11,KiKZ12,VeSS13})
--- see Section~\ref{sec:related-work}.

Surprisingly, before the conference version of the present
paper~\cite{ano}, no practical algorithm, able to efficiently
compile a conjunctive query and an ontology modeled using an
expressive TGD-based language into a union of conjunctive queries,
was available. It is the precise aim of this work to fill this gap
for linear and sticky TGDs. Both linearity and stickiness are
well-accepted paradigms:
\begin{itemize}
\item A TGD is
called \emph{linear} if it has only one body-atom~\cite{CaGL12};
notice that the body is the left-hand side of the implication.
Despite its simplicity, linearity forms a robust language with
several applications.
Linear TGDs are strictly more expressive than the description logic
DL-Lite$_{\R}$~\cite{CDLL*07} which, as already said, forms the OWL
2 QL profile of W3Cs standard ontology language for modeling
Semantic Web ontologies.
Importantly, linear TGDs, in contrast to DL-Lite$_{\R}$, can be used
with relational database schemas of arbitrary arity. The usefulness
of schemas of higher arity (not just unary and binary relations) has
been recognized by the DL community, and as evident we mention
DLR-Lite~\cite{CGLL+13}, a recent generalization of DL-Lite to
arbitrary arity, which is also captured by linear TGDs.
Also, linear TGDs generalize inclusion dependencies, a well-known
class of relational constraints; in fact, inclusion dependencies can
be equivalently written as TGDs with just one body-atom and one
head-atom without repeated variables.
Moreover, linear TGDs are powerful enough to express conditional
inclusion dependencies which extend traditional inclusion
dependencies by enforcing bindings of semantically related data
values, and they are useful in data cleaning and contextual schema
mapping~\cite{BEFF06,BrFM07}; in fact, conditional inclusion
dependencies can be written as linear TGDs with constant values in
the body.
Furthermore, linear TGDs generalize local-as-view (LAV) TGDs which
are employed in data exchange and data integration to define schema
mappings, i.e., specifications that describe how data for a source
schema can be transformed into data for a target schema; see,
e.g.,~\cite{CaKo09}.
Finally, linear TGDs can be used in schema evolution, and in
particular for expressing the decompose operator, with the aim of
splitting a table into smaller tables~\cite{CMDZ13}.

\item Stickiness~\cite{CaGP12} allows joins to appear in rule-bodies which
are not expressible via linear TGDs, let alone via DL(R)-Lite
assertions; more details are given in
Section~\ref{sec:preliminaries}.
Interestingly, sticky TGDs are able to capture well-known data
modeling constructs such as (conditional) inclusion and multivalued
dependencies. Furthermore, sticky TGDs, in contrast to linear TGDs
(and most of the existing DLs) allow to describe knowledge for which
the underlying relational structure is not treelike.
This is mainly due to the fact that sticky TGDs are expressive
enough for encoding the cartesian product of two tables; e.g., the
set of sticky TGDs consisting of $\forall X \forall Y \, p_i(X,Y)
\ra \exists Z \, p_i(Y,Z),s_i(Z)$, for each $i \in \{1,2\}$, and
$\forall X \forall Y \, s_1(X),s_2(Y) \ra r(X,Y)$, computes the
cartesian product of $s_1$ and $s_2$ which forms an infinite clique,
and thus the underlying relational structure has infinite treewidth.
As already observed by the DL community, there are some natural
ontological statements, e.g., ``all elephants are bigger than all
mice''~\cite{RuKH08}, which are expressible only via cartesian
product assertions. Notice that the above statement can be captured
by the sticky TGD $\forall X \forall Y \,
\mathit{elephant}(X),\mathit{mouse}(Y) \ra
\mathit{biggerThan}(X,Y)$.
Finally, sticky TGDs can also be used for schema evolution purposes,
and in particular for expressing the merge operator, with the aim of
putting together two or more tables~\cite{CMDZ13}.
\end{itemize}

Apart from designing a practical rewriting algorithm for linear and
sticky TGDs, we would also like to investigate the possibility of
improving the computation of the perfect rewriting on multi-core
architectures commonly available in modern database servers.
On the long term, we envision relational database systems able to
handle ontological constraints natively, as it is done today for
traditional data dependencies such as primary and foreign keys. A
key difference is that ontological constraints are not supposed to
be enforced by the DBMS as classical integrity constraints, but
rather to be taken into consideration during the evaluation of a
query.
This paper is a significant step towards this direction.

\subsection{The Existing Approach}

Although it is known that both linear and sticky TGDs guarantee the
first-order rewritability of conjunctive query answering, the
existing algorithms are of theoretical nature, and it is generally
accepted that there is no obvious way how they will lead to better
practical rewriting algorithms.
The key property of linear and sticky TGDs which implies the
first-order rewritability of conjunctive query answering is the
so-called \emph{bounded derivation-depth property
(BDDP)}~\cite{CaGL12}. As we shall see in
Section~\ref{sec:preliminaries}, to compute the answer to a
conjunctive query $q$ over an ontological database $D \cup \dep$,
where $\dep$ is a linear or sticky ontology, it suffices to evaluate
$q$ over a special model of $D \cup \dep$ which can be
homomorphically embedded into every other model of $D \cup \dep$.
Such a model, called \emph{universal model} (a.k.a. \emph{canonical
model}), always exists and can be constructed by applying the
\emph{chase procedure}, a powerful tool for reasoning about data
dependencies --- intuitively, the chase adds new atoms to the
extensional database $D$, possibly involving null values which act
as witnesses for the existentially quantified variables, until the
final result, denoted $\chase{D}{\dep}$, satisfies $\dep$. However,
$\chase{D}{\dep}$ is in general infinite, and thus not explicitly
computable. The BDDP implies that it suffices to evaluate $q$ over
an initial finite part of $\chase{D}{\dep}$ which depends only on
$q$ and $\dep$. Roughly, $\chase{D}{\dep}$ can be decomposed into
levels, where database atoms have level zero, while an inferred atom
has level $k+1$ if it is obtained due to atoms with maximum level
$k$; we refer to the part of the chase up to level $k$ as
$\pchase{k}{D}{\dep}$. Thus, the BDDP implies that there exists $k
\geqslant 0$ such that, for every database $D$, the answer to $q$
over $D \cup \dep$ coincides with the answer to $q$ over
$\pchase{k}{D}{\dep}$.
An algorithm for computing the prefect rewriting $q_\dep$ by
exploiting the above property has been presented in~\cite{CaGL12}.
Roughly, one can enumerate all the possible database ancestors
$D_1,\ldots,D_n$ of the image of the given query, and then, starting
from each $D_i$, construct $\pchase{k}{D}{\dep}$, where $k$ is the
depth provided by the BDDP, which will give rise to a query in the
final rewriting.
It is evident that such a  procedure is computationally expensive,
and also the obtained queries are usually very large and cannot be
effectively materialized. Notice that the goal of~\cite{CaGL12} was
to establish that classes of TGDs which enjoy the BDDP guarantee the
first-order rewritability of conjunctive query answering, without
taking into account implementation issues. It is apparent that we
had to look for new rewriting procedures which substantially deviate
from the one described above.

\subsection{Summary of Contributions}

Our contributions can be summarized as follows:

\begin{enumerate}
\item We propose a novel query rewriting algorithm, called
$\mathsf{XRewrite}$, which is based on backward-chaining resolution.
In fact, $\mathsf{XRewrite}$ uses the TGDs as rewriting rules, with
the aim of simulating, independently from the extensional database,
the chase derivations which are responsible for the generation of
the image of the input query. Such an algorithm is better for
practical applications than the one described above since, during
the rewriting process, we only explore the part of the chase which
is needed in order to entail the query, i.e., the proof of the
query, and thus we avoid the generation of a non-negligible number
of useless atoms.
Interestingly, $\mathsf{XRewrite}$ is sound and complete even if we
consider an arbitrary set of TGDs without any syntactic
restrictions; however, in this general case, the termination of the
algorithm is not guaranteed. We show that, if the input set of TGDs
is linear or sticky, then $\mathsf{XRewrite}$ terminates, and thus
it forms a practical query rewriting algorithm for linear and sticky
TGDs; recall that the designing of such an algorithm is the main
research challenge of this work.

\vspace{2mm}

\item We present a parallel version of $\mathsf{XRewrite}$, called
$\mathsf{XRewriteParallel}$, with the aim of reducing the overall
execution time for computing the final rewriting by exploiting
multi-core architectures. To the best of our knowledge, this is the
first attempt to design a parallel query rewriting algorithm.
The key idea is to decompose the input query $q$ into smaller
queries $q_1,\ldots,q_m$, where $m \geqslant 1$, in such a way that
each $q_i$ can be rewritten independently by concurrent rewriters
into a query $Q_{q_i}$, and then merge the queries
$Q_{q_1},\ldots,Q_{q_m}$ in order to obtain the final rewriting.

\vspace{2mm}

\item We propose a technique, called query elimination, aiming at
optimizing the final rewritten query under linear TGDs. Query
elimination, which is an additional step during the execution of
$\mathsf{XRewrite}$, reduces \emph{(i)} the size of the final
rewriting, \emph{(ii)} the number of atoms in each query of the
rewriting, and \emph{(iii)} the number of joins to be executed.
The key idea underlying query elimination is that the linearity of
TGDs allows us to effectively identify atoms in the body a query
which are logically implied (w.r.t.~a given set of TGDs) by other
atoms in the same query.

\vspace{2mm}

\item After implementing our algorithm, we have
analyzed its behavior, and we have spotted certain operations, such
as the computation of the most general unifier for a set of atoms,
that might benefit from caching.
%
%
We also perform an extensive analysis on the impact of our
optimizations on the rewriting process, and we show that all of them
reduce the number of redundant queries in the final rewriting.
We finally compare our system with \textsc{Alaska} (i.e., the
reference implementation of~\cite{KLMT12}) which is the only known
system which supports ontological query rewriting under arbitrary
TGDs.
We observe that both systems return minimal rewritings on the given
test cases.
However, query elimination allows us to perform a better exploration
of the rewriting search space on most of the given test cases.
Interestingly, even for the cases where \tool{Alaska} performs a
better exploration of the search space, our algorithm achieves
better performance due to the caching mechanism.
Notably, on certain test cases, the parallelization of the rewriting
provides a fundamental contribution towards making the rewriting
manageable as the number of explored and generated queries is
drastically reduced.
\end{enumerate}

\paragraph{\small{\textsf{Roadmap.}}}
After a review of previous work on query rewriting in
Section~\ref{sec:related-work}, and some technical definitions and
preliminaries in Section~\ref{sec:preliminaries}, we proceed with
our new results.
In Section~\ref{sec:ucq-rewriting}, we present the rewriting
algorithm $\mathsf{XRewrite}$, and in Section~\ref{sec:parallelize}
its parallel version.
In Section~\ref{sec:ucq-optimization}, we present the query
elimination technique.
Implementation issues are discussed in
Section~\ref{sec:implementation}, while the experimental evaluation
is presented in Section~\ref{sec:experimental-evaluation}.
We conclude in Section~\ref{sec:conclusions} with a brief outlook on
further research.

\section{Related Work on Query Rewriting}\label{sec:related-work}

An early query rewriting algorithm for the DL-Lite family of DLs,
introduced in~\cite{CDLL*07} and implemented in the QuOnto system,
reformulates the given query into a union of conjunctive queries.
The size of the reformulated query is unnecessarily large. This is
mainly due to the fact that the factorization step (which is needed,
as we shall see, to guarantee completeness) is applied in a
``blind'' way, even if it is not needed, and as a result many
superfluous queries are generated.
In~\cite{PeMH10} an alternative resolution-based rewriting algorithm
for DL-Lite$_{\R}$ is proposed, implemented in the Requiem system,
that addressed the issue of the useless factorizations (and
therefore of the redundant queries generated due to this weakness)
by directly handling existential quantification through proper
functional terms
--- notice that this algorithm works also for more expressive DLs,
which do not guarantee first-order rewritability of query answering;
in this case, the computed rewriting is a (recursive) Datalog query.
A query rewriting algorithm for DL-Lite$_{\R}$, called Rapid, which
is more efficient than the one in~\cite{PeMH10}, is presented
in~\cite{ChTS11}. The efficiency of Rapid is based on the selective
and stratified application of resolution rules; roughly, it takes
advantage of the query structure and applies a restricted sequence
of resolutions that may lead to useful and redundant-free
rewritings.
An alternative query rewriting technique for DL-Lite$_{\R}$ is
presented in~\cite{KiKZ12} --- although the obtained rewritings are,
in general, not correct and of exponential size, in most practical
cases the rewritings are correct and of polynomial size.
In~\cite{VeSS13}, the problem of computing query rewritings for
DL-Lite$_{\R}$ in an incremental way is investigated. More
precisely, a technique which computes an extended query by
``extending'' a previously computed rewriting of the initial query
(and thus avoiding recomputation) is proposed.

The algorithms mentioned above leverage specificities of DLs, such
as the limit to unary and binary predicates only and the absence of
variable permutations in the axioms. Therefore, they cannot be
easily extended to more general TGD-based languages; in fact,
DL-based systems often resort to case-by-case analysis on the
syntactic form of the DL axioms.
Following a more general approach, the
works~\cite{ano,KLMT12,KLMT13} presented a backward-chaining
rewriting algorithm which is able to deal with arbitrary TGDs,
providing that the language under consideration satisfies suitable
syntactic restrictions that guarantee the termination of the
algorithm.
Other works, which follow a different approach, and instead of
computing a union of conjunctive queries the rewritings are
expressed in some other query language, such as non-recursive
Datalog, can be found in the
literature~\cite{RoAl10,OrPi11,GoSc12,KKPZ12,Thom13}.

A distantly related research field is that of database query
reformulation in presence of views and
constraints~\cite{DePT99,Hale01}. Given a conjunctive query $q$, and
a set of constraints $\dep$, the goal is to find all the minimal
equivalent reformulations of $q$ w.r.t.~$\dep$. The most interesting
approach in this respect is the chase \& backchase
algorithm~\cite{DePT99}, implemented in the MARS
system~\cite{DeTa03}.
The relationship of the chase \& backchase algorithm with this work
is discussed in Section~\ref{sec:ucq-optimization}.

\section{Definitions and Background}\label{sec:preliminaries}

\subsection{Technical Definitions}\label{sec:technical-definitions}

We present background material necessary for this paper. We recall
some basics on relational databases, relational queries,
tuple-generating dependencies, and the chase procedure relative to
such dependencies. For further details on the above notions we refer
the reader to~\cite{AbHV95}.

\paragraph{\small{\textsf{Alphabets.}}}
We define the following pairwise disjoint (countably infinite) sets
of symbols: a set $\dom$~of \emph{constants} (constitute the
``normal'' domain of a database), a set $\freshdom$~of \emph{labeled
nulls} (used as placeholders for unknown values, and thus can be
also seen as (globally) existentially quantified variables), and a
set $\variables$ of (regular) \emph{variables} (used in queries and
dependencies). Different constants represent different values
(\emph{unique name assumption}), while different nulls may represent
the same value. A fixed lexicographic order is assumed on $\dom \cup
\freshdom$ such that every value in $\freshdom$ follows all those in
$\dom$. We denote by $\insX$ sequences (or sets, with a slight abuse
of notation) of variables $X_1,\ldots,X_k$, with $k \geqslant 1$.
Throughout, let $[n] = \{1,\ldots,n\}$, for any integer $n \geqslant
1$.

\paragraph{\small{\textsf{Relational Model.}}}
A \emph{relational schema} $\R$ (or simply \emph{schema}) is a set
of \emph{relational symbols} (or \emph{predicates}), each with its
associated arity.  We write $r/n$ to denote that the predicate $r$
has arity $n$. By $\arity{\R}$ we refer to the maximum arity over
all predicates of $\R$. A \emph{position} $r[i]$ (in $\R$) is
identified by a predicate $r \in \R$ and its $i$-th argument (or
attribute). A \emph{term} $t$ is a constant, null, or variable. An
\emph{atomic formula} (or simply \emph{atom}) has the form $r(t_{1},
\ldots, t_{n})$, where $r/n$ is a relation, and $t_{1}, \ldots,
t_{n}$ are terms. For an atom $\atom{a}$, we denote as
$\adom{\atom{a}}$ and $\var{\atom{a}}$ the set of its terms and the
set of its variables, respectively. These notations naturally extend
to sets of atoms. Conjunctions of atoms are often identified with
the sets of their atoms. An \emph{instance} $I$ for a schema $\R$ is
a (possibly infinite) set of atoms of the form $r(\tuple{t})$, where
$r/n \in \R$ and $\tuple{t} \in (\dom \cup \freshdom)^{n}$. A
\emph{database} $D$ is a finite instance such that $\adom{D} \subset
\dom$.

\paragraph{\small{\textsf{Substitutions.}}}
A \emph{substitution} from a set of symbols $S$ to a set of symbols
$S'$ is a function \mbox{$h : S \rightarrow S'$} defined as follows:
$\emptyset$ is a substitution (empty substitution), and if $h$ is a
substitution, then $h \cup \{t \rightarrow t'\}$ is a substitution,
where $t \in S$ and $t' \in S'$; if $t \rightarrow t'\, \in\, h$,
then we write $h(t) = t'$. An assertion of the form $t \rightarrow
t'$ is called \emph{mapping}. The \emph{restriction} of $h$ to $T
\subseteq S$, denoted $h|_T$, is the substitution $h' = \{t \ra
h(t)~|~t \in T\}$.
A \emph{homomorphism} from a set of atoms $A$ to a set of atoms $A'$
is a substitution $h: \dom \cup \freshdom \cup \variables
\rightarrow \dom \cup \freshdom \cup \variables$ such that: if $t
\in \dom$, then $h(t) = t$, and if $r(t_{1}, \ldots, t_{n}) \in A$,
then $h(r(t_{1}, \ldots, t_{n})) = r(h(t_{1}), \ldots, h(t_{n})) \in
A'$.
A set of atoms $A = \{\atom{a}_1,\ldots,\atom{a}_n\}$, where $n
\geqslant 2$, \emph{unifies} if there exists a substitution
$\gamma$, called \emph{unifier} for $A$, such that
$\gamma(\atom{a}_1) = \ldots = \gamma(\atom{a}_n)$. A \emph{most
general unifier (MGU)} for $A$ is a unifier for $A$, denoted as
$\gamma_{A}$, such that for each other unifier $\gamma$ for $A$,
there exists a substitution $\gamma'$ such that $\gamma = \gamma'
\circ \gamma_{A}$. Notice that if a set of atoms unify, then there
exists a MGU. Furthermore, the MGU for a set of atoms is unique
(modulo variable renaming).

\paragraph{\small{\textsf{Datalog.}}}
A \emph{Datalog rule} $\rho$ is an expression of the form
$\atom{a}_0 \la \atom{a}_1, \ldots, \atom{a}_n$, for $n \geqslant
0$, where $\atom{a}_i$ is an atom containing constants of $\dom$ and
variables of $\variables$, and every variable occurring in
$\atom{a}_0$ must appear in at least one of the atoms $\atom{a}_1,
\ldots, \atom{a}_n$; the latter is known as the safety condition.
The atom $\atom{a}_0$ is called the \emph{head} of $\rho$, denoted
as $\head{\rho}$, while the set of atoms $\{\atom{a}_1, \ldots,
\atom{a}_n\}$ is called the \emph{body} of $\rho$, denoted as
$\body{\rho}$.
A \emph{Datalog program} $\Pi$ over a schema $\R$ is a set of
Datalog rules such that, for each $\rho \in \Pi$, the predicate of
$\head{\rho}$ does not occur in $\R$. The program $\Pi$ is
\emph{non-recursive} if there is some ordering
$\rho_1,\ldots,\rho_n$ of the rules of $\Pi$ so that the predicate
in the head of $\rho_i$ does not occur in the body of a rule
$\rho_j$, for each $j \leqslant i$.
The \emph{extensional database (EDB)} predicates are those that do
not occur in the head of any rule of $\Pi$; all the other predicates
are called \emph{intensional database (IDB)} predicates.
A \emph{model} of $\Pi$ is an instance $I$ for $\R$ such that, for
every Datalog rule of the form $\atom{a}_0 \la \atom{a}_1, \ldots,
\atom{a}_n$ appearing in $\Pi$, $I$ satisfies the first-order
formula $\forall \insX (\atom{a}_1 \wedge \ldots \wedge \atom{a}_n
\ra \atom{a}_0)$, where $\insX$ are the variables occurring in
$\rho$. In other words, whenever there exists a homomorphism $h$
such that $h(\{\atom{a}_1,\ldots,\atom{a}_n\}) \subseteq I$,
$h(\atom{a}_0) \in I$.
The semantics of $\Pi$ w.r.t.~a database $D$ for $\R$, denoted as
$\Pi(D)$, is the minimum model of $\Pi$ containing $D$ (which is
unique and always exists).
%
%

\paragraph{\small{\textsf{Queries.}}}
An $n$-ary \emph{Datalog query} $Q$ over a schema $\R$ is a pair
$\tup{\Pi,p}$, where $\Pi$ is a Datalog program over $\R$, and $p$
is an $n$-ary (output) predicate which occurs in the head of at
least one rule of $\Pi$.
$Q$ is a \emph{non-recursive Datalog query} if $\Pi$ is
non-recursive. $Q$ is a \emph{union of conjunctive queries (UCQs)}
if $\Pi$ is non-recursive, $p$ is the only IDB predicate in $\Pi$,
and for each rule $\rho \in \Pi$, $p$ does not occur in
$\body{\rho}$. Finally, $Q$ is a \emph{conjunctive query (CQ)} if it
is a union of CQs, and $\Pi$ contains exactly one rule.
The \emph{answer} to an $n$-ary Datalog query $Q = \tup{\Pi,p}$ over
a database $D$ is the set $\{\tuple{t} \in \dom^n~|~p(\tuple{t}) \in
\Pi(D)\}$, denoted $Q(D)$.
Since the output predicate of a (U)CQ is clear from the syntax of
the query, in the rest of the paper, for brevity, a CQ is seen as a
Datalog rule, while a UCQ is seen as a Datalog program (instead of a
pair consisting of a program and a predicate).
The variables occurring in the head of a CQ are its
\emph{distinguished variables}.
The answer to a CQ $q$\footnote{Henceforth, for clarity, we usually
use lower case letters for CQs and upper case letters for UCQs.}
over a (possibly infinite) instance $I$ can be equivalently defined
as the set of all tuples of constants $\tuple{t}$ for which there
exists a homomorphism $h$ such that $h(\body{q}) \subseteq I$ and
$h(\insX) = \tuple{t}$, where $\insX$ are the distinguished
variables of $q$.
The answer to a UCQ $Q$ over $I$ can be equivalently defined as the
set of tuples $\{\tuple{t}~|~\textrm{there~exists~} q \in Q
\textrm{~such~that~} \tuple{t} \in q(I)\}$.

\paragraph{\small{\textsf{Tuple-Generating Dependencies.}}}
A \emph{tuple-generating dependency (TGD)} $\sigma$ over  a schema
$\R$ is a first-order formula $\forall \insX \forall
\insY\,\varphi(\insX,\insY)\, \rightarrow\, \exists
\insZ\,\psi(\insX,\insZ)$, where $\insX \cup \insY \cup \insZ
\subset \variables$, and $\varphi,\psi$ are conjunctions of atoms
over $\R$ (possibly with constants). Formula $\varphi$ is the
$\emph{body}$ of $\sigma$, denoted $\body{\sigma}$, while $\psi$ is
the \emph{head} of $\sigma$, denoted $\head{\sigma}$. Henceforth,
for brevity, we will omit the universal quantifiers in front of
TGDs.
Such $\sigma$ is satisfied by an instance $I$ for $\R$, written $I
\models \sigma$, if the following holds: whenever there exists a
homomorphism $h$ such that $h(\varphi(\insX,\insY)) \subseteq I$,
then there exists a homomorphism $h' \supseteq h|_{\insX}$, called
\emph{extension} of $h|_{\insX}$, such that $h'(\psi(\insX,\insZ))
\subseteq I$.
An instance $I$ satisfies a set $\dep$ of TGDs, denoted $I \models
\dep$, if $I \models \sigma$ for each $\sigma \in \dep$.
A set $\dep$ of TGDs is in \emph{normal form} if each of its TGDs
has a single head-atom which contains only one occurrence of an
existentially quantified variable.
As shown, e.g., in~\cite{CaGP12}, every set $\dep$ of TGDs over a
schema $\R$ can be transformed in logarithmic space into a set
$\norm{\dep}$ over a schema $\R_{\norm{\dep}}$ in normal form of
size at most quadratic in $|\dep|$, such that $\dep$ and
$\norm{\dep}$ are equivalent w.r.t.~query answering --- for more
details see Section~\ref{appsec:technical-definitions}.

\paragraph{\small{\textsf{Conjunctive Query Answering under TGDs.}}}
Given a database $D$ for a schema $\R$, and a set $\dep$ of TGDs
over $\R$, the answers we consider are those that are true in
\emph{all} models of $D$ w.r.t.~$\dep$. Formally, the \emph{models}
of $D$ w.r.t.~$\dep$, denoted as $\mods{D}{\dep}$, is the set of all
instances $I$ such that $I \supseteq D$ and $I \models \dep$.
The \emph{answer} to an $n$-ary CQ $q$ w.r.t.~$D$ and $\dep$,
denoted as $\ans{q}{D}{\dep}$, is the set of $n$-tuples
$\{\tuple{t}~|~\tuple{t} \in q(I), \textrm{~for~each~} I \in
\mods{D}{\dep}\}$; the answer to an $n$-ary UCQ is defined
analogously.
Notice that the associated decision problem, which asks whether a
tuple of constants belongs to the answer of a CQ w.r.t.~a database
and a set of TGDs, is undecidable under arbitrary
TGDs~\cite{BeVa81}; in fact, it remains undecidable even when the
schema and the set of TGDs are fixed~\cite{CaGK08}, or even when the
set of TGDs is a singleton~\cite{BLMS11}.
Concrete classes of TGDs which are of special interest for the
current work, and also guarantee the decidability of query
answering, are presented in
Section~\ref{sec:concrete-classes-of-tgds}.

\paragraph{\small{\textsf{The TGD Chase Procedure.}}}
The \emph{chase procedure} (or simply \emph{chase}) is a fundamental
algorithmic tool introduced for checking implication of
dependencies~\cite{MaMS79}, and later for checking query
containment~\cite{JoKl84}. Informally, the chase is a process of
repairing a database w.r.t.~a set of dependencies so that the
resulted instance satisfies the dependencies. By abuse of
terminology, we shall use the term ``chase'' interchangeably for
both the procedure and its result. The chase works on an instance
through the so-called \emph{TGD chase rule}:
\begin{description}\itemsep-\parsep
\item[TGD chase rule] Consider an instance $I$ for a schema $\R$, and a TGD $\sigma :
\varphi(\insX,\insY) \rightarrow \exists \insZ\,\psi(\insX,\insZ)$
over $\R$. We say that $\sigma$ is \emph{applicable} to $I$ if there
exists a homomorphism $h$ such that $h(\varphi(\insX,\insY))
\subseteq I$.
The result of \emph{applying} $\sigma$ to $I$ with $h$ is $I' = I
\cup h'(\psi(\insX,\insZ))$, and we write $I \tup{\sigma,h} I'$,
where $h'$ is an extension of $h|_{\insX}$ such that $h'(Z)$ is a
``fresh'' labeled null of $\freshdom$ not occurring in $I$, and
following lexicographically all those in $I$, for each $Z \in
\insZ$. In fact, $I \tup{\sigma,h} I'$ defines a single TGD chase
step.
\end{description}

Let us now give the formal definition of the \emph{chase} of a
database w.r.t.~a set of TGDs. A \emph{chase sequence} of a database
$D$ w.r.t.~a set $\dep$ of TGDs is a sequence of chase steps $I_i
\tup{\sigma_i,h_i} I_{i+1}$, where $i \geqslant 0$, $I_0 = D$ and
$\sigma_i \in \dep$. The chase of $D$ w.r.t.~$\dep$, denoted
$\chase{D}{\dep}$, is defined as follows:
\begin{itemize}\itemsep-\parsep
\item[--] A \emph{finite chase} of $D$ w.r.t.~$\dep$ is a finite chase
sequence $I_i \tup{\sigma_i,h_i} I_{i+1}$, where $0 \leqslant i <
m$, and there is no $\sigma \in \dep$ which is applicable to $I_m$;
let $\chase{D}{\dep} = I_m$.
\item[--] An infinite chase sequence $I_i \tup{\sigma_i,h_i} I_{i+1}$, where
$i \geqslant 0$, is \emph{fair} if whenever a TGD $\sigma :
\varphi(\insX,\insY) \rightarrow \exists \insZ\,\psi(\insX,\insZ)$
is applicable to $I_i$ with homomorphism $h$, then there exists an
extension $h'$ of $h|_{\insX}$ and $k > i$ such that
$h'(\head{\sigma}) \subseteq I_k$.
An \emph{infinite chase} of $D$ w.r.t.~$\dep$ is a fair infinite
chase sequence $I_i \tup{\sigma_i,h_i} I_{i+1}$, where $i \geqslant
0$; let $\chase{D}{\dep} = \bigcup_{i=0}^{\infty} I_i$.
\end{itemize}
Let $\apchase{k}{D}{\dep}$ be the instance constructed after $k
\geqslant 0$ applications of the TGD chase step. An example of the
chase procedure can be found in
Section~\ref{appsec:technical-definitions}.
It is well-known that the chase of $D$ w.r.t.~$\dep$ is a
\emph{universal model} of $D$ w.r.t.~$\dep$, i.e., for each $I \in
\mods{D}{\dep}$, there exists a homomorphism $h_I$ such that
$h_I(\chase{D}{\dep}) \subseteq I$~\cite{FKMP05,DeNR08}.
Using this universality property, it can be shown that the chase is
a formal algorithmic tool for query answering under TGDs. More
precisely, the answer to a CQ $q$ w.r.t.~a database $D$ and a set of
TGDs $\dep$ coincides with the answer to $q$ over the chase of $D$
w.r.t.~$\dep$, i.e., $\ans{q}{D}{\dep} = q(\chase{D}{\dep})$.

The TGD chase rule given above is known as \emph{oblivious} since it
``forgets'' to check whether the TGD under consideration is already
satisfied, i.e., it adds atoms to the given instance even if it is
not necessary. The version of the TGD chase rule which applies
stricter criteria to the applicability of TGDs, with the aim of
adding atoms to the given instance only if it is necessary, is
called \emph{restricted}.
The universality property was originally shown for the restricted
version of the chase~\cite{FKMP05,DeNR08}, which is considered as
the standard one. However, as explicitly stated in~\cite{CaGK13},
the universality property holds also for the oblivious chase; this
was established by showing the existence of a homomorphism from the
oblivious to the restricted chase. Thus, for our purposes, we can
safely consider the oblivious chase. This is done for technical
clarity and simplicity. As discussed in~\cite{JoKl84}, even in the
simple case of inclusion dependencies, things become technically
more complicated if the restricted chase is employed, since the
applicability of a TGD depends on the presence of other atoms
previously constructed by the chase.

\subsection{Query Answering via Rewriting}\label{sec:query-rewriting}

A fundamental property that a class of TGDs should enjoy is to
guarantee the decidability of (the decision version) of conjunctive
query answering; recall that in general this problem is undecidable.
However, as already discussed in Section~\ref{sec:introduction}, to
be able to work with very large data sets, decidability of query
answering is not enough. We need also high tractability in data
complexity, i.e., when both the query and the set of TGDs are fixed,
and possibly feasible by the use of relational query processors.
First-order rewritability, introduced in the context of description
logics~\cite{CDLL*07}, guarantees the above desirable properties.
Roughly speaking, given a CQ and a set of TGDs, a (finite)
first-order query can be constructed, called \emph{perfect
rewriting}, that takes into account the semantic consequences of the
TGDs. Then, the answer to the input query w.r.t.~a database $D$ and
the set of TGDs is obtained by evaluating the perfect rewriting
directly over $D$.
Formally, the problem of conjunctive query answering under a set of
TGDs $\dep$ is \emph{first-order rewritable} if, for every CQ $q$, a
(finite) positive first-order query $q_\dep$ can be constructed such
that, for every database $D$, $\ans{q}{D}{\dep} = q_\dep(D)$.
Unfortunately, the problem of deciding whether a set of TGDs
guarantees the first-order rewritability of CQ answering is
undecidable; for more details see
Section~\ref{appsec:query-rewriting}.

It is well-known that the evaluation of first-order queries is in
the highly tractable class $\textsc{ac}_0$ in data
complexity~\cite{Vard95}. Recall that this is the complexity class
of recognizing words in languages defined  by constant-depth Boolean
circuits with (unlimited fan-in) AND and OR gates (see,
e.g.,~\cite{Papa94}). Consequently, CQ answering under sets of TGDs
which guarantee the first-order rewritability of the problem is in
$\textsc{ac}_0$ in data complexity.
Given that every first-order query can be equivalently written in
(non-recursive) SQL, in practical terms this means that CQ answering
can be deferred to a standard query language such as SQL. This
allows us to exploit all the optimization capabilities of the
underlying RDBMS.

\subsection{Concrete Classes of
TGDs}\label{sec:concrete-classes-of-tgds}

Since the problem of identifying first-order rewritability is
undecidable, it is not possible to syntactically characterize the
fragment of TGDs which guarantees the first-order rewritability of
CQ answering. However, several sufficient syntactic conditions have
been proposed
--- the two main conditions are linearity and stickiness.

\paragraph{\small{\textsf{Linearity.}}}
Linear TGDs have been proposed in~\cite{CaGL12}. A TGD $\sigma$ is
called \emph{linear} if $\sigma$ has only one body-atom. The class
of linear TGDs, i.e., the set of all possible sets of linear TGDs,
is denoted $\mathsf{LINEAR}$.
Despite its simplicity, as already discussed in
Section~\ref{sec:aims-objectives}, $\mathsf{LINEAR}$ is quite
natural with several applications.
%
%
Linear TGDs guarantee the first-order rewritability of CQ
answering~\cite{CaGL12}; this is also implicit in~\cite{BLMS11},
where atomic-hypothesis rules, which coincide with linear TGDs, are
investigated. This result was established by showing that
$\mathsf{LINEAR}$ enjoys the BDDP. However, as already remarked in
Section~\ref{sec:introduction}, the techniques based on the BDDP do
not lead to practical query rewriting algorithms.

\paragraph{\small{\textsf{Stickiness.}}}
The class of sticky sets of TGDs, denoted $\mathsf{STICKY}$, has
been proposed in~\cite{CaGP12} with the aim of identifying an
expressive class that allows for meaningful joins in rule-bodies.
\begin{figure}[t]
  \epsfclipon \centerline {\hbox{
      \leavevmode \epsffile{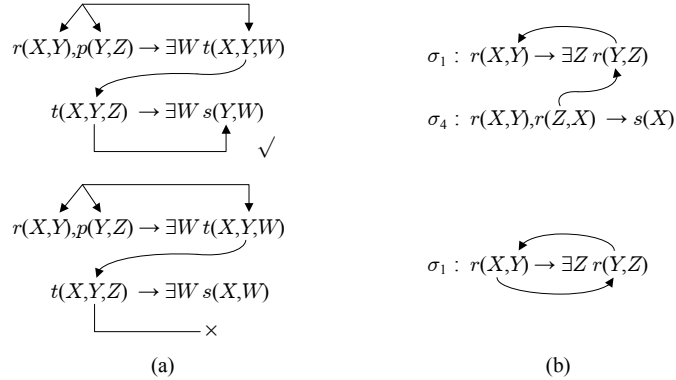} }}
  \epsfclipoff \caption{Sticky property and propagation step.}
  \label{fig:sticky-property-propagation-step}
  \vspace{-1.5mm}
\end{figure}
The key idea underlying stickiness is to ensure that, during the
chase, terms which are associated with body-variables that appear
more than once (i.e., join variables) always are propagated (or
``stick'') to the inferred atoms; this is illustrated in
Figure~\ref{fig:sticky-property-propagation-step}(a).

The formal definition of sticky sets of TGDs hinges on a
variable-marking procedure called $\mathsf{SMarking}$. This
procedure accepts as input a set $\dep$ of TGDs, and returns the
same set after marking some of its body-variables.
For notational convenience, given a TGD $\sigma$, an atom $\atom{a}
\in \head{\sigma}$, and a universally quantified variable $V$ of
$\sigma$, $\mathit{pos}(\sigma,\atom{a},V)$ is the set of positions
in $\atom{a}$ at which $V$ occurs.
$\mathsf{SMarking}(\dep)$ is constructed as follows. First, we apply
on $\dep$ the \emph{initial marking} step: for each $\sigma \in
\dep$, and for each variable $V \in \var{\body{\sigma}}$, if there
exists an atom $\atom{a} \in \head{\sigma}$ such that $V \not\in
\var{\atom{a}}$, then each occurrence of $V$ in $\body{\sigma}$ is
marked.
$\mathsf{SMarking}(\dep)$ is obtained by applying exhaustively
(i.e., until a fixpoint is reached) on $\dep$ the \emph{propagation}
step: for each pair $\tup{\sigma,\sigma'} \in \dep \times \dep$, for
each atom $\atom{a} \in \head{\sigma}$, and for each universally
quantified variable $V \in \var{\atom{a}}$, if there exists an atom
$\atom{b} \in \body{\sigma'}$ in which a marked variable occurs at
each position of $\mathit{pos}(\sigma,\atom{a},V)$, then each
occurrence of $V$ in $\body{\sigma}$ is marked.

\begin{example}
Consider the set $\dep$ consisting of
\[
\begin{array}{lcl}
\sigma_1\ :\ r(X,Y) \ra \exists Z \, r(Y,Z) &\qquad& \sigma_3\ :\ s(X),s(Y) \ra p(X,Y)\\
\sigma_2\ :\ r(X,Y) \ra s(X) &\qquad& \sigma_4\ :\ r(X,Y),r(Z,X) \ra s(X).\\
\end{array}
\]
By applying the initial marking step the body-variables of $\dep$
are marked with a cap (i.e., $\hat{V}$), and due to the propagation
step are marked with a double-cap as follows:
\[
\begin{array}{lcl}
\sigma_1\ :\ r(\mar{X},\mar{\mar{Y}}) \ra \exists Z \, r(Y,Z) &\qquad& \sigma_3\ :\ s(X),s(Y) \ra p(X,Y)\\
\sigma_2\ :\ r(X,\mar{Y}) \ra s(X) &\qquad& \sigma_4\ :\
r(X,\mar{Y}),r(\mar{Z},X) \ra s(X).
\end{array}
\]
Figure~\ref{fig:sticky-property-propagation-step}(b) depicts the two
ways of propagating the marking to the variable $Y$ of $\sigma_1$.
\hfill\markfull
\end{example}

A set $\dep$ of TGDs is called \emph{sticky} if, for every $\sigma
\in \mathsf{SMarking}(\dep)$, each marked variable appears only
once.
Stickiness guarantees the first-order rewritability of CQ
answering~\cite{CaGL12}. As for linear TGDs, this was established by
showing that the BDDP holds, and hence all the drawbacks of this
approach are inherited.\\

\paragraph{\small{\textsf{Normal Form.}}}
Notice that the normalization procedure for TGDs, presented in
Section~\ref{appsec:technical-definitions}, preserves linearity and
stickiness. In other words, given a linear (resp., sticky) set
$\dep$ of TGDs, the set $\norm{\dep}$ is linear (resp., sticky).
Thus, in the rest of the paper we assume, without loss of
generality, that TGDs have only one head-atom with at most one
existentially quantified variable which occurs once. This assumption
will allow us to simplify our later technical definitions and
proofs.
Given a TGD $\sigma$, we refer to the position of the (single)
existentially quantified variable by $\pi_{\exists}(\sigma)$; if
there is no existentially quantified variable, then
$\pi_{\exists}(\sigma) = \varepsilon$.

\section{UCQ Rewriting}\label{sec:ucq-rewriting}

In this section, we tackle the problem of CQ answering under linear
and sticky sets of TGDs. Our goal is to design a rewriting algorithm
which is well-suited for practical applications. In particular, we
present a backward-chaining rewriting algorithm which constructs a
union of conjunctive queries.
Let us say that our techniques apply immediately even if we
additionally consider a limited form of functional dependencies, and
negative constraints of the form $\forall \insX \, \varphi(\insX)
\ra \bot$, where $\varphi$ is a conjunction of atoms. Notice that
these modeling features are vital for ontological reasoning
purposes. Due to space reasons, we omit the details and we refer the
reader to Section~\ref{appsec:additional-modeling-features}.

\subsection{An Informal Description}\label{sec:xrewrite-informal-description}

Given a CQ $q$ and a set $\dep$ of TGDs, the actual computation of
the rewriting is done by exhaustively applying a backward
resolution-based step, called \emph{rewriting step}, which uses the
rules of $\dep$ as rewriting rules whose direction is right-to-left.
More precisely, a rewriting step is applied on a CQ, starting from
the given query $q$, and gives rise to a new CQ which will be part
of the final rewriting. Intuitively, a rewriting step simulates, in
the reverse direction (hence the term ``backward''), an application
of a TGD during the construction of the chase. In other words, by
applying the rewriting step we bypass an application of a TGD during
the chase, and the obtained query is one level closer to the
database-level. This is done until there are no other TGD chase
steps to bypass, which means that we reached the database-level, as
required.

\begin{example}[Rewriting Step]\label{exa:rewriting-step}
Consider the TGD and CQ given in Example~\ref{exa:query-rewriting}
(which are also given here):
\[
\begin{array}{rcl}
\sigma &:& \mathit{project}(X),\mathit{inArea}(X,Y)\ \ra\ \exists Z
\, \mathit{hasCollaborator}(Z,Y,X),\\
q &:& p(B)\ \la \mathit{hasCollaborator}(A,\mathit{db},B).
\end{array}
\]
Observe that $\head{\sigma}$ and $\body{q}$ unify, and $\gamma = \{X
\ra B, Y \ra \mathit{db}, Z \ra A\}$ is their MGU. This intuitively
means that an atom of the form
$\mathit{hasCollaborator}(t_1,\mathit{db},t_2)$, where $t_1$ and
$t_2$ are terms, to which $\body{q}$ can be homomorphically mapped,
may be obtained during the construction of the chase by applying
$\sigma$. Such a TGD chase step can be simulated (or bypassed) by
applying the rewriting step on $q$ using $\sigma$. This consists of
replacing $\body{q}$ with $\body{\sigma}$, and then applying
$\gamma$ on the obtained query. The result of such a rewriting step
is the CQ:
\[
q'\ :\ p(B)\ \la \mathit{project}(B),\mathit{inArea}(B,\mathit{db}),
\]
and the final rewriting of $q$ w.r.t.~$\{\sigma\}$ is the UCQ
$\{q,q'\}$. \hfill\markfull
\end{example}

The fact that a set $S \subseteq \body{q}$ unifies with
$\head{\sigma}$ indicates that an atom $\atom{a}$, to which $S$ can
be homomorphically mapped, may be obtained during the chase by
applying $\sigma$. However, this is not always true and may lead to
erroneous rewriting steps, which in turn will generate unsound
rewritings. Let us illustrate the two cases, via a simple example,
where the blind application of the rewriting step, without checking
whether further conditions are satisfied, leads to unsound
rewritings.

\begin{example}[Unsound Rewritings]\label{exa:unsound-rewritings}
Consider the same TGD $\sigma$ as in
Example~\ref{exa:rewriting-step}, and the CQ
\[
q_1\ :\ p(B)\ \la \mathit{hasCollaborator}(c,\mathit{db},B),
\]
where $c \in \dom$. Since $\head{\sigma}$ and $\body{q_1}$ unify,
with $\gamma = \{X \ra B,Y \ra \mathit{db},Z \ra c\}$ be their MGU,
we proceed with the rewriting step. This will result to the CQ:
\[
q'\ :\ p(B)\ \la \mathit{project}(B),\mathit{inArea}(B,\mathit{db}).
\]
Consider now the database $D =
\{\mathit{project}(a),\mathit{inArea}(a,b)\}$. The CQ $q'$ maps to
$D$ and we conclude that $\tup{a} \in q'(D)$. However, the original
query $q_1$ does not map to $\chase{D}{\{\sigma\}}$, since there is
no atom of the form $\mathit{hasCollaborator}(c,\mathit{db},t)$ in
$\chase{D}{\{\sigma\}}$, and thus $\ans{q_1}{D}{\{\sigma\}} =
\emptyset$. Therefore, any rewriting containing $q'$ is not a sound
rewriting of $q_1$ w.r.t.~$\{\sigma\}$. This is because the constant
$c$ is associated with the existentially quantified variable $Z$ and
thus, after applying the rewriting step, the information about the
constant $c$ occurring in the original query is lost.

Consider now the CQ
\[
q_2\ :\ p(B)\ \la \mathit{hasCollaborator}(B,\mathit{db},B),
\]
As above, $\head{\sigma}$ and $\body{q}$ unify, and $\gamma = \{X
\ra B,Y \ra \mathit{db},Z \ra B\}$ is their MGU. After applying the
rewriting step we get again the CQ $q'$, and $\tup{a} \in q'(D)$.
However, there is no atom of the form
$\mathit{hasCollaborator}(t,\mathit{db},t)$, i.e., an atom where the
same term occurs at the first and the last position, which means
that $\ans{q_2}{D}{\{\sigma\}} = \emptyset$. Hence, any rewriting
containing $q'$ is not a sound rewriting of $q_2$
w.r.t.~$\{\sigma\}$. The reason for this is because one occurrence
of the variable $B$ which is in a self-join, i.e., occurs more than
once in $\body{q}$, is associated with the existentially quantified
variable $Z$ and hence, after applying the rewriting step, the fact
that the variable $B$ is in a self-join is lost. \hfill\markfull
\end{example}

The blind application of the rewriting step may also cause the
generation of unsafe queries, i.e., queries where a distinguished
variable does not occur in the body. This may happen if a
distinguished variable of the query to be rewritten is associated
with an existentially quantified variable of the TGD under
consideration.
%
%
From the above informal discussion we conclude that the rewriting
step can be applied on a set $S \subseteq \body{q}$ using a TGD
$\sigma$ (or simply, $\sigma$ is applicable to $S$) if the following
hold: (1) $S$ and $\head{\sigma}$ unify; and (2) their MGU does not
associate the constants, the join variables, and the distinguished
variables of $q$ with the existentially quantified variable of
$\sigma$. This is the so-called \emph{applicability condition}, and
its formal definition will be given in the next section.
Although the applicability condition is crucial for the soundness of
the final rewriting, it may prevent the generation of queries which
are vital for the completeness of the rewriting. This is illustrated
in the following example:

\begin{example}[Incomplete Rewritings]\label{exa:incomplete-rewritings}
Consider the set $\dep$ consisting of the TGDs
\[
\begin{array}{rcl}
\sigma_1 &:& \mathit{project}(X),\mathit{inArea}(X,Y)\ \ra\ \exists Z \, \mathit{hasCollaborator}(Z,Y,X),\\
\sigma_2 &:& \mathit{hasCollaborator}(X,Y,Z)\ \ra\
\mathit{collaborator}(X),
\end{array}
\]
and the CQ
\[
q\ :\ p(B,C)\ \la\
\underbrace{\mathit{hasCollaborator}(A,B,C)}_{\atom{a}},\underbrace{\mathit{collaborator}(A)}_{\atom{b}}.
\]
The only viable strategy in this case is to apply $\sigma_2$ to
$\{\atom{b}\}$, since $\sigma_1$ is not applicable to $\{\atom{a}\}$
due to the join variable $A$. The obtained query is
\[
q'\ :\ p(B,C)\ \la\
\mathit{hasCollaborator}(A,B,C),\mathit{hasCollaborator}(A,E,F),
\]
where $E$ and $F$ are fresh variables. Notice that the variable $A$
remains a join variable, and thus $\sigma_1$ is not applicable since
the applicability condition is violated. However, $q'$ has the same
semantic meaning as
\[
q''\ :\ p(B,C)\ \la\ \mathit{hasCollaborator}(A,B,C),
\]
in which $A$ occurs only once. Since $\sigma_1$ is applicable to
$\body{q''}$ we get the query
\[
q'''\ :\ p(B,C)\ \la\ \mathit{project}(C),\mathit{inArea}(C,B).
\]
The query $q''$ is the result of unifying the body-atoms of $q'$,
and thus this unification step is critical for generating $q'''$.
Let us now show that indeed $q'''$ is crucial for the completeness
of the final rewriting. Consider the database $D =
\{\mathit{project}(a),\mathit{inArea}(a,b)\}$. Clearly,
$\chase{D}{\dep} = D \cup
\{\mathit{hasCollaborator}(z,b,a),\mathit{collaborator}(z)\}$, where
$z \in \freshdom$, and hence $\tup{b,a} \in \ans{q}{D}{\dep}$.
Observe that without the query $q'''$, there is no way to have the
tuple $\tup{b,a}$ in the answer to the final rewriting over $D$,
which implies that $q'''$ is needed for the completeness of the
rewriting. \hfill\markfull
\end{example}

From the above discussion we conclude that, apart from the rewriting
step, an additional unification step is needed to convert some join
variables into non-join ones. The purpose of this step, which we
call \emph{factorization step}, is to satisfy the applicability
condition, and thus guarantee the completeness of the final
rewriting.
To sum up, the prefect rewriting of a CQ $q$ w.r.t.~a set $\dep$ of
TGDs is computed by exhaustively applying the two steps discussed
above, namely rewriting and factorization.

\subsection{The Algorithm XRewrite}\label{sec:algorithm-tgd-rewrite}

We proceed with the formal definition of our rewriting algorithm,
called $\mathsf{XRewrite}$. Before going into the details of the
algorithm, we first need to formalize the applicability condition
and the notion of factorizability.
We assume, without loss of generality, that the variables occurring
in queries and those appearing in TGDs constitute two disjoint sets.
Given a CQ $q$, a variable is called \emph{shared} in $q$ if it
occurs more than once in $q$. Notice that the distinguished
variables of $q$ are trivially shared since, by definition, they
occur both in $\body{q}$ and $\head{q}$.

\begin{definition}[Applicability]\label{def:applicability}
Consider a CQ $q$ and a TGD $\sigma$. Given a set of atoms $S
\subseteq \body{q}$, we say that $\sigma$ is \emph{applicable} to
$S$ if the following conditions are satisfied:
\begin{enumerate}
\item the set $S \cup \{\head{\sigma}\}$ unifies, and
\item for each $\atom{a} \in S$, if the term at position $\pi$ in
$\atom{a}$ is either a constant or a shared variable in $q$, then
$\pi \neq \pi_{\exists}(\sigma)$. \hfill\markfull
\end{enumerate}
\end{definition}

Let us now focus on factorizability which will be at the basis of
the factorization step. Recall that the factorization step is
necessary in order to convert some shared variables into non-shared
ones, with the aim of satisfying the applicability condition. In
general, this can be achieved by exhaustively unifying all the atoms
that unify in the body of a query. However, some of these
unifications do not contribute in any way in satisfying the
applicability condition, and as a result many superfluous queries
are generated. We illustrate this situation by means of an example.

\begin{example}\label{exa:superfluous-queries}
Consider the following TGD and query:
\[
\sigma\ :\ s(X)\ \ra\ \exists Y \, r(X,Y) \qquad\qquad
q\ :\  p(A)\ \la\ r(A,B),r(C,B),r(B,E).
\]
Since $\sigma$ is applicable to $\{r(B,E)\}$ we obtain the query
\[
q'\ :\ p(A)\ \la\ \underbrace{r(A,B),r(C,B)}_{S},s(B).
\]
Due to the shared variable $B$, $\sigma$ is not applicable to $S$.
One can proceed with the unification of $r(A,B)$ and $r(C,B)$ in
order to make $B$ non-shared and satisfy the applicability
condition; clearly, the query
\[
q''\ :\ p(A)\ \la\ r(A,B),s(B)
\]
is obtained. However, the variable $B$ is still shared and there is
no way to make it non-shared. Thus, the unification of $r(A,B)$ and
$r(C,B)$ does not contribute in satisfying the applicability
condition, and the query $q''$ is not needed. \hfill\markfull
\end{example}

Clearly, the exhaustive unification produces a non-negligible number
of redundant queries. It is thus necessary to apply a restricted
form of factorization that generates a possibly small number of CQs
which are vital for the completeness of the rewriting algorithm.
This corresponds to the identification of all the atoms in the query
whose shared existential variables come from the same atom in the
chase, and they can be unified with no loss of information. Summing
up, the key idea underlying our notion of factorizability is as
follows: in order to apply the factorization step, there must exist
a TGD that can be applied to its output.

\begin{definition}[Factorizability]\label{def:factorizability}
Consider a CQ $q$ and a TGD $\sigma$. Given a set of atoms $S
\subseteq \body{q}$, where $|S| \geqslant 2$, we say that $S$ is
\emph{factorizable} w.r.t.~$\sigma$ if the following conditions are
satisfied:
\begin{enumerate}
\item $S$ unifies,
\item $\pi_{\exists}(\sigma) \neq \varepsilon$, and
\item there exists a variable $V \not\in \var{\body{q} \setminus S}$
which occurs in every atom of $S$ only at position
$\pi_{\exists}(\sigma)$. \hfill\markfull
\end{enumerate}
\end{definition}

\begin{example}\label{exa:factorization}
Consider the TGD $\sigma\ :\ s(X),r(X,Y)\, \rightarrow\, \exists Z
\, t(X,Y,Z)$ and the CQs
\[
\begin{array}{rcl}
q_1 &:& p(A)\, \leftarrow\, \underbrace{t(a,A,C),t(B,a,C)}_{S_1},\\
q_2 &:& p(A)\, \leftarrow\, s(C),\underbrace{t(A,B,C),t(A,E,C)}_{S_2},\\
q_3 &:& p(A)\, \leftarrow\, \underbrace{t(A,B,C),t(A,C,C)}_{S_3},
\end{array}
\]
where $a \in \dom$. The set $S_1$ is factorizable w.r.t.~$\sigma$
since the substitution $\{A \ra a, B \ra a\}$ is a unifier for
$S_1$, and also $C$ appears in both atoms of $S_1$ only at position
$\pi_{\exists}(\sigma) = t[3]$.
On the other hand, $S_2$ and $S_3$, although they unify, are not
factorizable w.r.t.~$\sigma$ since in $q_2$ the variable $C$ occurs
also outside $S_2$, while in $q_3$ the variable $C$ appears not only
at position $\pi_{\exists}(\sigma)$ but also at position $t[2]$.
\hfill\markfull
\end{example}

Let us clarify that the notion of factorizability is incomparable to
the notion of query minimization~\cite{ChMe77}. Recall that the goal
of query minimization is to construct a query which is equivalent to
the original one, and at the same time is minimal. Observe that
$q_1$, given in Example~\ref{exa:factorization}, is already minimal
since there is no endomorphism that can be applied on $q_1$ and make
it smaller, but $S_1 \subseteq \body{q_1}$ is factorizable
w.r.t.~$\sigma$ and the obtained query is $p(A) \la t(a,a,C)$ which
is not equivalent to $q_1$.
On the other hand, $q_2$ is not minimal since by applying the
endomorphism $\{E \ra B\}$ we get an equivalent and smaller query,
but the factorization step is not applied.

Having the above key notions in place, we are now ready to present
the algorithm $\mathsf{XRewrite}$, which is depicted in
Algorithm~\ref{alg:tgd-rewrite}. As said above, the perfect
rewriting of a CQ $q$ w.r.t.~a set $\dep$ of TGDs is computed by
exhaustively applying (i.e., until a fixpoint is reached) the
rewriting and the factorization steps.
Notice that the CQs which are the result of the factorization step,
are nothing else than auxiliary queries which are critical for the
completeness of the final rewriting, but are not needed in the final
rewriting. Thus, during the iterative procedure, we label the
queries with $\mathsf{r}$ (resp., $\mathsf{f}$) in order to keep
track which of them are generated by the rewriting (resp.,
factorization) step. The input query, although is not a result of
the rewriting step, is labeled by $\mathsf{r}$ since it must be part
of the final rewriting.
Moreover, once we apply exhaustively on a CQ the two crucial steps,
it is not necessary to revisit it since this will lead to redundant
queries. Hence, we also label the queries with $\mathsf{e}$ (resp.,
$\mathsf{u}$) indicating that a query is already explored (resp.,
unexplored).
Let us now describe the two main steps of the algorithm. In the
sequel, fix a triple $\tup{q,x,y}$, where $\tup{x,y} \in
\{\mathsf{r},\mathsf{f}\} \times \{\mathsf{e},\mathsf{u}\}$ (this is
how we indicate that $q$ is labeled by $x$ and $y$), and a TGD
$\sigma \in \dep$. We assume that $q$ is of the form $p(\insX) \la
\varphi(\insX,\insY)$.

\begin{algorithm}[t]
\caption{The algorithm $\mathsf{XRewrite}$ \label{alg:tgd-rewrite}}
\small
    \KwIn{a CQ $q$ over a schema $\R$ and a set $\Sigma$ of TGDs over $\R$}
    \KwOut{the perfect rewriting of $q$ w.r.t.~$\Sigma$}
    \vspace{2mm}
    $i := 0$\;
    $Q_{\textsc{rew}} := \{\langle q,\mathsf{r},\mathsf{u} \rangle\}$\;
    \Repeat{$Q_{\textsc{temp}} = Q_{\textsc{rew}}$}{
        $Q_{\textsc{temp}} := Q_{\textsc{rew}}$\;
        \ForEach{$\langle q,x,\mathsf{u} \rangle \in Q_{\textsc{temp}},$ \emph{where} $x \in \{\mathsf{r},\mathsf{f}\}$}{
            \ForEach{$\sigma \in \Sigma$}{
                \tcc{\textrm{rewriting step}}
                \ForEach{$S \subseteq \body{q}$ \emph{such that} $\sigma$ \emph{is applicable to} $S$}{
                        $i := i + 1$\;
                        $q^{\prime} := \gamma_{S,\sigma^i}(q[S/\body{\sigma^i}])$\;
                        \If{\emph{there is no} $\tup{q'',\mathsf{r},\star} \in Q_{\textsc{rew}}$ \emph{such that} $q' \simeq q''$}{
                            $Q_{\textsc{rew}} := Q_{\textsc{rew}} \cup \{\langle q',\mathsf{r},\mathsf{u} \rangle\}$\;
                        }
                }
                \tcc{\textrm{factorization step}}
                \ForEach{$S \subseteq \body{q}$ \emph{which is factorizable w.r.t.} $\sigma$}{
                    $q^{\prime} := \gamma_{S}(q)$\;
                    \If{\emph{there is no} $\tup{q'',\star,\star} \in Q_{\textsc{rew}}$ \emph{such that} $q' \simeq q''$}{
                        $Q_{\textsc{rew}} := Q_{\textsc{rew}} \cup \{\langle q^{\prime},\mathsf{f},\mathsf{u} \rangle\}$\;
                    }
                }
            }
            \tcc{\textrm{query $q$ is now explored}}
            $Q_{\textsc{rew}} := (Q_{\textsc{rew}} \setminus \{\tup{q,x,\mathsf{u}}\}) \cup
            \{\tup{q,x,\mathsf{e}}\}$\;
        }
    }
    $Q_{\textsc{fin}} := \{q ~|~ \langle q,\mathsf{r},\mathsf{e} \rangle \in Q_{\textsc{rew}}\}$\;
    \Return{$Q_{\textsc{fin}}$}
\end{algorithm}

\begin{itemize}
\item[\textbf{Rewriting Step.}] For each $S \subseteq \body{q}$ such
that $\sigma$ is applicable to $S$, the $i$-th application of the
rewriting step generates the query $q' =
\gamma_{S,\sigma^i}(q[S/\body{\sigma^i}])$, where $\sigma^i$ is the
TGD obtained from $\sigma$ by replacing each variable $X$ with
$X^i$, $\gamma_{S,\sigma^i}$ is the MGU for the set $S \cup
\{\head{\sigma^i}\}$ (which is the identity on the variables that
appear in the body but not in the head of $\sigma^i$), and
$q[S/\body{\sigma^i}]$ is obtained from $q$ be replacing $S$ with
$\body{\sigma^{i}}$, i.e., is the query with $p(\insX)$ as its head
and $(\varphi(\insX,\insY) \setminus S) \cup \body{\sigma^{i}}$ as
its body.
By considering $\sigma^i$ (instead of $\sigma$) we actually rename,
using the integer $i$, the variables of $\sigma$. This renaming step
is needed in order to avoid undesirable clutters among the variables
introduced during different applications of the rewriting step.
Finally, if the there is no $\tup{q'',\mathsf{r},\star} \in
Q_{\textsc{rew}}$, i.e., an (explored or unexplored) query which is
a result of the rewriting step, such that $q'$ and $q''$ are the
same (modulo bijective variable renaming), denoted $q' \simeq q''$,
then $\tup{q',\mathsf{r},\mathsf{u}}$ is added to $Q_{\textsc{rew}}$.\\

\item[\textbf{Factorization Step.}] For each $S \subseteq \body{q}$
which is factorizable w.r.t. $\sigma$, the factorization step
generated the query $q' = \gamma_S(q)$, where $\gamma_S$ is the MGU
for $S$. Then, if there is no $\tup{q'',\star,\star} \in
Q_{\textsc{rew}}$, i.e., a query which is a result of the rewriting
or the factorization step, and is explored or unexplored, such that
$q' \simeq q''$, then $\tup{q', \mathsf{f},\mathsf{u}}$ is added to
$Q_{\textsc{rew}}$.
\end{itemize}

It is important to say that, if the input set of TGDs is sticky,
then both $\gamma_{S,\sigma^i}$ and $\gamma_S$ are defined in such a
way that, for each of their mapping $V \ra U$, $V \in \var{q}$
implies $U \in \var{q}$; there existence is guaranteed by stickiness
(see the proof of Lemma~\ref{lem:property}). The reason why we
employ these MGUs (instead of arbitrary ones) is to ensure a crucial
syntactic property of each query generated during the rewriting
process (see Lemma~\ref{lem:property}), which in turn will allow us
to establish the termination of $\mathsf{XRewrite}$ under sticky
sets of TGDs.
Before we proceed further, let us briefly discuss the relationship
of our approach, and the one employed in~\cite{KLMT12} which is
based on the so-called piece-unifier. Roughly, a piece-based
rewriting step, the building block of the algorithm
in~\cite{KLMT12}, simulates a factorization and a rewriting step of
$\mathsf{XRewrite}$. Let us illustrate this via a simple example.

\begin{example}
Consider the TGD and the CQ
\[
\sigma\ :\ r(X)\ \ra\ \exists Y \, s(X,Y) \qquad q\ :\ p\ \la\
\underbrace{s(A,B),s(C,B),s(C,D)}_{S},t(A,C).
\]
A pair $(S,\gamma)$, where $\gamma$ is an MGU for the set $S \cup
\{\head{\sigma}\}$, is called piece-unifier of $q$ with $\sigma$ if
\emph{(i)} the universally quantified variables of $\sigma$, denoted
$\mathit{var}_{\forall}(\sigma)$, are mapped by $\gamma$ to
$\mathit{var}_{\forall}(\sigma)$, and \emph{(ii)} each variable of
$\var{S} \cap \var{\body{q} \setminus S}$ is mapped by $\gamma$ to
$\mathit{var}_{\forall}(\sigma)$. Such an MGU is $\gamma = \{A \ra
X, B \ra Y, C \ra X, D \ra Y\}$.
The existence of the piece-unifier $(S,\gamma)$ implies that $S$ can
be rewritten at a single (piece-based) rewriting step using
$\sigma$, and the query $q' : p \la r(X),t(X,X)$ is obtained.

Now, observe that the set $\{s(A,B),s(C,B)\} \subseteq  \body{q}$ is
factorizable w.r.t.~$\sigma$, and after applying the factorization
step we get the query $p\ \la\ s(A,B),s(C,D),t(A,C)$. Then, $\sigma$
is applicable to $\{s(A,B),s(C,D)\}$, and after applying the
rewriting step we get the query $p\ \la\ r(A),t(A,A)$ which
coincides (modulo variable renaming) with $q'$. \hfill\markfull
\end{example}


\subsection{Termination of XRewrite}\label{sec:termination-of-tgd-rewrite}

Let us now establish the termination of $\mathsf{XRewrite}$.
We first establish a key syntactic property of the constructed
rewritten query. In the sequel, for notational convenience, given a
CQ $q$ and a set $\dep$ of TGDs, we denote by $q_{\dep}$ the
rewritten query $\mathsf{XRewrite}(q,\dep)$.

\begin{lemma}\label{lem:property}
Consider a CQ $q$ over a schema $\R$, and a set $\dep$ of TGDs over
$\R$. For each $q' \in q_{\dep}$ the following hold:
\begin{enumerate}
\item If $\dep \in \mathsf{LINEAR}$, then $|\body{q}| \geqslant
|\body{q'}|$, and

\item If $\dep \in \mathsf{STICKY}$, then every variable of $(\var{q'} \setminus \var{q})$ occurs only once in
$q'$.
\end{enumerate}
\end{lemma}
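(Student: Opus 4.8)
The plan is to establish both statements by induction on the number of rewriting and factorization steps used to derive $q'$ from $q$, carrying the asserted property as an invariant. The base case $q' = q$ is immediate: for~(1) we have $|\body{q}| = |\body{q'}|$, and for~(2) the set $\var{q'} \setminus \var{q}$ is empty.

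For part~(1) I would verify that neither step increases the number of body-atoms. The factorization step yields $\gamma_S(q')$, and applying a substitution can only identify atoms, never create new ones, so it cannot enlarge the body. For the rewriting step, the new body is $\gamma_{S,\sigma^i}((\body{q'} \setminus S) \cup \body{\sigma^i})$; linearity gives $|\body{\sigma^i}| = 1$, while the requirement that $S \cup \{\head{\sigma}\}$ unify forces $|S| \geqslant 1$ (unification is defined only for sets of at least two atoms). Thus $S$, of size at least one, is replaced by a single atom, and the subsequent MGU can only merge atoms, so the body size does not exceed that of the predecessor query. Chaining these inequalities along the derivation gives $|\body{q}| \geqslant |\body{q'}|$.

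For part~(2) the invariant states that every variable in $\var{q'} \setminus \var{q}$ --- call such a variable \emph{fresh} --- occurs exactly once in $q'$. The preliminary observation is that $\gamma_{S,\sigma^i}$ and $\gamma_S$ may be chosen so as to favour query variables as representatives, never mapping a variable of the current query outside that query's variables; it is exactly here that stickiness is needed for the choice to be well defined. With this choice, every variable occurring in $\head{\sigma^i}$ is matched against a term of $S$ and is therefore sent to a query term, so the only genuinely new variables created by a rewriting step are the variables of $\body{\sigma^i}$ absent from $\head{\sigma^i}$. These are marked by the initial marking step, whence stickiness forces each to occur only once in $\body{\sigma}$, and therefore once in the resulting query. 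The factorization step creates no new variables, and the same preference for query representatives stops a fresh variable from being duplicated when the atoms of $S$ collapse.

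The main obstacle is to show that a fresh variable $W$ inherited from $q'$ does not become shared after a rewriting step. The threat is that $W$ sits at a position matched by a universally quantified variable $X$ of $\sigma$ that occurs more than once in $\body{\sigma}$, so that identifying $X^i$ with $W$ would reproduce $W$ several times. I would exclude this through the propagation step of $\mathsf{SMarking}$: if $W$ survives the rewriting MGU as a representative, then its whole unification class is fresh, so in the body-atom carrying $W$ every position of $\mathit{pos}(\sigma, \head{\sigma}, X)$ holds a marked variable; applying propagation to the pair formed by $\sigma$ and the TGD that introduced $W$ would then mark $X$ in $\body{\sigma}$. Since a marked $X$ occurring more than once contradicts stickiness, $X$ occurs at most once in $\body{\sigma}$ and $W$ at most once in the new query. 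If instead the class of $W$ meets a variable of $\var{q}$ or a constant, the favouring MGU maps $W$ onto it and $W$ leaves the query altogether, so the invariant is preserved vacuously for $W$. The delicate bookkeeping is to track, through the induction, which query terms are fresh --- hence marked in the TGD that produced them --- since it is this correspondence between fresh query variables and marked TGD variables that drives the propagation argument.
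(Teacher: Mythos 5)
Your proposal is correct and follows essentially the same route as the paper: part (1) by the same counting argument, and part (2) by induction with MGUs chosen to favour variables of $\var{q}$, splitting into the case of body-only variables of $\sigma$ (handled by the initial marking) and the case of a head variable occurring multiply in $\body{\sigma}$ (excluded via the propagation step of $\mathsf{SMarking}$ and stickiness). The only difference is cosmetic --- you keep the fresh query variable as the unification representative where the paper maps it onto the TGD variable --- and your explicit invocation of the propagation step makes precise a point the paper merely asserts.
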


\begin{proof}
Part (1) follows immediately by definition of linear TGDs. In
particular, since each linear TGD has only one body-atom, during the
rewriting step we replace a set of atoms in the body of the CQ under
consideration with a single atom. Notice that during the
factorization step, since we unify atoms, we always decrease the
number of atoms in the body of the CQ.

Part (2) is established by induction on the number of applications
of the rewriting and factorization steps. We denote by
$q_{\dep}^{i}$ the part of $q_\dep$ obtained after $i$ applications
either of the factorization or the rewriting step. The proof is by
induction on $i \geqslant 0$.

\underline{Base step:} Clearly, $q_{\dep}^{0} = q$, and the claim
holds trivially.

\underline{Inductive step:} In case that $q_{\dep}^{i+1} =
q_{\dep}^{i}$, where $i > 0$, the claim follows immediately by
induction hypothesis. The interesting case is when $q_{\dep}^{i+1} =
q_{\dep}^{i} \cup \{p'\}$, where $p'$ was obtained from a CQ $p \in
q_{\dep}^{i}$ by applying either the rewriting or the factorization
step. Henceforth, we refer to the variables (not occurring in $q$)
introduced during the rewriting process as \emph{new variables}. We
identify the following two cases.

Case 1: First, assume that $p'$ was obtained during the $j$-th
application of the rewriting step, where $j \leqslant i+1$, because
the TGD $\sigma \in \dep$ is applicable to a set $S \subseteq
\body{p}$. Since, by induction hypothesis, each new variable in $S$
occurs only once, we can assume, without loss of generality, that,
for each mapping $V \ra U$ of $\gamma_{S,\sigma^j}$, $U$ is not a
new variable introduced during the first $j-1$ applications of the
rewriting step. Recall that, by construction, for each $V \ra U$ of
$\gamma_{S,\sigma^j}$, $V \in \var{q}$ implies $U \in \var{q}$. It
is easy to see that such a MGU always exists. In particular, if
$\gamma_{S,\sigma^j}$ does not satisfy the above condition, then we
can redefine it as $\mu \circ \gamma_{S,\sigma^j}$, where $\mu$ is
constructed as follows: for each $V \ra U$ of $\gamma_{S,\sigma^j}$,
if $V \in \var{q}$, $U \not\in \var{q}$ and there is no mapping $U
\ra V'$ in $\mu$, then we add to $\mu$ the mapping $U \ra V$.
We proceed by case analysis on the reason why a new variable may
appear in $p'$. We identify the following two cases:
\begin{enumerate}
\item A variable $V$ occurs in $\body{\sigma^j}$ but not in $\head{\sigma^j}$.
By construction, $V \ra U \in \gamma_{S,\sigma^{j}}$ implies $U =
V$. Thus, $V$ is a new variable that appears in $\body{p'}$. Since
$\dep \in \mathsf{STICKY}$, $V$ occurs in $\body{\sigma^j}$ only
once, and hence $V$ appears in $p'$ only once.

\item A new variable $V \in \var{S}$, $\gamma_{S,\sigma^{j}}(V) = U$, where $U$
occurs in the body and in the head of $\sigma^j$, and there is no
assertion $U \ra V'$ in $\gamma_{S,\sigma^{j}}$, where $V' \in
\var{q}$.
By induction hypothesis, $V$ occurs only once in $p$, and thus does
not occur in $p'$. Since $U$ does not appear in the left-hand side
of an assertion of $\gamma_{S,\sigma^{j}}$, we get that $U$ is a new
variable that appears in $\body{p'}$ due to the fact that it occurs
in $\body{\sigma^j}$ and $\head{\sigma^j}$. Notice that $U$, after
applying $\mathsf{SMarking}$, is marked; thus, $U$ occurs only once
in $\body{\sigma^j}$ since $\dep \in \mathsf{STICKY}$. This implies
that $U$ appears in $p'$ only once.
\end{enumerate}

Case 2: Now, suppose that $p'$ was obtained by applying the
factorization step. This implies that there exists a set $S
\subseteq \body{p}$, where $|S| \geqslant 2$, that unifies, and $p'
= \gamma_S(p)$. Recall that, by construction, for each mapping $V
\ra U$ of $\gamma_S$, $V \in \var{q}$ implies $U \in \var{q}$. The
existence of such a MGU is guaranteed since, by induction
hypothesis, each new variable in $S$ occurs only once; in fact,
$\gamma_S$ can be defined as the MGU for $S'$, where $S'$ is
obtained as follows: if a new variable $W$ occurs in an atom
$\atom{a} \in S$ at position $\pi$, and there exists a set
$\{\atom{b}_1,\ldots,\atom{b}_n\}$, where $n \geqslant 1$, such that
at position $\pi$ of each $\atom{b}_i$ a variable $W_i \in \var{q}$
occurs, then replace $W$ with $W_1$. It is now straightforward to
see, by definition of $\gamma_S$, that each new variable in $p'$
occurs only once.
%
\end{proof}


We now show that our rewriting algorithm terminates under linear and
sticky TGDs:

\begin{theorem}\label{the:tgdrewrite-termination}
Consider a CQ $q$ over a schema $\R$, and a set $\dep$ of TGDs over
$\R$. If $\dep \in \mathsf{LINEAR}$ or $\dep \in \mathsf{STICKY}$,
then $\mathsf{XRewrite}(q,\dep)$ terminates.
\end{theorem}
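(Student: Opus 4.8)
The plan is to show that for each of the two syntactic classes the algorithm can generate only finitely many pairwise non-isomorphic queries, since $\mathsf{XRewrite}$ loops until a fixpoint is reached and never adds a query that is isomorphic (modulo variable renaming, $\simeq$) to one already in $Q_{\textsc{rew}}$. Hence, if I can bound the number of queries up to $\simeq$-equivalence, the \texttt{Repeat} loop must eventually satisfy $Q_{\textsc{temp}} = Q_{\textsc{rew}}$ and halt. The two classes are handled separately, each leaning on the corresponding part of Lemma~\ref{lem:property}.

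For the linear case I would argue as follows. By Lemma~\ref{lem:property}(1), every $q' \in q_\dep$ satisfies $|\body{q'}| \leqslant |\body{q}|$, so the number of body-atoms is bounded by the fixed constant $|\body{q}|$. Each atom uses a predicate from $\R$, and the arity of every predicate is bounded by $\arity{\R}$; every position holds either a constant or a variable. Although the rewriting step introduces fresh variables via $\sigma^i$, what matters for membership in $Q_{\textsc{rew}}$ is a query's $\simeq$-class, i.e.\ its shape up to bijective renaming of variables. The crucial observation is that the number of distinct variable positions is at most $|\body{q}| \cdot \arity{\R}$, so up to renaming only boundedly many variables can occur in any $q'$; the constants appearing are confined to those occurring in $q$ or in $\dep$, which again form a fixed finite set. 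Therefore the collection of queries over the schema $\R$ with at most $|\body{q}|$ atoms, built from this finite supply of predicates, constants, and variable-names, is finite up to $\simeq$. Since $Q_{\textsc{rew}}$ never contains two $\simeq$-equivalent queries, it stabilizes, and the algorithm terminates.

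For the sticky case the bounding argument is analogous but the key is controlling the body size, which is \emph{not} immediate here because the rewriting step replaces one atom with $\body{\sigma^i}$ and could in principle enlarge the body. Here I would invoke Lemma~\ref{lem:property}(2): every variable in $\var{q'} \setminus \var{q}$ occurs only once in $q'$. I would use this to bound $|\body{q'}|$ as follows. Partition the atoms of $q'$ into those all of whose arguments are constants or variables of $\var{q}$, and those containing at least one new variable. The first kind draws from a fixed finite supply of predicates, constants, and the fixed set $\var{q}$, so up to $\simeq$ there are boundedly many such atoms and hence they cannot accumulate without creating a repeat. For the second kind, each carries a new variable occurring exactly once in the whole query; such an atom can be homomorphically mapped onto by other atoms, and one shows (as in the completeness/termination analysis underlying this class) that the number of pairwise $\simeq$-distinct atoms, and thus the body size, is bounded as a function only of $q$ and $\dep$. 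Once the body size is bounded by a constant depending only on $q$ and $\dep$, the same counting argument as in the linear case gives finiteness of $Q_{\textsc{rew}}$ up to $\simeq$, and termination follows.

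The main obstacle I expect is precisely the sticky case: bounding the body size of generated queries. In the linear case the bound $|\body{q'}| \leqslant |\body{q}|$ is handed to us directly, but for sticky TGDs I must turn the ``new variables occur once'' invariant into a genuine cardinality bound on the body. The delicate point is ruling out an unbounded chain of rewriting steps that keeps introducing fresh atoms with fresh once-occurring variables without ever repeating a $\simeq$-class; the single-occurrence property of new variables is exactly what is needed to argue that such atoms are redundant up to homomorphism and therefore cannot proliferate beyond a constant bound, but making this counting precise is where the real work lies.
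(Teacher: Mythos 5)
Your linear case is essentially the paper's own argument and is sound: Lemma~\ref{lem:property}(1) caps $|\body{q'}|$ at $|\body{q}|$, hence the number of variables needed (up to renaming) at $|\body{q}|\cdot\arity{\R}$, and only finitely many $\simeq$-classes of CQs exist over a finite supply of predicates, constants and variable names, so $Q_{\textsc{rew}}$ stabilizes.

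The sticky case, however, is left genuinely open, and you say so yourself. You correctly isolate Lemma~\ref{lem:property}(2) as the invariant, but the finishing move you sketch does not close the argument, for two reasons. First, your plan is to bound $|\body{q'}|$ by arguing that atoms carrying a fresh once-occurring variable ``can be homomorphically mapped onto by other atoms'' and therefore cannot proliferate; but the algorithm's duplicate test is isomorphism ($q'\simeq q''$), not homomorphic subsumption, so homomorphic redundancy of an atom does not prevent $\mathsf{XRewrite}$ from adding the query containing it, and hence does not by itself rule out an infinite strictly-$\simeq$-increasing sequence of generated queries. Second, the route ``first bound the body size, then count'' is not the one the paper takes, and the body-size bound is precisely the step you cannot supply. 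The paper's argument sidesteps it with an anonymization trick: for each generated $p$, define $p^{\star}$ by replacing every variable of $\var{p}\setminus\var{q}$ with one fixed symbol $\star$. Because each such variable occurs exactly once in $p$ (Lemma~\ref{lem:property}(2)), $p_1^{\star}=p_2^{\star}$ implies $p_1\simeq p_2$: the once-occurring variables are interchangeable placeholders recoverable, up to bijective renaming, from the $\star$-abstraction. The abstractions are CQs over the finite term set $\adom{q}\cup\{\star\}$ and the finite schema $\R$, of which there are only finitely many; hence only finitely many $\simeq$-classes can ever arise, and since the algorithm never discards queries and never re-adds a $\simeq$-duplicate, it terminates. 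This $\star$-identification of the fresh variables is the idea missing from your proposal.
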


\begin{proof}
Assume first that $\dep \in \mathsf{LINEAR}$. By
Lemma~\ref{lem:property}, we get that, for each $q' \in q_\dep$,
$|\body{q}| \geqslant |\body{q'}|$. This implies that each $q' \in
q_\dep$ can be equivalently rewritten as a CQ with at most $k =
|\body{q}| \cdot \arity{\R}$ variables. Therefore, $q_\dep$ contains
(modulo variable renaming) at most $k$ variables. Since the maximum
number of CQs that can be constructed using $k$ variables and $|\R|$
predicates is finite, and also since the algorithm does not drop
queries that it has generated, the claim follows.

Suppose now that $\dep \in \mathsf{STICKY}$. Given a CQ $p \in
q_\dep$, let $p^{\star}$ be the query obtained from $p$ by replacing
each variable of $\var{p} \setminus \var{q}$ with the symbol
$\star$. Since, by Lemma~\ref{lem:property}, each variable of
$\var{p} \setminus \var{q}$ occurs only once in $p$, we get the
following: for each pair of CQs $p_1$ and $p_2$ of $q_\dep$, if
$p_{1}^{\star} = p_{2}^{\star}$, then $p_1$ and $p_2$ are the same
modulo bijective variable renaming. Therefore, the maximum number of
CQs that can be constructed during the execution of
$\mathsf{XRewrite}$ is bounded by the number of different CQs that
can be constructed using terms of $T = (\adom{q} \cup \{\star\})$
and predicates of $\R$. Since both $T$ and $\R$ are finite, and also
since the algorithm does not drop queries that it has generated, we
conclude that $\mathsf{XRewrite}$ terminates under sticky sets of
TGDs.
\end{proof}

Clearly, the check that the obtained query is not already present
(modulo bijective variable renaming) each time the rewriting or the
factorization step is applied, is crucial in order to guarantee the
termination of $\mathsf{XRewrite}$.
An alternative way, which is actually the one that we employ in the
implementation of our algorithm, is to maintain an auxiliary set of
CQs $Q_{\mathit{can}}$ which stores the generated queries in a
canonical form, i.e., after applying a canonical renaming step, and
run the algorithm until a fixpoint of $Q_{\mathit{can}}$ is reached.
Formally, given a CQ $q$, assuming that $\dep$ is the input set of
TGDs and $\R$ the underlying schema, a canonical renaming
$\mathit{can_q} : \adom{\body{q}} \ra (\Gamma_q \cup \Delta_{q})$,
where $\Gamma_q \subset \dom$ are the constants occurring in $q$,
and $\Delta_q \subset \freshdom$ is such that $(\Delta_q \cap
\var{q}) = \emptyset$, $|\Delta_q| = |\body{q}| \cdot \arity{\R}$ if
$\dep \in \mathsf{LINEAR}$, and $|\Delta_q| = |\R| \cdot
(|\adom{q}|+1)^{\arity{\R}} \cdot \arity{\R}$ if $\dep \in
\mathsf{STICKY}$, is a one-to-one substitution which maps each
constant of $\Gamma_q$ to itself, and each variable of $\var{q}$ to
the first unused element of $\Delta_q$; a lexicographic order is
assumed on $\Delta_q$.
It is easy to see that, given two CQs $q$ and $p$,
$\mathit{can_q}(q) = \mathit{can_p}(p)$ implies that $q$ and $p$ are
the same query (modulo bijective variable renaming).

\subsection{The Size of the Rewriting}\label{sec:size-of-rewriting}

By exploiting the analysis in the proof of
Theorem~\ref{the:tgdrewrite-termination}, it is easy to establish an
upper bound on the size of the rewriting constructed by
$\mathsf{XRewrite}$.

\begin{theorem}\label{the:size-of-rewriting}
Consider a CQ $q$ over a schema $\R$, and a set $\dep$ of TGDs over
$\R$. The following hold:
\begin{enumerate}
\item $|q_{\dep}|
\in \O\left(\left(|\R| \cdot (\arity{\R} \cdot
|\body{q}|)^{\arity{\R}}\right)^{|\body{q}|}\right)$ if $\dep \in
\mathsf{LINEAR}$, and

\item $|q_{\dep}|
\in 2^{\O\left(|\R| \cdot \left(\arity{\R} \cdot
|\body{q}|\right)^{\arity{\R}}\right)}$ if $\dep \in
\mathsf{STICKY}$.
\end{enumerate}
\end{theorem}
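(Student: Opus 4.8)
The plan is to obtain both bounds by sharpening the counting arguments already developed in the proof of Theorem~\ref{the:tgdrewrite-termination}. Since $\mathsf{XRewrite}$ never drops a query it has generated and never stores two queries that are equal modulo bijective variable renaming, $|q_\dep|$ is bounded by the number of \emph{distinct} CQs (modulo $\simeq$) that can possibly be produced. In each case I would fix a finite alphabet of predicates and terms that suffices to name every such query up to $\simeq$, and then simply count the sets of atoms over that alphabet.

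For the linear case, I would start from Lemma~\ref{lem:property}(1): every $q' \in q_\dep$ satisfies $|\body{q'}| \leqslant |\body{q}|$. Hence each $q'$ has at most $\arity{\R} \cdot |\body{q}|$ term-occurrences in its body, so up to renaming it uses at most $k = \arity{\R} \cdot |\body{q}|$ distinct variables, together with the constants of $q$; the canonical renaming with $|\Delta_{q'}| = |\body{q}| \cdot \arity{\R}$ makes this precise. The number of distinct atoms over a term pool of size $\O(\arity{\R} \cdot |\body{q}|)$ and $|\R|$ predicates of arity at most $\arity{\R}$ is therefore at most $N = |\R| \cdot (\arity{\R} \cdot |\body{q}|)^{\arity{\R}}$. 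A query is a set of at most $|\body{q}|$ such atoms, so the number of possible queries is at most $N^{|\body{q}|}$, which is exactly the claimed bound $\O\left(\left(|\R| \cdot (\arity{\R} \cdot |\body{q}|)^{\arity{\R}}\right)^{|\body{q}|}\right)$.

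For the sticky case the body size is no longer bounded, so I would instead exploit the injectivity of the canonical collapse used in the termination proof. By Lemma~\ref{lem:property}(2), every variable of $\var{p} \setminus \var{q}$ occurs only once in $p$, so replacing all such ``new'' variables by a single fresh symbol $\star$ yields a query $p^\star$ with the property that $p_1^\star = p_2^\star$ implies $p_1 \simeq p_2$. Thus $|q_\dep|$ is bounded by the number of distinct $p^\star$, each of which is a set of atoms over the fixed term set $T = \adom{q} \cup \{\star\}$. Using $|\adom{q}| \leqslant \arity{\R} \cdot |\body{q}|$, the number of atoms over $T$ is at most $|\R| \cdot (|\adom{q}|+1)^{\arity{\R}} \in \O\left(|\R| \cdot (\arity{\R} \cdot |\body{q}|)^{\arity{\R}}\right)$, and counting subsets of this atom universe gives the bound $2^{\O\left(|\R| \cdot (\arity{\R} \cdot |\body{q}|)^{\arity{\R}}\right)}$.

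I expect the arithmetic of the counting to be routine; the only genuine content lies in justifying that the chosen finite alphabet really names every query modulo $\simeq$. In the linear case this is Lemma~\ref{lem:property}(1) plus the observation that the constants of $q$ can be folded into the term-count without changing the asymptotics; in the sticky case it is the injectivity of $p \mapsto p^\star$, which rests squarely on Lemma~\ref{lem:property}(2). The main obstacle is therefore not any single estimate but ensuring these naming claims are airtight — in particular, that collapsing the new variables to $\star$ does not conflate two genuinely different rewritten queries.
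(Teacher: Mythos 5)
Your proposal is correct and follows essentially the same route as the paper: both parts reuse the counting from the proof of Theorem~\ref{the:tgdrewrite-termination}, bounding the atom universe by $|\R|\cdot(\arity{\R}\cdot|\body{q}|)^{\arity{\R}}$ and then counting queries as sets of at most $|\body{q}|$ atoms (linear case, via Lemma~\ref{lem:property}(1)) or as arbitrary subsets after collapsing new variables to $\star$ (sticky case, via the injectivity of $p\mapsto p^{\star}$ from Lemma~\ref{lem:property}(2)). No gaps.
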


\begin{proof}
Assume first that $\dep \in \mathsf{LINEAR}$. As discussed in the
proof of Theorem~\ref{the:tgdrewrite-termination}, the number of
variables that can appear in $q_{\dep}$ is bounded by $(\arity{\R}
\cdot |\body{q}|)$. Thus, the number of atoms that can appear in
$q_{\dep}$ is at most $|\R| \cdot (\arity{\R} \cdot
|\body{q}|)^{\arity{\R}}$. Since $|\body{q'}| \leqslant |\body{q}|$,
for each $q' \in q_{\dep}$, we immediately get that $|q_\dep|
\leqslant (|\R| \cdot (\arity{\R} \cdot
|\body{q}|)^{\arity{\R}})^{|\body{q}|}$, and part (1) follows.
Assume now that $\dep \in \mathsf{STICKY}$. As discussed in the
proof of Theorem~\ref{the:tgdrewrite-termination}, the number of
variables that can appear in $q_{\dep}$ is bounded by
$|\mathit{terms}(q)| + 1 \leqslant (\arity{\R} \cdot |\body{q}|) +
1$, and hence the number of atoms that can appear in $q_\dep$ is at
most $|\R| \cdot ((\arity{\R} \cdot |\body{q}|) + 1)^{\arity{\R}}$.
Since a CQ $q' \in q_\dep$ can have in its body any subset of those
atoms, we conclude that $|q_\dep| \leqslant 2^{(|\R| \cdot
((\arity{\R} \cdot |\body{q}|) + 1)^{\arity{\R}})}$, and part (2)
follows.
\end{proof}

An interesting question is whether the exponential (resp.,
double-exponential) size of the UCQ-rewriting is unavoidable when we
consider linear (resp., sticky) sets of TGDs. In what follows, we
give an affirmative answer to this question.

\begin{theorem}
The following hold:
\begin{enumerate}
\item There exists a CQ $q$ over a schema $\R$, and a set $\dep \in
\mathsf{LINEAR}$ over $\R$ such that, for any UCQ-rewriting $Q$ of
$q$ w.r.t. $\dep$, $|Q| \in
\Omega\left(\left(|\R|\right)^{|\body{q}|}\right)$,

\item There exists a CQ $q$ over a schema $\R$, and a set $\dep \in
\mathsf{STICKY}$ over $\R$ such that, for any UCQ-rewriting $Q$ of
$q$ w.r.t. $\dep$, $|Q| \in
\Omega\left(2^{\left(2^{\arity{\R}}\right)}\right)$.
\end{enumerate}
\end{theorem}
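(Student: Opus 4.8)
The plan is to establish both lower bounds by constructing explicit instances where any UCQ-rewriting must contain exponentially (resp.\ doubly-exponentially) many pairwise non-equivalent conjunctive queries. The strategy for each part is to build a family of databases, each of which produces a distinct answer, and to argue that distinct answers require distinct CQs in the rewriting --- since a single CQ can only ``fire'' on databases that admit a homomorphism from its body, two databases yielding different answers for which no single CQ suffices must be covered by separate disjuncts.

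For part (1), I would design a linear set $\dep$ and a query $q$ whose body has $|\body{q}|$ atoms, where each atom can be independently rewritten in $\Theta(|\R|)$ distinct ways using the linear TGDs. The key idea is to make the rewriting choices \emph{independent across the atoms of $q$}, so that the number of inequivalent resulting CQs is the product of the per-atom choices, giving $\Omega(|\R|^{|\body{q}|})$. Concretely, I would use a schema with many predicates connected by a chain or a fan of linear inclusion-style TGDs (recall linear TGDs generalize inclusion dependencies), so that each body-atom of $q$ admits roughly $|\R|$ alternative ``source'' predicates via backward chaining. The central task is to verify that these product-combinations are genuinely non-redundant, i.e., no CQ in the union subsumes another, which forces every combination to appear as its own disjunct. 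This matches the upper bound of Theorem~\ref{the:size-of-rewriting}(1) up to the treatment of arity and body-size factors.

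For part (2), the target $\Omega(2^{(2^{\arity{\R}})})$ is doubly-exponential in the arity, so I would exploit precisely the feature that distinguishes $\mathsf{STICKY}$ from $\mathsf{LINEAR}$: the ability to encode cartesian products and non-treelike structures, as highlighted in Section~\ref{sec:aims-objectives}. The plan is to use a high-arity predicate so that an atom has $\arity{\R}$ positions, and to arrange sticky TGDs so that the existential witnesses generated in the chase can be ``joined'' in $2^{\arity{\R}}$ essentially different ways; the rewriting must then account for each subset-pattern over these positions independently, and the number of inequivalent such patterns becomes $2^{(2^{\arity{\R}})}$. I would again construct a family of databases indexed by these patterns and show each forces a fresh disjunct.

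The main obstacle I expect is the \emph{lower-bound argument itself}: proving that a UCQ-rewriting genuinely \emph{must} contain a separate disjunct for each constructed database, rather than covering several with one cleverly chosen CQ. The clean way to do this is to exhibit, for each index $i$ in the family, a database $D_i$ such that the target tuple is in $\ans{q}{D_i}{\dep}$ but, crucially, $D_i$ admits no homomorphic image of any ``wrong'' candidate CQ --- so that a single CQ evaluating to true on $D_i$ cannot simultaneously be correct (sound) on the other $D_j$. Establishing this mutual incompatibility is delicate because it requires a careful homomorphism analysis showing that the CQ needed for $D_i$ produces a spurious answer on some $D_j$, contradicting soundness of the rewriting. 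Once this separation lemma is in place, a counting argument over the size of the family immediately yields the claimed $\Omega$ bounds, and both parts follow in parallel from their respective constructions.
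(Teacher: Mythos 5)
Your part~(1) plan is essentially the paper's own proof: the paper takes $\R = \{p_0,\ldots,p_m\}$, $q : p \la p_0(A_1),\ldots,p_0(A_n)$, and the fan $\dep = \{p_i(X) \ra p_0(X)\}_{i \in [m]}$, so that each body-atom independently admits $m$ backward-rewriting choices and any UCQ-rewriting must contain a disjunct for each of the $m^n = |\R|^{|\body{q}|}$ combinations. You also correctly isolate the one step the paper leaves implicit (the separation argument showing no single CQ can cover two distinct choice-combinations), and your database-family argument is the right way to discharge it. Nothing to add there.

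For part~(2), however, there is a genuine gap: you never actually produce a construction, and the mechanism you sketch --- existential witnesses that can be ``joined in $2^{\arity{\R}}$ essentially different ways'' via cartesian-product-style rules --- is not the one that makes the bound work, and it is far from clear it could. The paper's device is different and worth naming precisely. The query is a single ground atom $p \la p_0(0,\ldots,0)$ over an $n$-ary predicate, and the sticky set contains, for each position $i$, a \emph{doubling} rule whose body joins two atoms that agree everywhere except at position $i$ (one carrying $0$ there, the other $1$) and whose head carries $0$; there are no existential quantifiers in these rules at all. Applied backward, each such rule replaces one atom by two, so after chaining through all $n$ positions the single query atom blows up into the full set of $2^n$ atoms indexed by $\{0,1\}^n$. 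Only then does an independent binary choice enter: each of those $2^n$ atoms can be sourced from either $s_1$ or $s_2$ via $s_j(\insX) \ra p_n(\insX)$, $j \in [2]$, yielding $2^{(2^n)} = 2^{(2^{\arity{\R}})}$ pairwise incomparable disjuncts. So the double exponential comes from \emph{body-joins that multiply the number of required atoms under backward chaining}, followed by a per-atom binary choice --- not from cartesian products of existential witnesses. Your plan is missing both this doubling gadget and the (non-trivial) verification that the resulting rule set actually satisfies the stickiness condition; without a concrete set of TGDs, the counting and the separation lemma for part~(2) cannot even be set up.
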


\begin{proof}
For part (1), let $\R = \{p_0,\ldots,p_m\}$ and consider the CQ and
the set of TGDs
\[
q\ :\ p\ \la\ p_0(A_1),\ldots,p_0(A_n) \qquad \dep\ =\
\left\{p_i(X)\ \ra\ p_0(X)\right\}_{i \in [m]}.
\]
It is not difficult to see that any UCQ-rewriting of $q$
w.r.t.~$\dep$ must contain a CQ $q'$ such that $\body{q'} \in
\left(\{p_i(A_1)\}_{i \in [m]} \times \{p_i(A_2)\}_{i \in [m]}
\times \ldots \times \{p_i(A_n)\}_{i \in [m]}\right)$. Since the
cardinality of the above set is $m^n = (|\R|)^{|\body{q}|}$, the
claim follows.

For part (2), let $\R = \{p_0,\ldots,p_n,s,r\}$ and consider the
atomic CQ $q : p \la p_0(0,\ldots,0)$, where $p_0$ is an $n$-ary
predicate, and the sticky set $\dep$ of TGDs
\[
\begin{array}{l}
\{p_i(X_1,\ldots,X_{i-1},0,X_{i+1},\ldots,X_n),p_i(X_1,\ldots,X_{i-1},1,X_{i+1},\ldots,X_n)\\
\hspace{60mm} \ra\ p_i(X_1,\ldots,X_{i-1},0,X_{i+1},\ldots,X_n)\}_{i \in [n]},\\
\{s_i(X_1,\ldots,X_n)\ \ra\ p_n(X_1,\ldots,X_n)\}_{i \in [2]}.
\end{array}
\]
%
It is easy to verify that any UCQ-rewriting of $q$ w.r.t.~$\dep$
must contain a CQ $q'$ such that $\body{q'} \in \times_{\tuple{t}
\in \{0,1\}^n} \{s_1(\tuple{t}),s_2(\tuple{t})\}$, and
$|\times_{\tuple{t} \in \{0,1\}^n}
\{s_1(\tuple{t}),s_2(\tuple{t})\}| = 2^{\left(2^n\right)} =
2^{\left(2^{\arity{\R}}\right)}$.
\end{proof}

\subsection{Correctness of XRewrite}\label{sec:correctness-of-tgd-rewrite}

We now establish the correctness of $\mathsf{XRewrite}$. Towards
this aim two auxiliary technical lemmas are needed. The first one,
which is used for soundness, states that the answer to the final
rewriting is a subset of the answer to the input query. In what
follows,
let $\insX^i$ be the sequence of variables obtained by replacing
each variable $X$ of $\insX$ with $X^i$.

\begin{lemma}\label{lem:sound-auxiliary-lemma}
Consider a CQ $q$ over a schema $\R$, a database $D$ for $\R$, and a
set $\dep$ of TGDs over $\R$. It holds that,
$\ans{q_{\dep}}{D}{\dep} \subseteq \ans{q}{D}{\dep}$.
\end{lemma}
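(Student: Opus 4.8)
The plan is to reduce the statement to the chase and then argue by induction on the generation process of $\mathsf{XRewrite}$. First I would recall from the preliminaries that, since $\chase{D}{\dep}$ is a universal model, $\ans{p}{D}{\dep} = p(\chase{D}{\dep})$ for every CQ $p$, and that for a UCQ one has $\ans{q_\dep}{D}{\dep} = \bigcup_{q' \in q_\dep}\ans{q'}{D}{\dep}$. Thus it suffices to prove the stronger claim that \emph{every} query $p'$ ever inserted into $Q_{\textsc{rew}}$ during the run of $\mathsf{XRewrite}(q,\dep)$ satisfies $p'(\chase{D}{\dep}) \subseteq q(\chase{D}{\dep})$; the lemma then follows by restricting to the $\mathsf{r}$-labelled queries that constitute $q_\dep$.

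The induction is on the number of rewriting/factorization steps producing $p'$. The base case $p' = q$ is trivial. For the inductive step, $p'$ is obtained from some $p \in Q_{\textsc{rew}}$ (with $p(\chase{D}{\dep}) \subseteq q(\chase{D}{\dep})$ by hypothesis) by one step, and I would show $p'(\chase{D}{\dep}) \subseteq p(\chase{D}{\dep})$. The factorization case is immediate: here $p' = \gamma_S(p)$ for a substitution $\gamma_S$, so if a homomorphism $h$ witnesses $\tuple{t} \in p'(\chase{D}{\dep})$, then $h \circ \gamma_S$ witnesses $\tuple{t} \in p(\chase{D}{\dep})$.

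The rewriting case is the main obstacle. Here $\body{p'} = \gamma_{S,\sigma^i}((\body{p}\setminus S) \cup \body{\sigma^i})$ and $\head{p'} = \gamma_{S,\sigma^i}(\head{p})$. Given a homomorphism $h$ with $h(\body{p'}) \subseteq \chase{D}{\dep}$ and $h(\head{p'}) = p(\tuple{t})$, I set $g = h \circ \gamma_{S,\sigma^i}$, so that $g(\body{\sigma^i}) \subseteq \chase{D}{\dep}$. Since $\chase{D}{\dep} \models \sigma^i$, the restriction of $g$ to the universal variables of $\sigma^i$ extends to some $g'$ with $g'(\head{\sigma^i}) \in \chase{D}{\dep}$, where $g'$ maps the existential variable of $\sigma^i$ to a value of $\chase{D}{\dep}$. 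I would then define a homomorphism $h''$ on $\var{p}$ that maps every atom of $\body{p}\setminus S$ via $g$ and every atom of $S$ onto the single atom $g'(\head{\sigma^i})$: for a term $t$ occurring at position $\pi$ of some atom of $S$, set $h''(t) = g'(\head{\sigma^i}[\pi])$.

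The crux is verifying that $h''$ is well defined and reproduces the answer tuple, and this is exactly where the applicability condition is used. If $t$ is a constant, a shared variable, or a distinguished variable of $p$, then Definition~\ref{def:applicability} guarantees $\pi \neq \pi_{\exists}(\sigma^i)$, so $\head{\sigma^i}[\pi]$ is a universal variable (or constant) of $\sigma^i$; hence $g'(\head{\sigma^i}[\pi]) = g(\head{\sigma^i}[\pi]) = h(\gamma_{S,\sigma^i}(t))$, which coincides with the value already assigned by $g$ wherever $t$ also occurs in $\body{p}\setminus S$ or in the head. Constants are mapped to themselves, and the distinguished variables still receive the corresponding components of $\tuple{t}$, so $h''(\head{p}) = p(\tuple{t})$ and $h''(\body{p}) \subseteq \chase{D}{\dep}$. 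This yields $\tuple{t} \in p(\chase{D}{\dep})$, completing the induction. I expect the only delicate point to be bookkeeping the three possible origins of a variable of $p$ (occurring in $S$ only, shared with $\body{p}\setminus S$, or distinguished) and confirming that the applicability condition rules out precisely the conflicting case at position $\pi_{\exists}(\sigma^i)$.
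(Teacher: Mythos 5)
Your proposal is correct and follows essentially the same route as the paper's proof: reduce to the chase, induct on the number of rewriting steps, use the fact that $\chase{D}{\dep} \models \sigma$ to obtain a witness for the head atom, and invoke the applicability condition (no constant, shared, or distinguished variable at $\pi_{\exists}(\sigma)$) to show the lifted homomorphism is well defined on the parent query. The only cosmetic differences are that you build the homomorphism position-wise rather than as $h' \circ \gamma$ and that you spell out the (trivial) factorization case, which the paper dismisses in one line.
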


\begin{proof}
It suffices to show that, for a tuple of constants $\tuple{t}$,
$\tuple{t} \in \ans{q_\dep}{D}{\dep}$ implies $\tuple{t} \in
\ans{q}{D}{\dep}$, or, equivalently, $\tuple{t} \in
q_{\dep}(\chase{D}{\dep})$ implies $\tuple{t} \in
q(\chase{D}{\dep})$.
It is straightforward to see that the factorization step does not
affect the soundness of our algorithm. Thus, we assume, without loss
of generality, that $q_\dep$ is the UCQ constructed without applying
the factorization step.
We denote by $q_{\dep}^{i}$ the part of $q_{\dep}$ obtained after $i
\geqslant 0$ applications of the rewriting step. The proof is by
induction on $i$.

\underline{Base step:} Clearly, $q_{\dep}^{0} = q$, and the claim
holds trivially.

\underline{Inductive step:} Suppose now that $\tuple{t} \in
q_{\dep}^{i}(\chase{D}{\dep})$, for $i \geqslant 0$. This implies
that there exists $p \in q_{\dep}^{i}$ and a homomorphism $h$ such
that $h(\body{p}) \subseteq \chase{D}{\dep}$ and $h(\insV) =
\tuple{t}$, where $\insV$ are the distinguished variables of $p$.
If $p \in q_{\dep}^{i-1}$, then the claim follows by induction
hypothesis. The interesting case is when $p$ was obtained during the
$i$-th application of the rewriting step from a CQ $p' \in
q_{\dep}^{i-1}$, i.e., $q_{\dep}^{i} = q_{\dep}^{i-1} \cup \{p\}$.
By induction hypothesis, it suffices to show that $\tuple{t} \in
q_{\dep}^{i-1}(\chase{D}{\dep})$.
Clearly, there exists a TGD $\sigma \in \dep$ of the form
$\varphi(\insX,\insY) \ra \exists Z \, r(\insX,Z)$ which is
applicable to a set $S \subseteq \body{p'}$, and $p$ is the query
$\gamma(p'[S / \body{\sigma^i}])$; let $\gamma$ be the MGU for $S
\cup \{\head{\sigma^i}\}$.
Observe that $h(\gamma(\varphi(\insX^i,\insY^i))) \subseteq
\chase{D}{\dep}$, and hence $\sigma$ is applicable to
$\chase{D}{\dep}$; let $\mu = h \circ \gamma$. Thus,
$\mu'(r(\insX^i,Z^i)) \in \chase{D}{\dep}$, where $\mu' \supseteq
\mu|_{\insX^i}$. We define the substitution $h' = h \cup
\{\gamma(Z^i) \ra \mu'(Z^i)\}$.
To establish that $h'$ is well-defined, it suffices to show that
$\gamma(Z^i) \not\in \dom$, and also that there is no mapping $V \ra
U \in h$ such that $\gamma(Z^i) = V$. Towards a contradiction,
suppose that $\gamma(Z^i)$ is either a constant or appears in the
left-hand side of an assertion of $h$. It is easy to verify that in
this case there exists an atom $\atom{a} \in S$ such that at
position $\pi_{\exists}(\sigma)$ in $\atom{a}$ occurs either a
constant or a variable which is shared in $p'$. But this contradicts
the fact that $\sigma$ is applicable to $S$, and hence $h'$ is
well-defined.
It remains to show that the substitution $h' \circ \gamma$ maps
$\body{p'}$ to $\chase{D}{\dep}$ and $h'(\gamma(\insV')) =
\tuple{t}$, where $\insV'$ are the distinguished variables of $p'$;
this immediately implies that $\tuple{t} \in
q_{\dep}^{i-1}(\chase{D}{\dep})$.
Clearly, $\gamma(\body{p'} \setminus S) \subseteq \body{p}$. Since
$h(\body{p}) \subseteq \chase{D}{\dep}$, we get that
$h'(\gamma(\body{p'} \setminus S)) \subseteq \chase{D}{\dep}$.
Moreover, $h'(\gamma(S)) = h'(\gamma(r(\insX^i,Z^i))) =
r(h'(\gamma(\insX^i)),h'(\gamma(Z^i))) = r(\mu(\insX^i),\mu'(Z^i)) =
\mu'(r(\insX^i,Z^i)) \in \chase{D}{\dep}$. Finally, since
$\gamma(\insV') = \insV$ and $h(\insV) = \tuple{t}$, we get that
$h'(\gamma(\insV')) = \tuple{t}$. 
\end{proof}

The second auxiliary lemma asserts that the answer to the final
rewriting is a subset of the set of tuples obtained by simply
evaluating it over the input database.

\begin{lemma}\label{lem:complete-auxiliary-lemma}
Consider a CQ $q$ over a schema $\R$, a database $D$ for $\R$, and a
set $\dep$ of TGDs over $\R$. It holds that,
$\ans{q_{\dep}}{D}{\dep} \subseteq q_{\dep}(D)$.
\end{lemma}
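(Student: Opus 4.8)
The plan is to reduce the statement to a purely chase-theoretic inclusion and then argue by ``undoing'' chase steps. Since the chase is a universal model, the certain answer of the UCQ $q_\dep$ satisfies $\ans{q_\dep}{D}{\dep} = q_\dep(\chase{D}{\dep})$, so it suffices to prove $q_\dep(\chase{D}{\dep}) \subseteq q_\dep(D)$. Fix a tuple $\tuple{t} \in q_\dep(\chase{D}{\dep})$, witnessed by a CQ $p \in q_\dep$ and a homomorphism $h$ with $h(\body{p}) \subseteq \chase{D}{\dep}$ sending the distinguished variables of $p$ to $\tuple{t}$. I would assign to each image atom its chase level (database atoms have level $0$, an atom produced from premises of maximum level $k$ has level $k+1$) and run an induction on the finite multiset of these levels under the multiset ordering. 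In the base case, if every atom of $h(\body{p})$ has level $0$, then $h(\body{p}) \subseteq D$, whence $\tuple{t} \in p(D) \subseteq q_\dep(D)$.

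For the inductive step I would pick an atom $\atom{a}^\star \in \body{p}$ whose image has maximal level $k+1 \geqslant 1$; let $\atom{c} = h(\atom{a}^\star)$ be produced by a chase step applying a normal-form TGD $\sigma : \varphi(\insX,\insY) \ra \exists Z\, r(\insX,Z)$ with some homomorphism $g$, so that $\atom{c} = r(g(\insX), z)$ with $z \in \freshdom$ a fresh null occurring nowhere else in $\chase{D}{\dep}$. Let $S = \{\atom{a} \in \body{p} : z \in \adom{h(\atom{a})}\}$. Because $z$ is fresh and $\sigma$ is in normal form (a single head-atom with a single existential variable occurring once), every atom of $S$ maps under $h$ exactly to $\atom{c}$; hence $S$ unifies, and at position $\pi_{\exists}(\sigma)$ each atom of $S$ carries a variable mapping to $z$ which, since $z$ occurs only in $\atom{c}$, can occur only at $\pi_{\exists}(\sigma)$ inside atoms of $S$ and nowhere else in $\body{p}$.

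First I would collapse $S$ by the factorization step, grouping the atoms of $S$ according to their variable at $\pi_{\exists}(\sigma)$; each such group has at least two atoms sharing a variable $V$ that appears only at $\pi_{\exists}(\sigma)$ and not in $\body{p}\setminus S$, so the group is factorizable w.r.t.~$\sigma$ in the sense of Definition~\ref{def:factorizability}. Since $q_\dep$ is a fixpoint closed under the factorization step, each factorized query again lies in $q_\dep$, and $h$ factors through the corresponding MGU, yielding a homomorphism into $\chase{D}{\dep}$ whose level multiset has strictly fewer atoms. Iterating, I reach a query $\bar p \in q_\dep$ in which a single atom $\bar{\atom a}$ maps to $\atom{c}$ and carries a non-shared variable at $\pi_{\exists}(\sigma)$. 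Now $\sigma$ is applicable to $\{\bar{\atom a}\}$ in the sense of Definition~\ref{def:applicability}: the set unifies with $\head{\sigma}$ (both map homomorphically to $\atom{c}$), and the term at $\pi_{\exists}(\sigma)$ is neither a constant (a constant would map to itself, not to $z$) nor a shared variable. The rewriting step then produces $p^\dagger \in q_\dep$ in which $\bar{\atom a}$ is replaced by $\body{\sigma^i}$. Finally I would glue $\bar h$ (on the untouched atoms) with $g$ (on the freshly introduced body of $\sigma$, consistent on the universal variables $\insX$ since the MGU identifies them): as the premises of the chase step that created $\atom{c}$ have level $\leqslant k$, the resulting homomorphism $h^\dagger$ sends $\body{p^\dagger}$ into $\chase{D}{\dep}$ with a strictly smaller level multiset, while still witnessing $\tuple{t}$ because the head is preserved. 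The induction hypothesis then yields $\tuple{t} \in q_\dep(D)$.

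I expect the main obstacle to be the bookkeeping around the factorization phase: verifying that grouping the atoms of $S$ by their existential-position variable always produces sets satisfying all three conditions of Definition~\ref{def:factorizability}, that every intermediate query genuinely belongs to the fixpoint $q_\dep$, and that $h$ consistently factors through each MGU used, together with the careful gluing of $\bar h$ and $g$ after the rewriting step and the proof that the level multiset strictly decreases at each stage, so that the multiset induction is well-founded.
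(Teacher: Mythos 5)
Your proposal is correct and follows essentially the same route as the paper's proof: reduce to showing $q_\dep(\chase{D}{\dep}) \subseteq q_\dep(D)$, induct on chase depth (the paper uses the number of chase steps, you use the multiset of atom levels, which is an equivalent well-founded measure), partition the atoms mapping onto the newly created atom by their variable at $\pi_{\exists}(\sigma)$, factorize each group, apply the rewriting step, and glue the homomorphisms via the MGU property --- exactly the content of the paper's Claim~\ref{cla:auxiliary-claim} and the surrounding argument. The only slips are cosmetic: a group may be a singleton (then no factorization is needed), and after factorization one atom per group --- not a single atom --- may survive, so the rewriting step is applied to that whole unifying set (as the paper does) or iterated atom by atom, which still strictly decreases your multiset.
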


\begin{proof}
It suffices to show that, for a tuple of constants $\tuple{t}$,
$\tuple{t} \in \ans{q_\dep}{D}{\dep}$ implies $\tuple{t} \in q(D)$,
or, equivalently, $\tuple{t} \in q_{\dep}(\chase{D}{\dep})$ implies
$\tuple{t} \in q_{\dep}(D)$.
We proceed by induction on the number of
applications of the chase step.

\underline{Base step:} Clearly, $\apchase{0}{D}{\dep} = D$, and the
claim holds trivially.

\underline{Inductive step:} Suppose now that $\tuple{t} \in
q_{\dep}(\apchase{i}{D}{\dep})$, for $i \geqslant 0$. This implies
that there exists $p \in q_{\dep}$ and a homomorphism $h$ such that
$h(\body{p}) \subseteq \apchase{i}{D}{\dep}$ and $h(\insV) =
\tuple{t}$, where $\insV$ are the distinguished variables of $p$. If
$h(\body{p}) \subseteq \apchase{i-1}{D}{\dep}$, then the claim
follows by induction hypothesis. The non-trivial case is when the
atom $\atom{a}$, obtained during the $i$-th application of the chase
step by applying a TGD $\sigma : \varphi(\insX,\insY) \ra \exists Z
\, r(\insX,Z)$, belongs to $h(\body{p})$. Clearly, there exists a
homomorphism $\mu$ such that $\mu(\varphi(\insX,\insY)) \subseteq
\apchase{i-1}{D}{\dep}$ and $\atom{a} = \mu'(r(\insX,\insY))$, where
$\mu' \supseteq \mu|_{\insX}$. By induction hypothesis, it suffices
to show that $\tuple{t} \in q_{\dep}(\apchase{i-1}{D}{\dep})$.
Before we proceed further, we need an auxiliary claim; its proof can
be found in Section~\ref{appsec:complete-auxiliary-claim}.

\begin{claim}\label{cla:auxiliary-claim}
There exists a CQ $p' \in q_{\dep}$ and a set of atoms $S \subseteq
\body{p'}$ such that $\sigma$ is applicable to $S$, and also there
exists a homomorphism $\lambda$ such that $\lambda(\body{p'}
\setminus S) \subseteq \apchase{i-1}{D}{\dep}$, $\lambda(\insV') =
\tuple{t}$, where $\insV'$ are the distinguished variables of $p'$,
and $\lambda(S) = \atom{a}$.
\end{claim}

The above claim implies that there exists $i \geqslant 1$ such that
during the rewriting process eventually we will get a CQ $p''$ with
$\body{p''} = \gamma(\body{p'} \setminus S) \cup
\gamma(\varphi(\insX^i,\insY^i))$, where $\gamma$ is the MGU for $S
\cup \{\head{\sigma^i}\}$. It remains to show that there exists a
homomorphism that maps $\body{p''}$ to $\apchase{i-1}{D}{\dep}$ and
the distinguished variables $\insV''$ of $p''$ to $\tuple{t}$. Since
$\lambda \cup \mu'$ is a well-defined substitution, it is a unifier
for $S \cup \{\head{\sigma^i}\}$. By definition of the MGU, there
exists a substitution $\theta$ such that $\lambda \cup \mu' = \theta
\circ \gamma$. Observe that $\theta(\body{p''}) =
\theta(\gamma(\body{p'} \setminus S) \cup
\gamma(\varphi(\insX^i,\insY^i))) = (\lambda \cup \mu')(\body{p'}
\setminus S) \cup (\lambda \cup \mu')(\varphi(\insX^i,\insY^i)) =
\lambda(\body{p'} \setminus S) \cup \mu'(\varphi(\insX^i,\insY^i))
\subseteq \apchase{i-1}{D}{\dep}$. Finally, $\theta(\insV'') =
\theta(\gamma(\insV')) = (\lambda \cup \mu')(\insV') =
\lambda(\insV') = \tuple{t}$.
\end{proof}


We are now ready to establish the soundness and completeness of
$\mathsf{XRewrite}$:

\begin{theorem}\label{the:TGD-rewrite-sound-complete}
Consider a CQ $q$ over a schema $\R$, a database $D$ for $\R$, and a
set $\dep$ of TGDs over $\R$. It holds that, $q_{\dep}(D) =
\ans{q}{D}{\dep}$.
\end{theorem}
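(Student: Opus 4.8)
The plan is to derive the equality $q_{\dep}(D) = \ans{q}{D}{\dep}$ by chaining together the two auxiliary lemmas with two elementary facts, so that each of the two target sets is squeezed inside the other. The two facts I would record at the outset are, first, the universality property of the chase, which gives $\ans{q}{D}{\dep} = q(\chase{D}{\dep})$ and, since the answer to a UCQ is defined analogously, $\ans{q_{\dep}}{D}{\dep} = q_{\dep}(\chase{D}{\dep})$; and, second, that the input query $q$ is itself a disjunct of $q_{\dep}$ (it is initialized with label $\mathsf{r}$ and is never dropped, so it survives in $Q_{\textsc{fin}}$), whence $q(I) \subseteq q_{\dep}(I)$ for every instance $I$ by monotonicity of CQ/UCQ evaluation under instance inclusion.

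First I would establish the intermediate equality $\ans{q}{D}{\dep} = \ans{q_{\dep}}{D}{\dep}$. The inclusion $\ans{q_{\dep}}{D}{\dep} \subseteq \ans{q}{D}{\dep}$ is precisely Lemma~\ref{lem:sound-auxiliary-lemma}. For the reverse inclusion I would use that $q \in q_{\dep}$: evaluating both queries over the chase yields $\ans{q}{D}{\dep} = q(\chase{D}{\dep}) \subseteq q_{\dep}(\chase{D}{\dep}) = \ans{q_{\dep}}{D}{\dep}$. Hence the two certain-answer sets coincide.

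Next I would sandwich $q_{\dep}(D)$ between the same set. Since $D \subseteq \chase{D}{\dep}$ (the chase only adds atoms to $D$) and UCQ evaluation is monotone, $q_{\dep}(D) \subseteq q_{\dep}(\chase{D}{\dep}) = \ans{q_{\dep}}{D}{\dep} = \ans{q}{D}{\dep}$, giving one inclusion. For the opposite one I would invoke Lemma~\ref{lem:complete-auxiliary-lemma}, which states $\ans{q_{\dep}}{D}{\dep} \subseteq q_{\dep}(D)$; combined with the intermediate equality this yields $\ans{q}{D}{\dep} \subseteq q_{\dep}(D)$. Putting the two inclusions together gives $q_{\dep}(D) = \ans{q}{D}{\dep}$, as required.

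I do not expect a genuine obstacle at the level of this theorem, since all the real work has already been delegated to the soundness lemma (Lemma~\ref{lem:sound-auxiliary-lemma}) and the completeness lemma (Lemma~\ref{lem:complete-auxiliary-lemma}), the latter resting on Claim~\ref{cla:auxiliary-claim}. The only points demanding a little care are the two ``free'' ingredients that the lemmas do not supply, namely that $q$ is a disjunct of $q_{\dep}$ and that $q_{\dep}(D) \subseteq q_{\dep}(\chase{D}{\dep})$ by monotonicity together with $D \subseteq \chase{D}{\dep}$; making these explicit is what lets the chain of inclusions close up into equalities. If one wished to be fully rigorous, I would additionally remark that the $\mathsf{f}$-labeled auxiliary queries discarded in the final step of $\mathsf{XRewrite}$ play no role here, since $q_{\dep}$ already denotes the collection of $\mathsf{r}$-labeled queries and both lemmas are stated directly in terms of $q_{\dep}$.
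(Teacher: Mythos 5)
Your proposal is correct and follows essentially the same route as the paper's own proof: both arguments combine monotonicity of UCQ evaluation with $D \subseteq \chase{D}{\dep}$, the fact that $q$ is a disjunct of $q_{\dep}$, and the two auxiliary lemmas to close the chain of inclusions. The only cosmetic difference is that you first isolate the intermediate equality $\ans{q}{D}{\dep} = \ans{q_{\dep}}{D}{\dep}$ before sandwiching $q_{\dep}(D)$, whereas the paper interleaves these steps.
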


\begin{proof}
Since $D \subseteq \chase{D}{\dep}$, by monotonicity of CQs,
$q_{\dep}(D) \subseteq q_{\dep}(\chase{D}{\dep})$ which in turn
implies $q_{\dep}(D) \subseteq \ans{q_{\dep}}{D}{\dep}$. By
Lemma~\ref{lem:sound-auxiliary-lemma}, we immediately get that
$q_{\dep}(D) \subseteq \ans{q}{D}{\dep}$.
Conversely, since $q \in q_{\dep}$, we get that $\ans{q}{D}{\dep}
\subseteq \ans{q_{\dep}}{D}{\dep}$.
Lemma~\ref{lem:complete-auxiliary-lemma} implies that
$\ans{q}{D}{\dep} \subseteq q_{\dep}(D)$, and the claim follows.
\end{proof}

Let us conclude this section by noticing that $\mathsf{XRewrite}$
can treat even more expressive classes of TGDs than linear and
sticky TGDs, namely multi-linear~\cite{CaGL12} and
sticky-join~\cite{CaGP12} TGDs, which guarantee the first-order
rewritability of CQ answering.
The goal of multi-linearity was the definition of a natural
formalism which is strictly more expressive than
DL-Lite$_{\R,\sqcap}$, that is, the extended version of DL-Lite$_\R$
which allows for concept conjunction~\cite{CGLL*13}. Sticky-joiness
is the result of combining linearity and stickiness, with the aim of
identifying more expressive classes of TGDs.
For more details, we refer the reader to
Section~\ref{appsec:more-expressive-classes}.

\section{Parallelize the Rewriting Procedure}\label{sec:parallelize}

An interesting question that comes up is whether the overall time
that we need to compute the final rewriting can be reduced by
designing a parallel version of $\mathsf{XRewrite}$ which exploits
multi-core architectures. In this section, we present some
preliminary ideas and results regarding the parallelization of our
algorithm --- to the best of our knowledge, this is the first
attempt to design a parallel rewriting algorithm.
The key idea is to decompose the query $q$ into smaller queries
$q_1,\ldots,q_m$, where $m \geqslant 1$, in such a way that, if a
variable $V$ occurs in at least two queries of $\{q_1,\ldots,q_m\}$,
then each occurrence of $V$ occurs at a position that may host only
constants (in the instance constructed by the chase procedure). This
allows us to rewrite independently each query $q_i$ into $Q_{q_i}$,
where $i \in [m]$, and then merge the queries
$Q_{q_1},\ldots,Q_{q_m}$ in order to obtain the final rewriting.
Notice that the decomposition technique described above is a new
form of query decomposition which, in contrast to traditional
methods such as the ones in~\cite{ChRa00,GoLS02}, takes into account
a given set of TGDs, and is engineered to be used for parallelizing
our rewriting algorithm. Instead, the aim of existing techniques is
to suggest an efficient strategy for executing the given query. Let
us first give an informal description of our parallel procedure.

\subsection{An Informal Description}\label{sec:xrewriteparalle-informal-description}

Consider the following relational schema representing financial
information about companies and their stocks:
\[
\begin{array}{lcl}
\mathit{stock}(\mathsf{id},\mathsf{name},\mathsf{unit\_price})
&\quad&
\mathit{company}(\mathsf{name},\mathsf{country},\mathsf{segment})\\
\mathit{listComponent}(\mathsf{stock},\mathsf{list}) &\quad&
\mathit{stockPortfolio}(\mathsf{company},\mathsf{stock},\mathsf{quantity})\\
\mathit{finIndex}(\mathsf{name},\mathsf{type},\mathsf{reference\_market})
&\quad& \mathit{hasStock}(\mathsf{stock},\mathsf{comany})\\
\mathit{finInstrument}(\mathsf{stock}) &\quad&
\mathit{legalPerson}(\mathsf{company}).
\end{array}
\]
Let $\dep$ be the set consisting of the following linear TGDs; for
clarity, we use more than one existentially quantified variables in
the rule-heads:
\[
\begin{array}{rcl}
\sigma_1 &:& \mathit{stockPortfolio(X,Y,Z)}\ \rightarrow\ \exists V \exists W \, \mathit{company(X,V,W)}\\
\sigma_2 &:& \mathit{stockPortfolio(X,Y,Z)}\ \rightarrow \exists V \exists W \, \mathit{stock(Y,V,W)}\\
\sigma_3 &:& \mathit{listComponent(X,Y)}\ \rightarrow\ \exists Z \exists W \, \mathit{finIndex(Y,Z,W)}\\
\sigma_4 &:& \mathit{listComponent(X,Y)}\ \rightarrow\ \exists Z \exists W \, \mathit{stock(X,Z,W)}\\
\sigma_5 &:& \mathit{stockPortfolio(X,Y,Z)}\ \rightarrow\ \mathit{hasStock(Y,X)}\\
\sigma_6 &:& \mathit{hasStock(X,Y)}\ \rightarrow\ \exists Z \, \mathit{stockPortfolio(Y,X,Z)}\\
\sigma_7 &:& \mathit{stock(X,Y,Z)}\ \rightarrow\ \exists V \exists W \, \mathit{stockPortfolio(V,X,W)}\\
\sigma_8 &:& \mathit{stock(X,Y,Z)}\ \rightarrow\ \mathit{finInstrument(X)}\\
\sigma_9 &:& \mathit{company(X,Y,Z)}\ \rightarrow\
\mathit{legalPerson(X)}.
\end{array}
\]

The TGDs $\sigma_1$, $\sigma_2$, $\sigma_3$ and $\sigma_4$ set the
``domain'' and the ``range'' of the $\mathit{stockPortfolio}$ and
$\mathit{listComponent}$ relations, respectively. The TGDs
$\sigma_5$ and $\sigma_6$ assert that $\mathit{stockPortfolio}$ and
$\mathit{hasStock}$ are ``inverse relations'', while $\sigma_7$
expresses that each stock must belong to a stock portfolio. The TGDs
$\sigma_8$ and $\sigma_9$ model taxonomic relationships; in
particular, each stock is a financial instrument, and each company
is a legal person.
Consider also the following conjunctive query $q$ asking for all the
triples $\tup{a,b,c}$, where $a$ is a financial instrument owned by
the company $b$ and listed on $c$:
\[
\begin{array}{rcl}
p(A,B,C) &\leftarrow& \mathit{finInstrument}(A),
\mathit{stockPortfolio}(B,A,D), \mathit{company}(B,E,F),\\
&&\mathit{listComponent}(A,C), \mathit{finIndex}(C,G,H).
\end{array}
\]

Recall that our intention is to decompose $q$ into smaller
subqueries in such a way that, if a variable $V$ occurs in at least
two such subqueries, then each occurrence of $V$ occurs at a
position that may host only constants (in the instance constructed
by the chase procedure).
After a careful inspection of the set $\dep$, it is easy to verify
that, for every database $D$, if $q$ is mapped to $\chase{D}{\dep}$
via a homomorphism $h$, then the only join-variable occurring in $q$
that can be mapped by $h$ to a null value is $B$. More precisely,
due to $\sigma_7$ a null value may appear at position
$\mathit{stockPortfolio}[1]$, which in turn may be propagated to
position $\mathit{company}[1]$ after applying $\sigma_1$ --- those
positions are called affected w.r.t.~$\sigma_7$, which intuitively
means that they can have a null generated by $\sigma_7$.
The fact that only $B$ appears at an affected position, allows us to
decompose $q$ into four subqueries, and then rewrite each one of
them independently.
The result of such a decomposition, called existential-join
decomposition, is the following:
\[
\begin{array}{rcl}
q_1 &:& p_1(A)\ \la\ \mathit{finInstrument}(A)\\
q_2 &:& p_2(A,B)\ \la\ \mathit{stockPortfolio}(B,A,D),
\mathit{company}(B,E,F)\\
q_3 &:& p_3(A,C)\ \la\ \mathit{listComponent}(A,C)\\
q_4 &:& p_4(C)\ \la\ \mathit{finIndex}(C,G,H).
\end{array}
\]
Notice that, for each subquery $q_i$, the distinguished variables of
$q_i$ are the shared variables of $q$ which appear outside
$\body{q_i}$, i.e., in $\head{q}$ or in $\body{q} \setminus
\body{q_i}$.
%
%
%
%
The rewriting of $q_i$ w.r.t.~$\dep$, for each $i \in [4]$, is
denoted $Q_{q_i}$.
%
The last step is to merge the queries $Q_{q_1},\ldots,Q_{q_4}$. This
can be done via the reconciliation rule
\[
\rho\ :\ p(A,B,C)\ \la\ p_1(A),p_2(A,B),p_3(A,C),p_4(C),
\]
which intuitively says that the rewriting of $q$ w.r.t.~$\dep$ is
obtained by computing the cartesian product of the queries
$Q_{q_1},\ldots,Q_{q_4}$, while the variables $A$ and $C$, which
occur in more than one components, have the same semantic meaning,
i.e., the joins among different components are preserved.
More precisely, the final rewriting of $q$ w.r.t.~$\dep$ is obtained
by unfolding the non-recursive Datalog query $\tup{Q_{q_1} \cup
\ldots \cup Q_{q_4} \cup \{\rho\},p}$.

The UCQ obtained by employing the above technique, and
$\mathsf{XRewrite}(q,\dep)$ have exactly the same size. In other
words, the parallelization of the rewriting procedure does not
affect the size of the final rewriting. However, it significantly
affects the execution time of the rewriting algorithm.
The execution of $\mathsf{XRewrite}$ on $q$ and $\dep$ takes 194ms,
while the execution of the parallel version of $\mathsf{XRewrite}$
takes 81ms (47ms for constructing $\tup{Q_{q_1} \cup \ldots \cup
Q_{q_4} \cup \{\rho\},p}$ and 34ms for unfolding it).
%

\subsection{The Algorithm XRewriteParallel}\label{sec:xrewrite-parallel}

Let us now formalize the idea discussed above. First, we need to
define the notion of affected positions:

\begin{definition}[Affected Positions]\label{def:affected-postions}
Consider a set $\dep$ of TGDs over a schema $\R$. An \emph{affected
position} of $\R$ w.r.t. a pair $\tup{\sigma,\dep}$, where $\sigma
\in \dep$, is defined inductively as follows:
\begin{enumerate}
\item the position $\pi_{\exists}(\sigma)$ is affected w.r.t.
$\tup{\sigma,\dep}$, and
\item a position $\pi$ in the head of a TGD $\sigma' \in
\dep$ is affected w.r.t. $\tup{\sigma,\dep}$ if the same variable
appears at $\pi$, and in the $\body{\sigma'}$ only at positions
which are affected w.r.t. $\tup{\sigma,\dep}$. \hfill\markfull
\end{enumerate}
\end{definition}

\begin{example}
Consider the set $\dep$ of TGDs consisting of
\[
\sigma_1\ :\ p(X,Y),s(Y,Z)\ \ra\ \exists W \, t(Y,X,W) \qquad
\sigma_2\ :\ t(X,Y,Z)\ \ra\ \exists W \, p(W,Z).
\]
It is easy to verify that
\[
\tup{\sigma_1,\dep}\ =\ \{t[3],p[2]\} \qquad \tup{\sigma_2,\dep}\ =\
\{p[1],t[2]\}.
\]
Notice that, although the variable $Y$ in $\body{\sigma_1}$ occurs
at position $p[2] \in \tup{\sigma_1,\dep}$, $t[1]$ is not affected
w.r.t.~$\tup{\sigma_1,\dep}$ since $Y$ also occurs at position $s[1]
\not\in \tup{\sigma_1,\dep}$. \hfill\markfull
\end{example}

By having the above auxiliary notion in place, we are now ready to
define the key notion of the existential-join decomposition of a CQ
w.r.t.~a set of TGDs.

\begin{definition}[Existential-join Decomposition]\label{def:decomposition}
Consider a CQ $q$ over a schema $\R$, and a set $\dep$ of TGDs over
$\R$. An \emph{existential-join decomposition} of $q$ w.r.t. $\dep$
is a partition $P$ of $\body{q}$ such that the following holds: if a
variable $V \in \var{q}$ occurs in $\body{q}$ only at positions
which are affected w.r.t. $\tup{\sigma,\dep}$ for some $\sigma \in
\dep$, then there exists $S \in P$ such that $V \in \var{S}$ and $V
\not\in \var{P \setminus S}$.
We say that $P$ is \emph{optimal} if there is no $S \in P$ such that
$(P \setminus S) \cup \{S_1,S_2\}$, where $\{S_1,S_2\}$ is a
partition of $S$, is an existential-join decomposition of $q$ w.r.t.
$\dep$. \hfill\markfull
\end{definition}

\begin{algorithm}[t]
\caption{The algorithm $\mathsf{XRewriteParallel}$
\label{alg:parallel}} \small
    \KwIn{a CQ $q$ over a schema $\R$ and a set $\Sigma$ of TGDs over $\R$}
    \KwOut{the perfect rewriting of $q$ w.r.t.~$\Sigma$}
    \vspace{2mm}
    \tcc{\textrm{decomposition step}}
    $\tup{\{q_1,\ldots,q_m\},\rho} := \mathsf{decompose}(q,\dep)$\;
    \tcc{\textrm{parallel step}}
    \ForPar{$q \in  \{q_1,\ldots,q_m\}$}{
        $Q_q := \mathsf{XRewrite}(q,\dep)$\;
        }
    \tcc{\textrm{merging step}}
    $\Pi := Q_{q_1} \cup \ldots \cup Q_{q_m} \cup \{\rho\}$\;
    $Q_{\textsc{fin}} := \mathsf{unfold}(\tup{\Pi,p})$\;
    \Return{$Q_{\textsc{fin}}$}
\end{algorithm}

It is easy to see that the optimal existential-join decomposition of
a CQ w.r.t.~a set of TGDs is unique.
We are now ready to describe the parallel version of
$\mathsf{XRewrite}$. As already said, the key idea hinges on the
fact that each component of an existential-join decomposition can be
rewritten independently, and the final rewriting is obtained by
merging the obtained rewritings via a reconciliation (Datalog) rule.
Consider a CQ $q$ over a schema $\R$ and a set $\dep$ of TGDs over
$\R$; for notational convenience, we assume that $p(\insX)$ is the
head-atom of $q$, and $\var{q} = \{V_1,\ldots,V_n\}$. The parallel
version of $\mathsf{XRewrite}$, called $\mathsf{XRewriteParallel}$,
which is depicted in Algorithm~\ref{alg:parallel}, is consisting of
the following three steps:

\begin{itemize}
\item [\textbf{Decomposition Step.}] The optimal existential-join decomposition
$P$ of $q$ w.r.t. $\dep$ is computed; let $P = \{C_1,\ldots,C_m\}$.
Then, for each $i \in [m]$, we construct the CQ
\[
q_i\ :\ p_i(f_i(V_1,\ldots,V_n))\ \la\ C_i,
\]
where $p_i$ is an auxiliary predicate not occurring in $\R$, and
$f_i(V_1,\ldots,V_n)$ is defined as the tuple
$\tup{V_{j_1},\ldots,V_{j_k}}$, where $1 \leqslant k \leqslant n$,
such that \emph{(i)} $1 \leqslant j_1 < \ldots < j_k \leqslant n$,
and \emph{(ii)} for each $\ell \in [k]$, $V_{j_\ell} \in \var{C_i}
\cap (\insX \cup (\var{q} \setminus \var{C_i}))$.
Intuitively, $f_i(V_1,\ldots,V_n)$ is obtained from
$\tup{V_1,\ldots,V_n}$ by keeping only the variables of $\var{C_i}$
which are also distinguished variables of $q$, or they occur in a
component other than $C_i$.
Moreover, the reconciliation (Datalog) rule
\[
\rho\ :\ p(\insX)\ \la\
p_1(f_1(V_1,\ldots,V_n)),\ldots,p_m(f_m(V_1,\ldots,V_n))
\]
is constructed. The decomposition step is carried out by the
$\mathsf{decompose}$ function, which accepts as input the query $q$
and the set of TGDs $\dep$, and returns as output the pair
$\tup{\{q_1,\ldots,q_m\},\rho}$.\\

\item[\textbf{Parallel Step.}] We construct in $m$ parallel computations the perfect
rewriting $Q_q$ of each CQ $q \in \{q_1,\ldots,q_m\}$ w.r.t.~$\dep$
by exploiting the rewriting algorithm $\mathsf{XRewrite}$.\\

\item[\textbf{Merging Step.}] It is not difficult to verify that $\tup{\Pi,p}$, where
$\Pi = (Q_{q_1} \cup \ldots \cup Q_{q_m} \cup \{\rho\})$, is a
non-recursive Datalog query. It is well-known that such a query can
be \emph{unfolded} into a (finite) UCQ; for more details see,
e.g.,~\cite{AbHV95}. The perfect rewriting of the input CQ $q$
w.r.t.~$\dep$ is the UCQ obtained by unfolding $\tup{\Pi,p}$, which
is carried out by the $\mathsf{unfold}$ function.
\end{itemize}

It is easy to see that $\mathsf{XRewriteParallel}$ terminates under
linear and sticky sets of TGDs. The decomposition step terminates
since $q$ and $\dep$ are finite, the parallel step terminates since
$\mathsf{XRewrite}$ terminates under linear and sticky sets of TGDs,
and the merging step terminates since the unfolding of a finite
non-recursive Datalog query is finite.

\begin{theorem}\label{the:tgdrewriteparallel-termination}
Consider a CQ $q$ over a schema $\R$, and a set $\dep$ of TGDs over
$\R$. If $\dep \in \mathsf{LINEAR}$ or $\dep \in \mathsf{STICKY}$,
then $\mathsf{XRewriteParallel}(q,\dep)$ terminates.
\end{theorem}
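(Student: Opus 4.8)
The plan is to argue termination separately for each of the three phases of $\mathsf{XRewriteParallel}$ — decomposition, parallel rewriting, and merging — since the algorithm executes them in sequence and halts once all three are complete. Throughout I would rely on the fact that $q$, $\R$, and $\dep$ are all finite objects.

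First I would treat the decomposition step. Computing the set of affected positions w.r.t.~each pair $\tup{\sigma,\dep}$ is an inductive fixpoint over the finitely many positions of $\R$, and therefore terminates. The optimal existential-join decomposition is then a partition of the finite set $\body{q}$; since there are only finitely many partitions of a finite set, and checking whether a candidate partition satisfies Definition~\ref{def:decomposition} (and whether it admits a proper refinement into a finer decomposition) involves only the finite data just computed, the $\mathsf{decompose}$ function returns $\tup{\{q_1,\ldots,q_m\},\rho}$ after finitely many steps.

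Next, the parallel step. Each component $C_i$ is a subset of $\body{q}$, so each $q_i$ is a CQ over $\R$ (its head predicate $p_i$ is auxiliary and does not occur in $\dep$, hence plays no role in the rewriting process). Thus $\mathsf{XRewrite}(q_i,\dep)$ is well-defined, and by Theorem~\ref{the:tgdrewrite-termination} it terminates whenever $\dep \in \mathsf{LINEAR}$ or $\dep \in \mathsf{STICKY}$, producing a finite UCQ $Q_{q_i}$. As there are only $m$ components, with $m$ finite, the parallel step terminates (independently of whether the $m$ rewriters run concurrently or sequentially).

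Finally, the merging step unfolds $\tup{\Pi,p}$ where $\Pi = Q_{q_1} \cup \ldots \cup Q_{q_m} \cup \{\rho\}$. Each $Q_{q_i}$ is a finite UCQ and $\rho$ is a single rule, so $\Pi$ is a finite non-recursive Datalog program, and the unfolding of such a program into a UCQ is finite; hence $\mathsf{unfold}$ terminates. I do not anticipate a genuine obstacle here. The one point that deserves care is confirming that $\tup{\Pi,p}$ is indeed non-recursive, so that its unfolding halts: this follows because the auxiliary predicates $p_1,\ldots,p_m$ occur only in the heads of the rules of the respective $Q_{q_i}$ and in the body of $\rho$, while $p$ occurs only in the head of $\rho$, which yields a strict stratification with no cyclic predicate dependency.
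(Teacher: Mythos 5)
Your proposal is correct and follows essentially the same route as the paper, which likewise argues termination phase by phase: the decomposition step terminates because $q$ and $\dep$ are finite, the parallel step by appeal to Theorem~\ref{the:tgdrewrite-termination}, and the merging step because the unfolding of a finite non-recursive Datalog query is finite. Your additional check that $\tup{\Pi,p}$ is indeed non-recursive is a reasonable elaboration of a point the paper takes for granted.
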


The soundness and completeness of $\mathsf{XRewriteParallel}$
follows by construction. Instead of giving a formal proof (which is
rather long and uninteresting), we intuitively explain why
$\mathsf{XRewriteParallel}$ is sound and complete.
For brevity, given a CQ $q$ and a set $\dep$ of TGDs, we denote by
$q_{\dep}^{\shortparallel}$ the rewritten query
$\mathsf{XRewriteParallel}(q,\dep)$.
It is possible to show that $q_{\dep}^{\shortparallel}$ and $q_\dep$
are the same (modulo bijective variable renaming), which immediately
implies the soundness and completeness of
$\mathsf{XRewriteParallel}$. Let $P = \{C_1,\ldots,C_m\}$ be the
optimal existential-join decomposition of $q$ w.r.t. $\dep$.
Each rewriting step applied during the execution of
$\mathsf{XRewriteParallel}(q,\dep)$ corresponds to a rewriting step
of $\mathsf{XRewrite}(q,\dep)$. This holds since, by construction of
each $q_i : p_i(f_i(V_1,\ldots,V_n)) \la C_i$, where
$V_1,\ldots,V_n$ are the variables of $\var{q}$, a variable $V \in
\var{C_i}$ which is shared in $q$ is also shared in $q_i$. More
precisely, if $V$ is a distinguished variable of $q$, or occurs in a
component of $P$ other than $C_i$, then it also occurs in
$\head{q_i}$ and thus is shared in $q_i$; otherwise, if it occurs
only in $C_i$, then is trivially shared in $q_i$ since, by
hypothesis, it occurs more than once in $C_i$.
Conversely, each rewriting step applied during the execution of
$\mathsf{XRewrite}(q,\dep)$ corresponds to a rewriting step of
$\mathsf{XRewriteParallel}(q,\dep)$. Towards a contradiction, assume
that the above claim does not hold. This implies the during the
execution of $\mathsf{XRewriteParallel}(q,\dep)$ a valid rewriting
step is not applied due to a missing factorization step. But this
implies that a variable which occurs in $\body{q}$ only at positions
which are affected w.r.t. $\tup{\sigma,\dep}$, for some $\sigma \in
\dep$, appears in more than one components of $P$ which is a
contradiction.
Notice that the reconciliation rule preserves the joins among
different components of $P$ and the claim follows:

\begin{theorem}\label{the:tgdrewriteparallel-correctness}
Consider a CQ $q$ over a schema $\R$, a database for $\R$, and a set
$\dep$ of TGDs over $\R$. It holds that,
$q_{\dep}^{\shortparallel}(D) = \ans{q}{D}{\dep}$.
\end{theorem}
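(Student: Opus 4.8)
The plan is to reduce the statement to the correctness of the sequential algorithm. By Theorem~\ref{the:TGD-rewrite-sound-complete} we already know that $q_\dep(D) = \ans{q}{D}{\dep}$ for every database $D$, so it suffices to prove that $q_{\dep}^{\shortparallel}$ and $q_\dep$ are the same UCQ modulo bijective variable renaming; the conclusion $q_{\dep}^{\shortparallel}(D) = \ans{q}{D}{\dep}$ then follows at once. To make this precise I would first unwind the merging step: a CQ belongs to $q_{\dep}^{\shortparallel}$ exactly when it arises from unfolding $\tup{\Pi,p}$, i.e.\ when it is obtained by selecting, for each component $C_i$ of the optimal existential-join decomposition $P = \{C_1,\ldots,C_m\}$, one CQ from $Q_{q_i}$ and gluing these selections along the reconciliation rule $\rho$, which identifies precisely the variables that $\rho$ shares among the $p_i$-atoms (the distinguished variables of $q$ together with the cross-component variables retained by each $f_i$). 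The whole argument then amounts to a bidirectional, step-by-step correspondence between the rewriting/factorization steps performed on the components and those performed by $\mathsf{XRewrite}$ on $q$ itself.

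For the direction $q_{\dep}^{\shortparallel} \subseteq q_\dep$ I would show that every rewriting or factorization step carried out on some $q_i$ can be replayed on $q$. The key invariant is that, by construction of $q_i : p_i(f_i(V_1,\ldots,V_n)) \la C_i$, any variable of $\var{C_i}$ that is shared in $q$ remains shared in $q_i$: if it is distinguished in $q$ or occurs in another component it is kept in $\head{q_i}$ by $f_i$, and otherwise it already occurs more than once inside $C_i$. Consequently the applicability condition of Definition~\ref{def:applicability} and the factorizability condition of Definition~\ref{def:factorizability} are satisfied on a subset of $C_i$ exactly when they are satisfied on the corresponding subset of $\body{q}$, so each derivation inside a component lifts to a legal derivation on the full query; after reconciliation along $\rho$ the resulting CQ coincides (modulo renaming) with one produced by $\mathsf{XRewrite}(q,\dep)$.

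For the converse direction $q_\dep \subseteq q_{\dep}^{\shortparallel}$ I would argue by contradiction, following the informal explanation preceding the statement. Suppose some legal step of $\mathsf{XRewrite}(q,\dep)$ is not reproduced in the parallel run. Since the decomposition keeps every shared variable shared (by the previous paragraph), the only possible source of a mismatch is a factorization step that the component-local runs fail to perform --- a ``missing factorization''. But a factorization step unifies atoms through a variable occurring only at the existential position $\pi_{\exists}(\sigma)$, which is affected \wrt $\tup{\sigma,\dep}$ by clause~(1) of Definition~\ref{def:affected-postions}; hence the relevant variable occurs in $\body{q}$ only at affected positions. If this factorization spanned two components, that variable would appear in more than one component of $P$, contradicting Definition~\ref{def:decomposition}, which places any variable occurring only at affected positions into a single component $S$ with $V \not\in \var{P \setminus S}$. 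Therefore every step on $q$ is matched component-locally, and $\rho$ reinstates exactly the surviving cross-component joins, which by the same token sit only at non-affected, constant-bearing positions.

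The main obstacle I anticipate is not either direction in isolation but the bookkeeping that ties them together: one must verify that the ``choose one CQ per component and unfold via $\rho$'' operation reproduces the cross-component joins of the original rewriting faithfully, including the renaming of the new variables introduced independently in the different $Q_{q_i}$ and the precise role of the auxiliary predicates $p_i$ and selector tuples $f_i$ in re-identifying the shared variables. Once this correspondence is established in both directions, $q_{\dep}^{\shortparallel} = q_\dep$ modulo bijective variable renaming, and combining this with Theorem~\ref{the:TGD-rewrite-sound-complete} yields $q_{\dep}^{\shortparallel}(D) = \ans{q}{D}{\dep}$, as required.
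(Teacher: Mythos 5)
Your proposal is correct and follows essentially the same route as the paper: the paper likewise reduces the claim to showing $q_{\dep}^{\shortparallel} = q_\dep$ (modulo bijective variable renaming), argues one direction via the invariant that every variable of $\var{C_i}$ shared in $q$ remains shared in $q_i$ by construction of $f_i$, and the other by the same contradiction — a missing factorization would force a variable occurring only at affected positions to span two components, violating the definition of an existential-join decomposition — before noting that $\rho$ preserves the cross-component joins. The paper itself stops at this informal level and does not carry out the unfolding bookkeeping you flag as the remaining obstacle.
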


\section{Optimize the Rewriting for Linear TGDs}\label{sec:ucq-optimization}

Linearity of TGDs allows us to effectively identify atoms in the
body of a query which are logically implied (w.r.t. a given set of
TGDs) by other atoms in the same query. By exploiting this fact, we
propose a technique, called \emph{query elimination}, aiming at
optimizing the obtained rewritten query under the class of linear
TGDs. As we shall see in the experimental section, query elimination
(which is an additional step during the execution of
$\mathsf{XRewrite}$) reduces \emph{(i)} the number of CQs of the
perfect rewriting, \emph{(ii)} the number of atoms in each query of
the rewriting, and \emph{(iii)} the number of joins to be executed.
Let us first give a motivating example which exposes the key idea
underlying query elimination, and also illustrates its impact on the
final rewriting.

\subsection{A Motivating Example}\label{sec:motivating-example}
%


Consider the set $\dep$ of linear TGDs and the CQ $q$ given in
Section~\ref{sec:xrewriteparalle-informal-description}.
%
%
The complete rewriting of $q$ w.r.t.~$\dep$ contains 60 conjunctive
queries executing 300 joins. However, by exploiting the set of TGDs,
it is possible to eliminate redundant atoms in the generated
queries, and thus reduce the size of the final rewriting.
For example, it is possible to eliminate from the given query $q$
the atom $\mathit{finInstrument(A)}$ since, due to the existence of
the TGDs $\sigma_2$ and $\sigma_8$ in $\dep$, if the atom
$\mathit{stockPortfolio}(B,A,D)$ is satisfied, then immediately the
atom $\mathit{finInstrument(A)}$ is also satisfied.
Notice that by eliminating a redundant atom from a query, we also
eliminate all the queries that are generated starting from it during
the rewriting process.
Moreover, due to the TGD $\sigma_3$, if the atom
$\mathit{listComponent(A,C)}$ in $q$ is satisfied, then the atom
$\mathit{finIndex(C,G,H)}$ is also satisfied, and therefore can be
eliminated.
Finally, due to the TGD $\sigma_1$, if the atom
$\mathit{stockPortfolio}(B,A,D)$ is satisfied, then the atom
$\mathit{company(B,E,F)}$ is also satisfied, and hence the latter is
redundant.
The query that has to be considered as input of the rewriting
process is therefore
\[
\begin{array}{rcl}
p(A,B,C) &\leftarrow& \mathit{stockPortfolio}(B,A,D),
\mathit{listComponent}(A,C)
\end{array}
\]
which produces a perfect rewriting containing the following two
conjunctive queries executing only two joins:
\[
\begin{array}{rcl}
p(A,B,C) &\leftarrow& \mathit{listComponent}(A,C), \mathit{stockPortfolio}(B,A,D)\\
p(A,B,C) &\leftarrow& \mathit{listComponent}(A,C),
\mathit{hasStock}(A,B).
\end{array}
\]
It is evident that by eliminating redundant atoms from a query as
described above, we reduce the number of CQs of the perfect
rewriting, the number of atoms in each query of the rewriting, and
the number of joins to be executed.

\subsection{Atom Coverage}\label{sec:atom-coverage}

Before formalizing the idea described above, let us first introduce
some auxiliary technical notions.

\begin{definition}[Propagation Graph]\label{def:propagation-graph}
Consider a set $\dep$ of TGDs over a schema $\R$. The
\emph{propagation graph} of $\dep$, denoted $\mathit{PG}(\dep)$, is
a labeled directed multigraph $\tup{N,E,\lambda}$, where $N$ is the
node set, $E$ is the edge set, and $\lambda$ is a labeling function
$E \ra \dep$. The node set is the set of positions of $\R$. If there
exists $\sigma \in \dep$ such that the same variable appears at
position $\pi_b$ in $\body{\sigma}$ and at position $\pi_h$ in
$\head{\sigma}$, then the edge $e = \tup{\pi_b,\pi_h}$ belongs to
$E$ with $\lambda(e) = \sigma$; no other edges belong to $E$.
\hfill\markfull
\end{definition}

The propagation graph of a set of linear TGDs encodes all the
possible ways of propagating a term from one position to another
position during the chase. More precisely, the existence of a path
from $\pi_1$ to $\pi_2$ implies that there may be a way to propagate
a term from $\pi_1$ to $\pi_2$ during the construction of the chase.
Given a path $P = v_1 \ldots v_n$, where $n > 1$, of
$\mathit{PG}(\dep) = \tup{N,E,\lambda}$, we say that $P$ is
\emph{minimal} if the following condition is satisfied: there is no
$1 < i < n$ and $0 < j < i$ such that $v_{i-j} \ldots v_i = v_i
\ldots v_{i+j}$ and $\lambda(\tup{v_{i-j},v_{i-j+1}}) \ldots
\lambda(\tup{v_{i-1},v_{i}}) = \lambda(\tup{v_{i},v_{i+1}}) \ldots
\lambda(\tup{v_{i+j-1},v_{i+j}})$. The minimality condition
guarantees that cycles occurring in $\mathit{PG}(\dep)$ are
traversed at most once.

\begin{figure}[t]
\begin{center}
\includegraphics[]{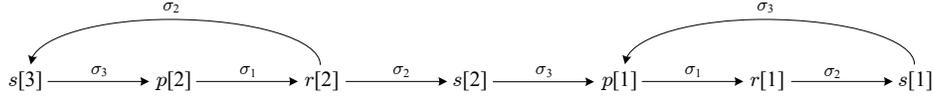}
\caption{Propagation graph for
Example~\ref{exa:propagation-graph}.}\label{fig:propagation-graph}
\end{center}
\end{figure}

\begin{example}\label{exa:propagation-graph}
Consider the set $\dep$ of linear TGDs consisting of
\[
\sigma_1 : p(X,Y) \rightarrow \exists Z \, r(X,Y,Z) \quad \sigma_2 :
r(X,Y,c) \rightarrow s(X,Y,Y) \quad \sigma_3 : s(X,X,Y) \rightarrow
p(X,Y).
\]
The propagation graph of $\dep$ (without the isolated node $r[3]$)
is depicted in Figure~\ref{fig:propagation-graph}.
The path $P = v_1 \ldots v_6$, where $v_1v_2v_3 = v_4v_5v_6 =
s[3]p[2]r[2]$ is minimal. However, the path $P' = v_1 \ldots v_9$,
where $v_1v_2v_3 = v_4v_5v_6 = v_7v_8v_9 = s[3]p[2]r[2]$ is not
minimal since the minimality condition is violated with $i = 4$ and
$j = 3$; clearly, $v_1v_2v_3v_4 = v_4v_5v_6v_7 = s[3]p[2]r[2]s[3]$
and
$\lambda(\tup{v_1,v_2})\lambda(\tup{v_2,v_3})\lambda(\tup{v_3,v_4})
= \lambda(\tup{v_4,v_5})\lambda(\tup{v_5,v_6})\lambda(\tup{v_6,v_7})
= \sigma_3 \sigma_2 \sigma_1$, which intuitively means that the
cycle $s[3]p[2]r[2]s[3]$ occurs in $P'$ twice. \hfill\markfull
\end{example}

Unfortunately, the existence of a path $P$ from $\pi_1$ to $\pi_2$
does not guarantee the propagation of a term from $\pi_1$ to
$\pi_2$.
For example, consider the TGDs $\sigma_1 : r(X,Y) \ra \exists Z \,
t(Y,Z)$ and $\sigma_2 : t(X,X) \ra s(X)$. It is easy to verify that,
although in $\mathit{PG}(\{\sigma_1,\sigma_2\})$ the path
$r[2]t[1]s[1]$ exists, there is no way to propagate a term from
$r[2]$ to $s[1]$ since the atom obtained by applying $\sigma_1$ does
not trigger $\sigma_2$.
Thus, the existence of such a path $P$ guarantees the propagation of
a term from $\pi_1$ to $\pi_2$ providing that, for each pair of
consecutive edges $e = \tup{\pi,\pi'}$ and $e' = \tup{\pi',\pi''}$
of $P$, where $e$ and $e'$ are labeled by the TGDs $\sigma$ and
$\sigma'$, respectively, the atom obtained during the chase by
applying $\sigma$ triggers $\sigma'$.
It is easy to verify that a natural sufficient condition for the
latter is as follows: for each pair of consecutive edges $e$ and
$e'$ of $P$ which are labeled by $\sigma$ and $\sigma'$,
respectively, there exists a homomorphism $h$ such that
$h(\body{\sigma'}) \subseteq \head{\sigma}$; notice that this
condition heavily relies on the linearity of the TGDs.
A sequence $\sigma_1,\ldots,\sigma_n$ of linear TGDs, where $n > 1$,
is called \emph{tight} if, for each $i \in [n-1]$, there exists a
homomorphism $h_i$ such that $h_i(\body{\sigma_{i+1}}) =
\head{\sigma_i}$; a sequence consisting of a single TGD is trivially
tight.
Furthermore, such a sequence is \emph{compatible} to an atom
$\atom{a}$ if there exists a homomorphism $h$ such that
$h(\body{\sigma_1}) = \atom{a}$.
We are now ready to introduce the central notion of \emph{atom
coverage}. For brevity, given an atom $\atom{a}$ and a term $t$,
$\mathit{pos}(\atom{a},t)$ is the set of positions at which $t$
occurs in $\atom{a}$; e.g., if $\atom{a} = r(X,Y,X)$, then
$\mathit{pos}(\atom{a},X) = \{r[1],r[3]\}$ and
$\mathit{pos}(\atom{a},Y) = \{r[2]\}$.
Moreover, given a CQ $q$ and an atom $\atom{a} \in \body{q}$, let
$T(q,\atom{a})$ be the maximal subset of $\adom{\atom{a}}$ which
contains only constants occurring in $q$ and variables which are
shared in $q$; e.g., if $q$ is the CQ $p(A) \la r(A,B,c)$, where $c
\in \dom$, then $T(q,r(A,B,c)) = \{A,c\}$.

\begin{definition}[Atom Coverage]\label{def:atom-coverage}
Consider a CQ $q$ over a schema $\R$, and a set $\dep \in
\mathsf{LINEAR}$ over $\R$. Let $\atom{a}$ and $\atom{b}$ be atoms
of $\body{q}$.
We say that $\atom{a}$ \emph{covers} $\atom{b}$ w.r.t.~$q$ and
$\dep$, written as $\atom{a} \prec_{\dep}^{q} \atom{b}$, if the
following conditions are satisfied:
\begin{enumerate}
\item $T(q,\atom{b}) \subseteq \mathit{terms}(\atom{a})$,
and

\item there exists a sequence $S = \sigma_1,\ldots,\sigma_m$ of TGDs
of $\dep$, for $m \geqslant 1$, such that:
\begin{enumerate}
\item $S$ is tight and compatible to $\atom{a}$;

\item for each $t \in T(q,\atom{b})$ and $\pi \in
\mathit{pos}(\atom{b},t)$, there exists a minimal path
$\pi_1\pi_2\ldots\pi_{m+1}$ in $\mathit{PG}(\dep)$ such that $\pi_1
\in \mathit{pos}(\atom{a},t)$, $\pi_{m+1} = \pi$ and
$\lambda(\tup{\pi_{j},\pi_{j+1}}) = \sigma_j$, for each $j \in [m]$.
\end{enumerate}
\end{enumerate}
The \emph{cover set} of an atom $\atom{a} \in \body{q}$ w.r.t.~$q$
and $\dep$, denoted $\mathit{cover}(\atom{a},q,\dep)$, is the set
$\{\atom{b}~|~\atom{b} \in \body{q} \setminus \{\atom{a}\}
\textrm{~and~} \atom{b} \prec_{\dep}^{q} \atom{a}\}$; when $q$ and
$\dep$ are obvious from the context, we shall denote the above set
as $\mathit{cover}(\atom{a})$. \hfill\markfull
\end{definition}

Intuitively speaking, the first condition of atom coverage ensures
that by removing $\atom{b}$ from $q$ we do not loose any constant,
and also all the joins between $\atom{b}$ and the other atoms of
$\body{q}$, except $\atom{a}$, are preserved. The second condition
guarantees that $\atom{b}$ is logically implied (w.r.t. $\dep$) by
$\atom{a}$, and thus can be safely eliminated.
The choice of considering only minimal paths in condition 2(b) is
crucial in order to be able to explicitly construct the cover set of
an atom without considering infinite paths. Notice that by
considering infinite paths we compute exactly the same cover sets.
More precisely, if $\atom{a} \curlyeqprec_{\dep}^{q} \atom{b}$
denotes the fact that $\atom{a}$ covers $\atom{b}$ w.r.t. $q$ and
$\dep$ if we consider infinite paths in
Definition~\ref{def:atom-coverage}, then it is easy to verify that
$\atom{a} \curlyeqprec_{\dep}^{q} \atom{b}$ implies $\atom{a}
\prec_{\dep}^{q} \atom{b}$. In fact, if $\atom{a}
\curlyeqprec_{\dep}^{q} \atom{b}$ because of a non-minimal path $P$,
then we can construct a minimal path $P'$ from $P$, by eliminating
the repeated cycles, which is a witness for the fact that $\atom{a}
\prec_{\dep}^{q} \atom{b}$.

\begin{lemma}\label{lem:atom-coverage}
Consider a CQ $q$ over a schema $\R$, and a set $\dep \in
\mathsf{LINEAR}$ over $\R$. Suppose that $\atom{a} \prec_{\dep}^{q}
\atom{b}$, where $\{\atom{a},\atom{b}\} \subseteq \body{q}$, and
$q'$ is obtained from $q$ by eliminating $\atom{b}$. Then, $q'(I)
\subseteq q(I)$, for each instance $I$ that satisfies $\dep$.
\end{lemma}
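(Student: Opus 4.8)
The plan is to establish the containment pointwise. Fix an arbitrary tuple $\tuple{t} \in q'(I)$, witnessed by a homomorphism $h$ with $h(\body{q'}) \subseteq I$ and $h(\insV) = \tuple{t}$, where $\insV$ are the distinguished variables of $q$; note first that $q'$ is a well-formed CQ with the \emph{same} head as $q$, since every variable shared in $q$ that occurs in $\atom{b}$ lies in $T(q,\atom{b})$ and hence, by condition~(1) of atom coverage ($T(q,\atom{b}) \subseteq \adom{\atom{a}}$), also occurs in $\atom{a} \in \body{q'}$; in particular every distinguished variable remains safe. Since $q$ and $q'$ differ only in $\atom{b}$, it suffices to exhibit a homomorphism $h' \supseteq h$ that maps $\atom{b}$ into $I$ while agreeing with $h$ on the terms $\atom{b}$ shares with the rest of $q$; this gives $h'(\body{q}) \subseteq I$ and $h'(\insV) = \tuple{t}$, hence $\tuple{t} \in q(I)$.

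First I would use the covering witness to locate the image atom. As $\atom{a} \in \body{q'}$ we have $\atom{c}_0 := h(\atom{a}) \in I$. Let $S = \sigma_1,\ldots,\sigma_m$ be the tight sequence compatible to $\atom{a}$ granted by condition~(2), so there is a homomorphism $g$ with $g(\body{\sigma_1}) = \atom{a}$; since the TGDs are linear and in normal form, each body and each head is a single atom. I would then ``replay'' $S$ inside $I$: because $(h \circ g)(\body{\sigma_1}) = \atom{c}_0 \in I$ and $I \models \sigma_1$, there is an atom $\atom{c}_1 = h_1'(\head{\sigma_1}) \in I$ for some extension $h_1' \supseteq (h \circ g)|_{\insX_1}$ (with $\insX_1$ the frontier of $\sigma_1$); using tightness, $h_i(\body{\sigma_{i+1}}) = \head{\sigma_i}$ gives $(h_i' \circ h_i)(\body{\sigma_{i+1}}) = \atom{c}_i \in I$, so $\sigma_{i+1}$ is applicable and $I \models \sigma_{i+1}$ yields $\atom{c}_{i+1} \in I$. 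This produces a chain $\atom{c}_1,\ldots,\atom{c}_m \in I$.

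The heart of the argument, and the step I expect to be the main obstacle, is the positional bookkeeping showing that $\atom{c}_m$ is a legitimate image of $\atom{b}$. Fixing $t \in T(q,\atom{b})$ and $\pi \in \mathit{pos}(\atom{b},t)$, condition~2(b) supplies a path $\pi_1 \ldots \pi_{m+1}$ with $\pi_1 \in \mathit{pos}(\atom{a},t)$ (nonempty by condition~(1)), $\pi_{m+1} = \pi$, and $\lambda(\tup{\pi_j,\pi_{j+1}}) = \sigma_j$. By induction on $j$ I would show that the term at position $\pi_{j+1}$ of $\atom{c}_j$ equals the term at position $\pi_j$ of $\atom{c}_{j-1}$: each edge $\tup{\pi_j,\pi_{j+1}}$ corresponds to a frontier variable of $\sigma_j$ occurring at $\pi_j$ in its body and $\pi_{j+1}$ in its head, and being universally quantified this variable is fixed by the restriction of the applying homomorphism, so its value is carried unchanged from the body image into the head image. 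Chaining these equalities from $\pi_1$, where $g$ maps the body to $t$ and hence $\atom{c}_0$ carries $h(t)$, down to $\pi_{m+1} = \pi$, I obtain that $\atom{c}_m$ carries $h(t)$ at $\pi$; in particular the path endpoint forces $\atom{c}_m$ to have the same predicate as $\atom{b}$. Note that the single existentially quantified variable of each $\sigma_i$ never labels a propagation edge, so the fresh values created along the replay cannot land on these tracked positions.

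Finally I would assemble $h'$. The terms of $\atom{b}$ outside $T(q,\atom{b})$ are exactly the variables \emph{not} shared in $q$; each occurs only once in all of $q$, namely at a single position of $\atom{b}$. I define $h'$ to agree with $h$ everywhere except on these private variables, mapping each to the value that $\atom{c}_m$ carries at its unique position. Since such variables occur nowhere else, $h'(\body{q'}) = h(\body{q'}) \subseteq I$ and $h'(\insV) = h(\insV) = \tuple{t}$ (distinguished variables being shared, hence untouched), while by the previous paragraph $h'(\atom{b}) = \atom{c}_m \in I$. Thus $h'(\body{q}) \subseteq I$ with $h'(\insV) = \tuple{t}$, giving $\tuple{t} \in q(I)$, as required. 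The difficulty is concentrated entirely in the positional induction of the third paragraph, where linearity, the normal form, and the exact correspondence between propagation-graph edges and shared frontier variables must all be combined.
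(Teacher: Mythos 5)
Your proof is correct and follows essentially the same route as the paper's: both use the tight compatible sequence together with the condition-2(b) paths to produce an atom of $I$ that agrees with $h$ on $T(q,\atom{b})$, and then extend $h$ on the non-shared variables of $\atom{b}$ to map $\atom{b}$ onto that atom. The only organizational difference is that the paper first composes $\sigma_1,\ldots,\sigma_m$ by resolution into a single linear TGD $\sigma$ with $\dep \models \sigma$ (Claim~\ref{cla:atom-coverage}) and then invokes $I \models \sigma$ once, whereas you replay the TGDs one at a time inside $I$ and carry out the positional bookkeeping explicitly --- the same argument in a different packaging.
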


\begin{proof}
Fix a tuple of constants $\tuple{t}$. Suppose there exists a
homomorphism $h$ such that $h(\body{q'}) \subseteq I$ and $h(\insV)
= \tuple{t}$, where $\insV$ are the distinguished variables of $q'$.
We need to show that there exists a homomorphism $\hat{h}$ such that
$\hat{h}(\body{q}) \in I$ and $\hat{h}(\insV) = \tuple{t}$. Let us
first give an auxiliary technical claim; its proof can be found in
Section~\ref{appsec:atom-coverage-claim}.

\begin{claim}\label{cla:atom-coverage}
There exists a linear TGD $\sigma$ over $\R$ such that $\dep \models
\sigma$, a substitution $\lambda$, and a substitution $\mu$ which is
the identity on $\var{\body{q'}}$, such that $\lambda(\body{\sigma})
= \atom{a}$ and $\lambda(\head{\sigma}) = \mu(\atom{b})$.
\end{claim}

Since $\atom{a} \in \body{q'}$, Claim~\ref{cla:atom-coverage}
implies that $h(\lambda(\body{\sigma})) \in I$. Recall that $\dep
\models \sigma$, and thus $I \models \sigma$. This implies that
there exists $h' \supseteq h|_{\insX}$, where $\insX$ are the
variables that appear both in $\lambda(\body{\sigma})$ and
$\lambda(\head{\sigma})$, such that $h'(\lambda(\head{\sigma})) \in
I$. Therefore, $h'(\lambda(\head{\sigma})) = h'(\mu(\atom{b}))$.
Since $\mu$ is the identity on $\var{\body{q'}}$, $h$ and $h' \circ
\mu$ are compatible. Consequently, the substitution $\rho = h \cup
(h' \circ \mu)$ maps $\body{q}$ to $I$, and $\rho(\insV) = h(\insV)=
\tuple{t}$. The claim follows with $\hat{h} = \rho$.
\end{proof}

The above technical result provides the logical underpinning for the
query elimination technique. More precisely,
Lemma~\ref{lem:atom-coverage} suggests that, for each CQ $q$
obtained by applying the rewriting step of $\mathsf{XRewrite}$, the
atoms of $\body{q}$ that are logically implied (w.r.t.~$\dep$) by
some other atom of $\body{q}$ can be eliminated, and the obtained
subquery is equivalent to $q$ w.r.t. query answering.

\begin{example}\label{exa:query-elimination}
Consider the set $\dep$ constituted by the linear TGDs
\[
\begin{array}{rcl}
\sigma_1\ :\ t(X,Y) \ra \exists Z \, r(X,Y,Z), &\qquad& \sigma_3\ :\ s(X,Y,Z) \ra t(Z,X),\\
\sigma_2\ :\ r(X,Y,Z) \ra \exists W \, s(Y,W,X), &\qquad& \sigma_4\
:\ t(X,Y) \ra s(X,Y,Y).
\end{array}
\]
Let also $q$ be the CQ
\[
p(A)\ \la\
\underbrace{t(A,B)}_{\atom{a}},\underbrace{r(A,B,C)}_{\atom{b}},\underbrace{s(A,B,B)}_{\atom{c}}.
\]
By Definition~\ref{def:atom-coverage}, $\mathit{cover}(\atom{a}) =
\{\atom{b}\}$, $\mathit{cover}(\atom{b}) = \{\atom{a}\}$ and
$\mathit{cover}(\atom{c}) = \{\atom{a},\atom{b}\}$.
Thus, we can either eliminate $\atom{a},\atom{c}$ and get the CQ
$p(A) \la r(A,B,C)$, or eliminate $\atom{b},\atom{c}$ and get the CQ
$p(A) \la t(A,B)$. Both queries are equivalent to $q$ (for query
answering purposes). \hfill\markfull
\end{example}

\begin{algorithm}[t]
\caption{The algorithm $\mathsf{eliminate}$ \label{alg:eliminate}}
\small
    \KwIn{a CQ $q$, an elimination strategy $S$ for $q$, and a set $\dep$ of linear TGDs}
    \KwOut{the set of eliminable atoms from $\body{q}$ w.r.t.~$S$ and $\dep$}
    \vspace{2mm}
    $A := \emptyset$\;
    \ForEach{$i := 1$ {\rm to} $n$}{
        $\atom{a} := S[i]$\;
        \If{$\mathit{cover}(\atom{a},q,\dep) \neq \emptyset$}{
            $A := A \cup \{\atom{a}\}$\;
            \ForEach{$\atom{b} \in \body{q} \setminus A$}{
                $\mathit{cover}(\atom{b},q,\dep) := \mathit{cover}(\atom{b},q,\dep)
                \setminus \{\atom{a}\}$\;
            }
        }
    }
    \Return{$A$}
\end{algorithm}

\subsection{Unique Elimination
Strategy}\label{sec:unique-elimination-strategy}

The outcome of query elimination is not unique, as it heavily
depends on the order that we consider the atoms of the query under
consideration. In the above example, the order
$\atom{a},\atom{b},\atom{c}$ gives the subquery $p(A) \la r(A,B,C)$,
while the order $\atom{b},\atom{a},\atom{c}$ gives the subquery
$p(A) \la t(A,B)$. Before presenting the optimized version of
$\mathsf{XRewrite}$, let us first discuss which elimination strategy
best suits our needs.

An \emph{(atom) elimination strategy} for a CQ $q$ is a permutation
of its body-atoms. By exploiting the cover set of the atoms of
$\body{q}$, we associate to each elimination strategy $S$ for $q$ a
subset of $\body{q}$, denoted $\mathsf{eliminate}(q,S,\dep)$, which
is the set of atoms of $\body{q}$ that can be safely eliminated
(according to $S$) in order to obtain a logically equivalent query
(w.r.t.~$\dep$) with less atoms in its body. Formally,
$\mathsf{eliminate}(q,S,\dep)$ is computed by applying
Algorithm~\ref{alg:eliminate}; given an elimination strategy $S$,
$S[i]$ is the $i$-th element of $S$.
As already observed, given two strategies $S_1$ and $S_2$, in
general, $\mathsf{eliminate}(q,S_1,\dep) \neq
\mathsf{eliminate}(q,S_2,\dep)$. The question that comes up concerns
the choice of the elimination strategy. Since our goal is to
eliminate as many atoms as possible, we should choose an elimination
strategy which maximizes the number of eliminable atoms. However,
the process of finding such a strategy is computationally expensive;
in particular, given a query with $n$ body-atoms, we have to
enumerate the $n!$ different elimination strategies, and for each
one of them, compute the set of eliminable atoms.
Interestingly, such an expensive computation can be avoided since,
regardless of the chosen elimination strategy, always we eliminate
the same number of atoms, i.e., the strategy of eliminating atoms
from the body of a query is unique (modulo the number of the
eliminable atoms).
The proof of this result, that can be found in
Section~\ref{appsec:unique-elimination-lemma}, relies on the fact
that the binary relation $\prec_{\dep}^{q}$ is transitive.

%
%


\begin{lemma}\label{lem:unique-elimination-strategy}
Consider a CQ $q$, and a set $\dep \in \mathsf{LINEAR}$. Let $S_1$
and $S_2$ be arbitrary elimination strategies for $q$. It holds
that, $|\mathsf{eliminate}(q,S_1,\dep)| =
|\mathsf{eliminate}(q,S_2,\dep)|$.
\end{lemma}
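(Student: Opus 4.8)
The plan is to reduce the statement to a purely combinatorial fact about the transitive relation $\prec_{\dep}^{q}$, whose transitivity I take as given (it is exactly the fact the excerpt singles out). First I would turn $\prec_{\dep}^{q}$ into a preorder on $\body{q}$ by adding reflexivity, and consider the equivalence relation $\sim$ defined by $\atom{a} \sim \atom{b}$ iff $\atom{a} \prec_{\dep}^{q} \atom{b}$ and $\atom{b} \prec_{\dep}^{q} \atom{a}$ (mutual coverage). Transitivity guarantees that $\sim$ is an equivalence relation and that $\prec_{\dep}^{q}$ induces a partial order on the set of $\sim$-classes; the corresponding condensation is a finite DAG (antisymmetry follows since a two-way edge between distinct classes would, by transitivity and within-class mutual coverage, collapse them). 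Call a class a \emph{source} if no atom outside it covers any of its members, i.e. it has no incoming edge in the condensation. Crucially, the classes, the condensation DAG, and hence the number $s$ of source classes depend only on $\prec_{\dep}^{q}$, not on any elimination strategy.

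The core of the argument is to show that, for every elimination strategy $S$, Algorithm~\ref{alg:eliminate} eliminates exactly $n - s$ atoms, where $n = |\body{q}|$. First I would record the effect of the cover-set updates: by a straightforward induction on the main loop, when $\atom{a} = S[i]$ is processed its current cover set equals $\{\atom{b} \neq \atom{a} : \atom{b} \prec_{\dep}^{q} \atom{a}\} \setminus A$, so $\atom{a}$ is eliminated iff some atom still alive (not in $A$) covers it. With this in hand I would treat the two kinds of classes. For a source class $C$, the only atoms covering a member of $C$ are the other members of $C$; processing them in whatever relative order, each of the first $|C|-1$ still has a later (hence alive) member of $C$ covering it and is eliminated, while the last member processed has an empty cover set and is kept --- exactly $|C|-1$ eliminations, independent of order.

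For a non-source class $C$, I would argue that all $|C|$ of its atoms are eliminated. In the finite condensation DAG every class has a source ancestor $D_0$; by within-class mutual coverage together with transitivity along a covering chain from $D_0$ to $C$, \emph{every} atom of $D_0$ covers \emph{every} atom of $C$. The single member $\atom{d}_0$ of $D_0$ that the algorithm keeps (the last member of $D_0$ processed) is therefore a permanent element of the cover set of each $\atom{c} \in C$: it covers $\atom{c}$ and, never being eliminated, is never removed from $\mathit{cover}(\atom{c})$. Hence whenever such a $\atom{c}$ is processed its cover set is non-empty and $\atom{c}$ is eliminated. Summing over all classes gives $\sum_{\text{source } C}(|C|-1) + \sum_{\text{non-source } C}|C| = n - s$ eliminated atoms for every strategy, whence $|\mathsf{eliminate}(q,S_1,\dep)| = |\mathsf{eliminate}(q,S_2,\dep)| = n - s$.

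The main obstacle I anticipate is precisely the non-source case: one has to rule out that, for some unlucky ordering, every atom covering a given $\atom{c}$ is eliminated before $\atom{c}$ is reached, which would leave $\atom{c}$ un-eliminable and spoil the count. The persistent-source-atom argument above is what closes this gap, and it is exactly here that transitivity of $\prec_{\dep}^{q}$ --- used to propagate coverage from the kept atom $\atom{d}_0$ all the way down to $\atom{c}$ --- is indispensable. A minor secondary point is the bookkeeping claim about the state of $\mathit{cover}(\atom{a})$ at the moment $\atom{a}$ is processed, which I expect to follow by the routine loop induction mentioned above.
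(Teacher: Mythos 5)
Your proof is correct, but it takes a genuinely different route from the paper's. The paper reduces the statement to adjacent transpositions: it assumes $S_1$ and $S_2$ differ only in two consecutive atoms $\atom{a}_k,\atom{a}_{k+1}$ and performs a case analysis on the state of $\mathit{cover}(\atom{a}_k)$ and $\mathit{cover}(\atom{a}_{k+1})$ after $k-1$ loop iterations, grouping the cases into three categories: those where the two outputs coincide outright, the one where they differ as sets but agree in cardinality (here transitivity of $\prec_{\dep}^{q}$ is used to show $\atom{a}_k \in \mathit{cover}(\atom{b})$ iff $\atom{a}_{k+1} \in \mathit{cover}(\atom{b})$ for every other $\atom{b}$), and those that transitivity rules out as impossible. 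You instead argue globally: you condense the coverage preorder into a DAG of mutual-coverage classes and show that, for \emph{every} strategy, the algorithm keeps exactly one atom per source class (the last-processed member) and eliminates all atoms of non-source classes, the latter because the kept representative of a source ancestor permanently covers them and is never removed from their cover sets. Both arguments hinge on transitivity of $\prec_{\dep}^{q}$, which the paper also takes as the key ingredient. Your version buys more: it avoids the multi-case bookkeeping, yields the closed-form count $|\mathsf{eliminate}(q,S,\dep)| = |\body{q}| - s$ with $s$ the number of source classes, and characterizes exactly which atoms can survive; the paper's exchange argument is more local and needs only to compare neighboring permutations, at the price of an explicit case table. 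I find no gap in your argument --- the one delicate point, that a non-source atom can never lose all of its covering atoms before it is reached, is precisely what your persistent-source-representative observation closes.
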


Henceforth, given a CQ $q$ of the form $\atom{h} \la
\atom{a}_1,\ldots,\atom{a}_n$, we refer to \emph{the} atom
elimination strategy for $q$ denoted by $S_q$, and we denote by
${\lfloor q \rfloor}_{\dep}$ the CQ obtained from $q$ after
eliminating from $\body{q}$ the atoms of
$\mathsf{eliminate}(q,S_q,\dep)$.

\subsection{Query Elimination}\label{sec:query-elimination}

We are now ready to describe the optimized algorithm
$\mathsf{XRewriteEliminate}$. During the execution of
$\mathsf{XRewrite}$, after the rewriting and factorization steps,
the query elimination step is applied. $\mathsf{XRewriteEliminate}$
is obtained after modifying $\mathsf{XRewrite}$ as follows:
\begin{enumerate}
\item line 2 --- $Q_{\textsc{rew}} := \{{\langle \lfloor q \rfloor}_{\dep},\mathsf{r},\mathsf{u} \rangle\}$;
\item line 10 --- $q' := {\lfloor \gamma_{S,\sigma^i}(q[S/\body{\sigma^i}])
\rfloor}_{\dep}$; and
\item line 17 --- $q' := {\lfloor \gamma_{S}(q) \rfloor}_{\dep}$.
\end{enumerate}
Since $\mathsf{eliminate}$ terminates, and
$\mathsf{XRewriteEliminate}$ generates less queries than
$\mathsf{XRewrite}$, the termination of the optimized algorithm
follows by Theorem~\ref{the:tgdrewrite-termination}:

\begin{theorem}\label{the:tgdrewrite-star-termination}
Consider a CQ $q$ over a schema $\R$ and a set $\dep \in
\mathsf{LINEAR}$ over $\R$. Then,
$\mathsf{XRewriteEliminate}(q,\dep)$ terminates.
\end{theorem}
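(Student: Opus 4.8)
The plan is to reduce the statement directly to Theorem~\ref{the:tgdrewrite-termination} by arguing that $\mathsf{XRewriteEliminate}$ explores no larger a space of queries than $\mathsf{XRewrite}$. First I would observe that the auxiliary procedure $\mathsf{eliminate}$ (Algorithm~\ref{alg:eliminate}) trivially terminates: it is a single loop over the $n = |\body{q}|$ body-atoms, and each iteration only inspects and updates the finitely many cover sets. Hence every invocation of the elimination step, which computes ${\lfloor q \rfloor}_{\dep}$, halts, and the sole effect of the three modifications to $\mathsf{XRewrite}$ is to replace each generated query by a subquery of itself obtained by deleting the body-atoms in $\mathsf{eliminate}(q,S_q,\dep) \subseteq \body{q}$.

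The key point is that the elimination step is monotone downward on body size: for every CQ $q'$ we have $|\body{{\lfloor q' \rfloor}_{\dep}}| \leqslant |\body{q'}|$, since ${\lfloor q' \rfloor}_{\dep}$ is obtained from $q'$ merely by removing atoms. Combining this with the behaviour of the rewriting and factorization steps under linear TGDs — the former replaces a set of body-atoms by a single atom, while the latter unifies body-atoms, so both are non-increasing on body size, exactly as in the proof of Lemma~\ref{lem:property}(1) — an immediate induction on the number of applied steps shows that every query $q'$ placed into $Q_{\textsc{rew}}$ satisfies $|\body{q'}| \leqslant |\body{q}|$.

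With this body-size bound in hand I would repeat the counting argument of the linear case of Theorem~\ref{the:tgdrewrite-termination}. Each such $q'$ can be written, modulo variable renaming, using at most $|\body{q}| \cdot \arity{\R}$ variables, so only finitely many pairwise non-$\simeq$ CQs can ever arise over the fixed schema $\R$ within this variable budget. Since $\mathsf{XRewriteEliminate}$ never deletes a query it has generated, and the $\simeq$-checks guarding insertion into $Q_{\textsc{rew}}$ prevent storing two $\simeq$-equivalent queries, the set $Q_{\textsc{rew}}$ can grow only finitely often; the repeat-loop therefore reaches the fixpoint $Q_{\textsc{temp}} = Q_{\textsc{rew}}$ after finitely many iterations, and the algorithm halts.

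The only subtlety I anticipate is ensuring that the body-size bound survives the interleaving of elimination with rewriting and factorization. This is harmless: elimination is applied after each rewriting or factorization step and only removes atoms, so it can never push a query outside the bounded space, and the non-increase property of the other two steps is unaffected by having shrunk the input beforehand. Thus the finiteness argument underlying Theorem~\ref{the:tgdrewrite-termination} carries over verbatim to the optimized algorithm, and termination follows.
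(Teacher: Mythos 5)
Your proposal is correct and follows essentially the same route as the paper, which disposes of the theorem in two lines by noting that $\mathsf{eliminate}$ terminates and that the elimination step only shrinks queries, so the finite-search-space argument of Theorem~\ref{the:tgdrewrite-termination} for linear TGDs applies. Your write-up merely makes explicit the body-size bound and the counting argument that the paper leaves implicit in that reference.
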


The next result establishes the correctness of
$\mathsf{XRewriteEliminate}$. For brevity, given a CQ $q$ and a set
$\dep$ of linear TGDs, the query
$\mathsf{XRewriteEliminate}(q,\dep)$ is denoted $q_{\dep}^{\star}$.

\begin{theorem}\label{the:TGD-rewrite-star-sound-complete}
Consider a CQ $q$ over a schema $\R$, a database $D$ for $\R$, and a
set $\dep \in \mathsf{LINEAR}$ over $\R$. It holds that,
$q_{\dep}^{\star}(D) = \ans{q}{D}{\dep}$.
\end{theorem}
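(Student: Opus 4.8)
The plan is to split the equality $q_{\dep}^{\star}(D) = \ans{q}{D}{\dep}$ into the two inclusions (soundness and completeness) and to reduce both to the behaviour of the single elimination operator ${\lfloor \cdot \rfloor}_{\dep}$, which I would first analyse in isolation. The foundational step is to show that, for every CQ $p$ and every instance $I$ with $I \models \dep$, one has ${\lfloor p \rfloor}_{\dep}(I) = p(I)$. The inclusion $p(I) \subseteq {\lfloor p \rfloor}_{\dep}(I)$ is immediate from monotonicity of CQs, since ${\lfloor p \rfloor}_{\dep}$ is obtained from $p$ by deleting body-atoms. For the converse I would iterate Lemma~\ref{lem:atom-coverage}: the crucial observation, guaranteed by the cover-set maintenance in Algorithm~\ref{alg:eliminate} (where each eliminated atom is removed from the cover sets of the surviving atoms), is that every deleted atom is covered by an atom that is \emph{retained}. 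Peeling the deleted atoms off one at a time and applying Lemma~\ref{lem:atom-coverage} at each step then yields ${\lfloor p \rfloor}_{\dep}(I) \subseteq p(I)$ for all $I \models \dep$. Taking $I = \chase{D}{\dep}$, which satisfies $\dep$, gives the certain-answer equivalence $\ans{{\lfloor p \rfloor}_{\dep}}{D}{\dep} = \ans{p}{D}{\dep}$ for every $D$.

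For soundness, $q_{\dep}^{\star}(D) \subseteq \ans{q}{D}{\dep}$, I would re-run the induction of Lemma~\ref{lem:sound-auxiliary-lemma} on the number of rewriting and factorization steps, this time along the run of $\mathsf{XRewriteEliminate}$. Since $D \subseteq \chase{D}{\dep}$ and $\ans{q}{D}{\dep} = q(\chase{D}{\dep})$, it suffices to prove that every $p \in q_{\dep}^{\star}$ satisfies $p(\chase{D}{\dep}) \subseteq q(\chase{D}{\dep})$. The rewriting and factorization steps are handled exactly as in Lemma~\ref{lem:sound-auxiliary-lemma}; the only genuinely new case is the elimination step, and here the foundational equivalence applies verbatim because $\chase{D}{\dep} \models \dep$, so replacing a query $p$ by ${\lfloor p \rfloor}_{\dep}$ leaves its image over $\chase{D}{\dep}$ unchanged. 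Tracing any answer backwards through the derivation therefore lands it in $q(\chase{D}{\dep})$.

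The completeness inclusion $\ans{q}{D}{\dep} \subseteq q_{\dep}^{\star}(D)$ is where the real work lies. Using Theorem~\ref{the:TGD-rewrite-sound-complete} I would rewrite the goal as $q_\dep(D) \subseteq q_{\dep}^{\star}(D)$, and prove by induction on the length of an $\mathsf{XRewrite}$-derivation that for every $p \in q_\dep$ there is $p^{\star} \in q_{\dep}^{\star}$ with ${\lfloor p \rfloor}_{\dep}(D) \subseteq p^{\star}(D)$ for all $D$; combined with $p(D) \subseteq {\lfloor p \rfloor}_{\dep}(D)$ (monotonicity) this gives $p(D) \subseteq q_{\dep}^{\star}(D)$ and hence the inclusion. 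The base case $p = q$ is clear, since ${\lfloor q \rfloor}_{\dep}$ is the seed query of $\mathsf{XRewriteEliminate}$ and survives into $Q_{\textsc{fin}}$. For the inductive step, $p$ arises from some $p'$ by a rewriting step using $\sigma \in \dep$ on a set $S \subseteq \body{p'}$, and by hypothesis $\mathsf{XRewriteEliminate}$ has produced some $p'^{\star}$ covering ${\lfloor p' \rfloor}_{\dep}$. The danger is that every atom of $S$ has been eliminated in $p'^{\star}$, so the corresponding rewriting step is simply not available there.

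The main obstacle, then, is to show that such a ``missing'' rewriting step is redundant. This is exactly the intuition voiced in Section~\ref{sec:motivating-example}, that by eliminating a redundant atom one also eliminates all the queries generated from it, and I would turn it into a lemma stating that atom coverage is stable under the rewriting step: if $\atom{a} \prec_{\dep}^{q} \atom{b}$ and a TGD is applied to $\atom{b}$, then the resulting atom is still covered by $\atom{a}$, or by the atom obtained from $\atom{a}$ through the tight, compatible sequence witnessing the original coverage. Concretely, the tight-and-compatible sequence of Definition~\ref{def:atom-coverage} that makes $\atom{a}$ cover $\atom{b}$ can be prolonged or truncated to witness coverage of the rewritten atom, so rewriting an eliminated atom never yields a query outside the answer span already accounted for by $p'^{\star}$. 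Establishing this stability property, and checking that it interacts correctly with the factorization step and with the cover-set bookkeeping of Algorithm~\ref{alg:eliminate}, is the technical heart of the argument; once it is in place the induction closes, and the two inclusions combine to give $q_{\dep}^{\star}(D) = \ans{q}{D}{\dep}$.
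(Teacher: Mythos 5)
Your soundness half follows the paper's own route: the paper likewise gets $q_{\dep}^{\star}(D) \subseteq \ans{q_{\dep}^{\star}}{D}{\dep}$ by monotonicity, then shows $\ans{q_{\dep}^{\star}}{D}{\dep} \subseteq \ans{{\lfloor q \rfloor}_{\dep}}{D}{\dep}$ by re-running Lemma~\ref{lem:sound-auxiliary-lemma} together with Lemma~\ref{lem:atom-coverage}, and closes with ${\lfloor q \rfloor}_{\dep}(\chase{D}{\dep}) \subseteq q(\chase{D}{\dep})$. Your iterated-elimination argument --- that by transitivity of $\prec_{\dep}^{q}$ and the cover-set bookkeeping of Algorithm~\ref{alg:eliminate} every deleted atom is covered by a \emph{retained} atom, so Lemma~\ref{lem:atom-coverage} can be applied one deletion at a time --- is exactly the right way to make the repeated use of that lemma precise.

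For completeness you diverge, and this is where I have a concern. The paper never compares $q_\dep$ and $q_{\dep}^{\star}$: it proves $\ans{q_{\dep}^{\star}}{D}{\dep} \subseteq q_{\dep}^{\star}(D)$ directly by redoing the chase-level induction of Lemma~\ref{lem:complete-auxiliary-lemma} on the optimized rewriting, where elimination only makes the inductive step easier (a query with fewer body atoms is easier to map into $\apchase{i-1}{D}{\dep}$). You instead reduce, via Theorem~\ref{the:TGD-rewrite-sound-complete}, to $q_\dep(D) \subseteq q_{\dep}^{\star}(D)$ and simulate $\mathsf{XRewrite}$ step by step inside $\mathsf{XRewriteEliminate}$. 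That is a legitimate alternative, but the lemma you lean on --- that coverage is stable under the rewriting step, witnessed by prolonging or truncating the tight sequence --- is doubtful as stated: $\atom{a} \prec_{\dep}^{q} \atom{b}$ is a forward-chase statement ($\atom{a}$ generates $\atom{b}$), whereas the rewriting step replaces $\atom{b}$ by a possible \emph{cause} $\gamma(\body{\sigma})$ of $\atom{b}$, and nothing forces the chase from $\atom{a}$ to produce that cause. What actually rescues your induction is more elementary: if every atom of the rewritten set $S$ was eliminated from $p'$, then $\gamma_{S,\sigma^i}$ restricted to $\body{p'} \setminus S \supseteq \body{{\lfloor p' \rfloor}_{\dep}}$ already witnesses $\gamma_{S,\sigma^i}(p'[S/\body{\sigma^i}])(D) \subseteq {\lfloor p' \rfloor}_{\dep}(D)$ for every $D$, so the ``missing'' child is outright redundant; the remaining case, where only part of $S$ is eliminated, needs a separate and genuinely fiddly argument that the step stays applicable to $S \cap \body{p'^{\star}}$ and yields a subsuming query, and your induction invariant must be syntactic rather than merely semantic for any of this to be usable. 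Either supply that case analysis or switch to the paper's chase induction, which sidesteps the simulation entirely.
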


\begin{proof}
Since $D \subseteq \chase{D}{\dep}$, by monotonicity of CQs,
$q_{\dep}^{\star}(D) \subseteq q_{\dep}^{\star}(\chase{D}{\dep})$;
thus, $q_{\dep}^{\star}(D) \subseteq
\ans{q_{\dep}^{\star}}{D}{\dep}$. By giving a proof similar to that
of Lemma~\ref{lem:sound-auxiliary-lemma}, and also by exploiting
Lemma~\ref{lem:atom-coverage}, we can show that
$\ans{q_{\dep}^{\star}}{D}{\dep} \subseteq \ans{\hat{q}}{D}{\dep}$,
where $\hat{q} = {\lfloor q \rfloor}_{\dep}$. Since $\chase{D}{\dep}
\models \dep$, Lemma~\ref{lem:atom-coverage} implies that
$\hat{q}(\chase{D}{\dep}) \subseteq q(\chase{D}{\dep})$; hence,
$\ans{\hat{q}}{D}{\dep} \subseteq \ans{q}{D}{\dep}$ which implies
$q_{\dep}^{\star}(D) \subseteq \ans{q}{D}{\dep}$.
Conversely, $\body{\hat{q}} \subset \body{q}$ implies
$\ans{q}{D}{\dep} \subseteq \ans{\hat{q}}{D}{\dep}$. Since, by
construction, $\hat{q} \in q_{\dep}^{\star}$, we immediately get
that $\ans{\hat{q}}{D}{\dep} \subseteq
\ans{q_{\dep}^{\star}}{D}{\dep}$. By devising a proof similar to
that of Lemma~\ref{lem:complete-auxiliary-lemma}, and also by
exploiting Lemma~\ref{lem:atom-coverage}, we can show that
$\ans{q_{\dep}^{\star}}{D}{\dep} \subseteq q_{\dep}^{\star}(D)$.
Therefore, $\ans{q}{D}{\dep} \subseteq q_{\dep}^{\star}(D)$, and the
claim follows.
\end{proof}

It is important to clarify that the above result does not hold if we
consider arbitrary TGDs. This is because
Lemma~\ref{lem:atom-coverage}, which is crucial in the proof of
Theorem~\ref{the:TGD-rewrite-star-sound-complete}, is heavily based
on the linearity of TGDs.
Notice that the algorithm $\mathsf{XRewriteEliminateParallel}$ can
be naturally defined by considering in the parallel step of
$\mathsf{XRewriteParallel}$ the algorithm
$\mathsf{XRewriteEliminate}$ instead of $\mathsf{XRewrite}$.

\subsection{The Chase \& Backchase Approach}

The task of finding all the minimal equivalent reformulations of a
CQ w.r.t.~a set of TGDs has been already investigated in databases.
The most interesting approach in this respect is the chase \&
backchase ($\mathsf{C\&B}$) algorithm~\cite{DePT99}.
During the \emph{chase phase}, the given CQ $q$ is chased using the
TGDs of the given set $\dep$, yielding a query $q_U$ called
\emph{universal plan}.
The \emph{backchase phase} enumerates all minimal subqueries of
$q_U$ which are equivalent to $q$ w.r.t.~$\dep$; henceforth, we
refer to $\dep$-minimal and $\dep$-equivalent subqueries. For a
subquery $q_S$ of $q_U$, to decide whether $q_S$ is
$\dep$-equivalent to $q$ it suffices to check whether $q_S
\subseteq_{\dep} q$, i.e., $q_S$ is contained in $q$ w.r.t~$\dep$,
which reduces to finding a containment mapping from $q$ to the query
obtained after chasing $q_S$ using $\dep$. Let us recall that,
instead of naively enumerating all the possible subqueries of $q_U$
during the backchase phase, one can employ a bottom-up approach,
starting with all subqueries with just one atom, continuing with
those consisting of two atoms, and so on, and stop as soon as a
subquery which is $\dep$-equivalent to $q$ is found. This is
possible due to the so-called \emph{pruning property}, which says
that, if a subquery $q_S$ of $q_U$ is $\dep$-equivalent to $q$, then
every subquery of $q_U$ which is a superquery of $q_S$ cannot be
both $\dep$-equivalent to $q$ and $\dep$-minimal.

It is obvious that $\mathsf{C\&B}$ is more general than our query
elimination technique. More precisely, given a CQ $q$ and a set
$\dep$ of linear TGDs, $\mathsf{C\&B}$ will definitely return the CQ
$\mathsf{eliminate}(q,S_q,\dep)$. Therefore, during the execution of
$\mathsf{XRewrite}$, the elimination of redundant atoms can be done
by exploiting the $\mathsf{C\&B}$ algorithm instead of relying on
our query elimination technique. Unfortunately, $\mathsf{C\&B}$
suffers from two major drawbacks which make it inappropriate for our
purposes.
The first one is the fact that it works only for classes of TGDs
which guarantee the termination of the chase. Recall that in both
phases of the algorithm we need to chase a query as long as no new
atoms can be obtained. Thus, if we consider, e.g., arbitrary linear
TGDs, then the termination of the procedure is not guaranteed.
The second one (assuming that we focus on a class which guarantees
the termination of the chase) is the fact that we need to apply the
chase procedure double-exponentially many times (in general), which
makes the whole procedure computationally expensive --- recall that
the main motivation underlying our backward-chaining algorithm was
precisely the avoidance of the explicit construction of the chase.
Therefore, although the $\mathsf{C\&B}$ algorithm can be used to
identify and eliminate redundant query atoms, the query elimination
approach is more appropriate for our purposes since it works for
arbitrary linear TGDs, and can effectively identify redundant atoms
without an explicit construction of the chase.
Conceptually speaking, our query elimination technique is a refined
version of the $\mathsf{C\&B}$ algorithm, specifically engineered
for the class of linear TGDs.

\section{Implementation}\label{sec:implementation}

We implemented \tgdrewrite and its optimizations in Java by
extending the \iris Datalog engine~\cite{BiFi08}. Throughout this
section we will refer to this implementation as \irispm. All data
used in our evaluation, together with the complete source code of
\irispm are publicly available\footnote{Omitted due to double-blind
review. }.

\subsection{System Architecture}\label{sec:system-architecture}

A high-level overview of the main architectural components of
\irispm and their interconnections is shown in
Figure~\ref{fig:system-architecture}(a).
The input of the system consists of a pair $\tup{Q,\dep}$; $Q$ is a
set of CQs to be executed against a (possibly incomplete) relational
database $D$, and $\dep$ is an ontology constituted by
non-conflicting TGDs and FDs, and negative constraints (NCs). The
\irispm \emph{parser} partitions $\dep$ into $\dep_T$ (the set
of TGDs), $\dep_F$ (the set of FDs), and $\dep_\bot$ (the set of
NCs). The \emph{constraints manager} accepts $\dep_T$, and
constructs (query-independent) support data structures based on
$\dep_T$. In particular, the cover graph of $\dep_T$, which is
basically the transitive closure of the propagation graph of
$\dep_T$ (see Definition~\ref{def:propagation-graph}) is constructed
--- more details are given in the following subsection.
The constraints manager accepts also $\dep_F$ and $\dep_\bot$, and
constructs a set $Q_F$ and $Q_\bot$, respectively, of \emph{check
queries}, which are actually unions of CQs, that will be used to
verify whether $D$ satisfies $\dep_F$ and $D \cup \dep$ satisfies
$\dep_\bot$. The \emph{query manager} takes as input the set $Q$,
and schedules the CQs of $Q$ for rewriting and execution.

Both input and check queries are handed over to the rewriting
engine. More precisely, given as input a CQ $q \in Q$, the union of
CQs $Q_\bot$, and the set of TGDs $\dep_T$ (along with the cover
graph of $\dep_T$), the rewriting engine rewrites $q$ and $Q_\bot$
using \tgdrewrite into a union of CQs $Q_q$ and
$\overline{Q}_\bot$, respectively.
Then, the \emph{SQL-Rewriter} accepts as input $Q_q$, $Q_F$ and
$\overline{Q}_\bot$, and rewrites them into equivalent
\emph{select-project-join} SQL queries $SQ_q$, $SQ_F$ and $SQ_\bot$,
respectively, to be executed against $D$.
A non-empty answer to the (rewritten) check query $SQ_F$ (resp.,
$SQ_\bot$) implies that a FD of $\dep_F$ (resp., a NC of
$\dep_\bot$) is violated, i.e., $D \cup \dep$ is inconsistent. In
this case, \irispm exits with an error and a list of violated
constraints together with the tuples of $D$ that ``witness'' the
violation. If for the check queries the answer is the empty set,
then \irispm executes the rewritten query $SQ_q$ over $D$.

\begin{figure}[t]
    \centering
        \includegraphics[width=1.0\columnwidth]{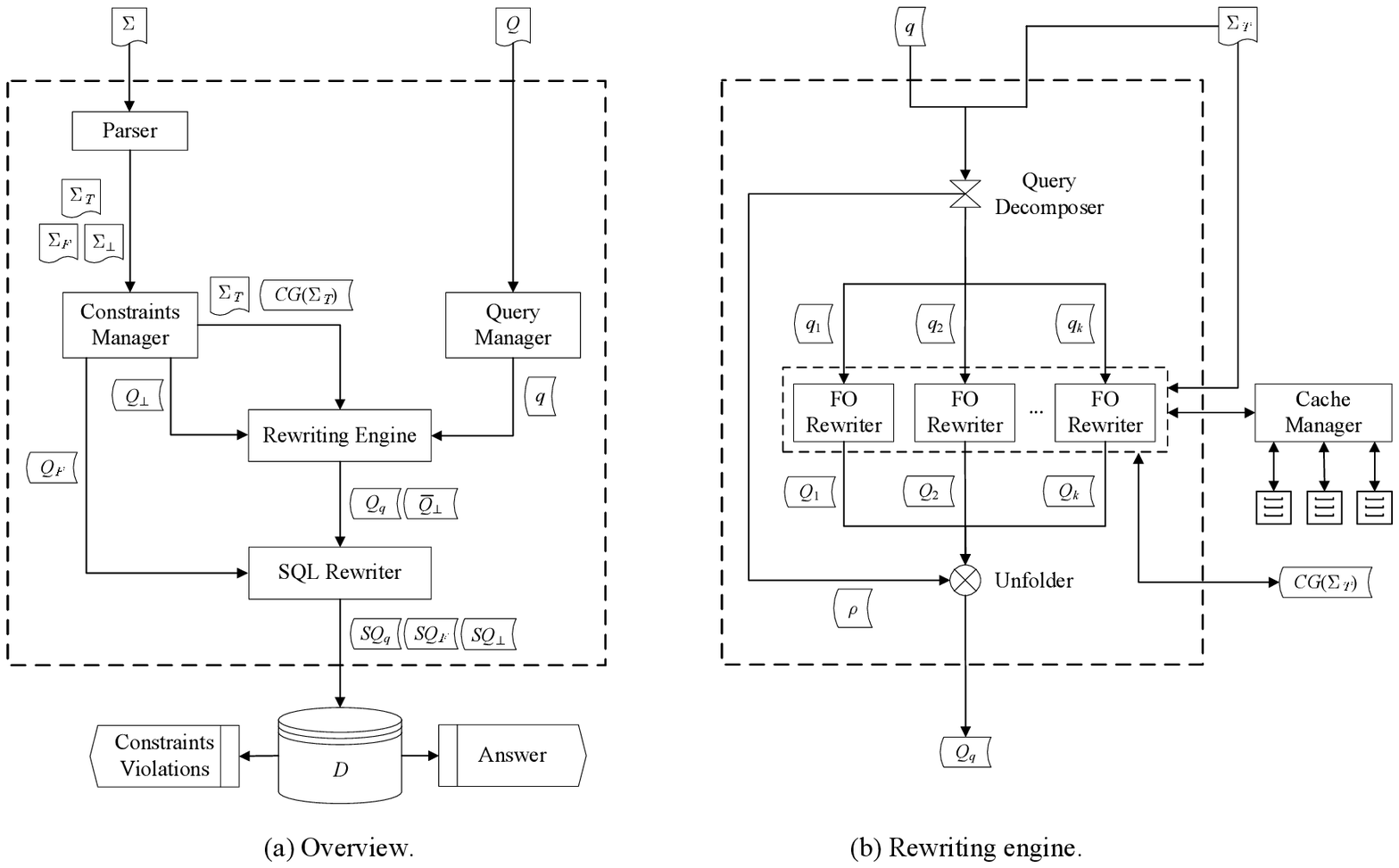}
\caption{\irispm architecture components.}
\label{fig:system-architecture}
\end{figure}

Figure~\ref{fig:system-architecture}(b) shows in more detail the
architectural structure of the \irispm rewriting engine. The
main module is the \emph{FO-Rewriter} which implements
\tgdrewrite. The engine receives as input a CQ $q$ and the
set $\dep_T$ along with the cover graph of $\dep_T$. First, hands
$q$ and $\dep_T$ over to the \emph{query decomposer} which decomposes
$q$ into components $q_{1}, q_{2}, \ldots, q_{k}$, according to the
procedure described in Section~\ref{sec:parallelize}, that can be
rewritten independently. The decomposer also computes the
reconciliation rule $\rho$. Each $q_{i}$, where $i \in [k]$, is then
handed over an independent FO-Rewriter that produces the rewriting
$Q_{i}$ for that particular component. Each atom of the
reconciliation rule $\rho$ is then unfolded using the corresponding
rewriting $Q_{i}$.
All FO-Rewriters share access to the graph $\mathit{CG}(\dep_T)$.
Moreover, during the execution of the rewriting procedure, an
additional data structure, called \emph{query graph}, is maintained,
which actually stores the CQs generated during the rewriting ---
more details are given in the following subsection.
Furthermore, the FO-Rewriters share access to caching facilities
which aim at avoiding to recompute several times the same piece of
information, e.g., the MGU for a set of atoms, which is needed for
the execution of the rewriting process --- this is discussed in more
details in Section~\ref{sec:caching-mechanism}. Notice that the
\emph{cache manager} ensures synchronized access to the caches.

It is worth noting that an indexing structure for TGDs is adopted.
More precisely, \emph{TGD-Index} is implemented as a map $M(K,V)$,
where a key $k \in K$ is a predicate symbol of the underlying
schema, and the value $M(k) \in V$ is a set of TGDs of $\dep$ having
$k$ as head-predicate. This allows a FO-Rewriter, during an
applicability check, to consider only those TGDs which may be
applicable. This is quite beneficial since, despite the fact that a
single applicability check is computationally easy, the rewriting
step iterates over each TGD $\sigma \in \dep$ checking applicability
of $\sigma$ to a set of atoms $S$ in the query $q$ being rewritten,
and therefore on large ontologies this iteration might result in an
unnecessary waste of time since only a few TGDs may be applicable to
$S$.

\subsection{Support Data Structures}\label{sec:support-data-structures}

As already said, \irispm makes use of support data structures,
namely, the query graph and the cover graph. In what follows, we
describe how these data structures are implemented, as well as how
they are used during the rewriting process.

\paragraph{\small{\textsf{Query Graph.}}}
The \emph{query graph} stores the queries generated during the
rewriting process. The formal definition follows:

\begin{definition}[Query Graph]\label{def:query-graph}
Consider a CQ $q$ over a schema $\R$, and set $\dep$ of TGDs over
$\R$. The \emph{query graph} of $q$ and $\dep$ is a labeled directed
acyclic graph $\tup{N,E,\lambda}$, where $N$ is the node set, $E$ is
the edge set, and $\lambda$ is a labeling function $N \ra$
$\mathsf{XRewrite}(q,\dep)$.
An edge $\tup{v,u}$ occurs in $E$ if there exists $\sigma \in \dep$
and $S \subseteq \body{q_v}$, where $\sigma$ is applicable to $S$
and $q_v = \lambda(v)$, and an integer $i \geqslant 1$, such that
$\lambda(u) = \gamma_{S,\sigma^i}(q_v[S/\body{\sigma^i}])$.
\hfill\markfull
\end{definition}

In other words, the above definition says that $\lambda(u)$ is
obtained during the execution of $\mathsf{XRewrite}(q,\dep)$ by
applying the rewriting step on $\lambda(v)$.
Interestingly, apart from storing the generated queries, the query
graph keeps also track of the provenance of the queries. This allows
us, whenever a generated query is recognized as redundant because of
a query subsumption check (that we are going to discuss in the next
section), to use the edges in the graph to eventually remove its
descendants, thus saving similar checks which are redundant. The
query graph is implemented using
JGraphT\footnote{\url{http://jgrapht.org}} that provides efficient
data structures for the representation of graph-like structures and
comes with efficient implementations of algorithms such as
reachability.

\paragraph{\small{\textsf{Cover Graph.}}}
As said, the cover graph of a set $\dep$ of TGDs is actually the
transitive closure of the propagation graph of $\dep$.
We denote by $\dep^{\star}$ the Kleene closure of $\dep$\footnote{By
abuse of notation, we consider $\dep$ as a set of symbols.}, i.e.,
the set of all strings over $\dep$ of any length $n > 0$. By abuse
of notation, if $\mathit{PG}(\dep) = \tup{N,E,\lambda}$ and $v_1
\ldots v_n$ is a path in $\mathit{PG}(\dep)$, then by $\lambda(v_1
\ldots v_n)$ we denote the string
$\lambda(\tup{v_1,v_2})\lambda(\tup{v_2,v_3}) \ldots
\lambda(\tup{v_{n-1},v_n}) \in \dep^{\star}$. The formal definition
follows:

\begin{definition}[Cover Graph]\label{def:cover-graph}
Consider a set $\dep$ of TGDs over a schema $\R$, and assume that
$\mathit{PG}(\dep) = \tup{N,E,\lambda}$. The \emph{cover graph} of
$\dep$, denoted $\mathit{CG}(\dep)$, is a labeled directed
multigraph $\tup{N,E',\lambda'}$, where $E' \supseteq E$ and
$\lambda' : E' \ra \dep^{\star}$. The edge set $E'$ is defined as
follows: \emph{(i)} if there exists a minimal path $v_1 \ldots v_n$,
where $n > 1$, in $\mathit{PG}(\dep)$ such that the sequence of TGDs
$\lambda(v_1 \ldots v_n)$ is tight, then in $E'$ there exists an
edge $e = \tup{v_1,v_n}$ with $\lambda'(e) = \lambda(v_1 \ldots
v_n)$, and \emph{(ii)} no other edges belong to $E'$.
\hfill\markfull
\end{definition}



The cover graph is used to check whether a certain position is
reachable from some other position of the underlying schema. More
precisely, it is used to check for the existence of a tight sequence
of TGDs as required by the definition of atom coverage (see
Definition~\ref{def:atom-coverage}). Moreover, it is used  for the
computation of the affected positions of the underlying schema (see
Definition~\ref{def:affected-postions}).
Notice that in the cases where query elimination is not applied,
e.g., when the input set of TGDs is not linear, and thus we do not
need to check for atom coverage, then we can consider only the
propagation graph (and not the cover graph).

The cover graph is implemented as a map $M(K,V)$, where a key $k \in
K$ is a pair $\tup{\pi,\pi'}$ of positions of the underlying schema
such that $\pi^{\prime}$ is reachable from $\pi$ via a sequence of
TGDs, and the value $M(k) \in V$ is the set of all sequences $s$ of
TGDs such that $\pi^{\prime}$ is reachable from $\pi$ via $s$.
This implementation of the cover graph proved to be a better
alternative than a traditional graph structure due to the
potentially high number of calls to the reachability procedure ---
by pre-computing the closure of the propagation graph, reachability
can be checked in constant time.


\section{Experimental Evaluation}\label{sec:experimental-evaluation}
We are now ready to perform an experimental evaluation of
$\mathsf{XRewrite}$. After describing the experimental setting, we
carried out an extensive internal evaluation in order to better
understand the impact of the proposed optimization techniques.
Finally, we compare our system with ALASKA, the reference
implementation of~\cite{KLMT12}, which is the only known system
supporting ontological query answering under general existential
rules.



\subsection{Experimental Setting}\label{sec:experimental-setting}
%

%
Since ontological query answering under existential rules is a
relatively recent area of research, no benchmark is currently
available. We therefore resorted to an established benchmark for
DL-based query rewriting systems used, e.g.,
in~\cite{PeMH10,KLMT12}.
The benchmark consists of five ontologies expressed in the
well-known description logic DL-Lite$_{\R}$. Notice that every set
of DL-Lite$_{\R}$ axioms can be translated into an equivalent set
(w.r.t.~query answering) of linear TGDs and NCs over a schema
consisting of unary and binary predicates; for details
see~\cite{CaGL12}. A brief description of the ontologies follows:

\begin{itemize}
\item[--] VICODI ($\mathsf{V}$) is an ontology of European history, developed within the VICODI project\footnote{http://www.vicodi.org.}. It
consists of 222 linear TGDs without constraints.
\item[--] STOCKEXCHANGE ($\mathsf{S}$) is an ontology of the domain
of financial institutions within the EU. It consists of 53
linear TGDs without constraints.
\item[--] UNIVERSITY ($\mathsf{U}$) is a DL-Lite$_\R$ version of
the LUBM
Benchmark\footnote{http://swat.cse.lehigh.edu/projects/lubm/.},
developed at Lehigh University, and describes the organizational
structure of universities. It consists of 87 linear TGDs without
constraints.
\item[--] ADOLENA ($\mathsf{A}$) (Abilities and Disabilities OntoLogy for ENhancing
Accessibility) developed for the South African National
Accessibility Portal, and describes abilities, disabilities and
devices. It consists of 154 linear TGDs and 19 NCs.
\item[--] The Path5 ($\mathsf{P5}$) ontology is a synthetic ontology encoding graph
structures, and used to generate an exponential-blowup of the size
of the rewritten queries. It consists of 13 linear TGDs without
constraints.
\end{itemize}



Since $\mathsf{XRewrite}$ supports general existential rules, we have complemented the above benchmark with two ontologies consisting of linear and sticky sets of TGDs, respectively, which
are not expressible using description logics.

\begin{itemize}
\item[--] Split-Full ($\mathsf{SF}$) is an ontology designed to test the ability of
a rewriting algorithm to exploit query decomposition. It consists of
60 linear TGDs over a schema with predicates of arity at most three.
\item[--] Clique ($\mathsf{CLQ}$) is an ontology representing $k$-cliques in
a graph, where $k \in [3]$, and has been devised to test the ability
of rewriting engines to handle sticky sets of TGDs. It consists of
34 TGDs over a schema with predicates of arity at most four.
\end{itemize}

Each ontology has an associated set of test queries (see
Section~\ref{appsec:test-queries}) either obtained via an analysis
of query logs or manually created.
%
Since $\mathsf{XRewrite}$ is provably sound and complete, it is necessary to
give some metrics for the quality of the rewriting:

\begin{itemize}
\item[--] \emph{Size.} When the target rewriting language is UCQs, the size represents the
number of CQs in the final rewriting. Some existing approaches and
systems, e.g.,~\cite{OrPi11,PeMH10,RoAl10} also support other
languages for the rewriting such as non-recursive or bounded
Datalog. In this case the size of the rewriting is the number of
rules in the Datalog program. Notice that the fact that Datalog
rewritings are syntactically more succinct than UCQs does not
immediately imply that they are preferable from a practical point of
view. One of the reasons is the necessity to resort to Datalog
engines or some form of pre-processing before being able to execute
a Datalog rewriting against standard relational database systems.
Other size-related metrics include the \emph{number of joins} and
the \emph{number of atoms} since they are an indication of the
effort necessary to execute the rewriting in practice. Since all
disjuncts in the rewriting must be executed, in the following we
always consider the total number of atoms and joins in the
rewriting.
\item[--] \emph{Rewriting time.} Assuming that the natural setting of
ontological query answering is a transactional environment, another
important metric is the time required to compute a final (and
executable) rewriting once a query is submitted to the system. In
this paper, we do not include in this metric the time required for
the construction of the cover graph, which does not depend on the
query itself and can be constructed beforehand. However, we include
query-dependent pre- and post-processing steps such as query
decomposition.


%
\item[--] \emph{Memory consumption.} Represents the peak memory usage reached
during the rewriting of a given query. This metric always includes
the memory consumption introduced by caches but not the memory
consumption of auxiliary data structures such as the cover graph.
\item[--] \emph{Search space.} Another typical metric for query rewriting
algorithms is the number of CQs explored and generated during the
rewriting~\cite{KLMT13}. In case of $\mathsf{XRewrite}$, the
explored queries are those labelled with $\mathsf{e}$, while the
generated ones are those obtained via a rewriting step (possibly
multiple times). Ideally, a rewriting algorithm should be able to
explore and generate only the necessary queries for the final
rewriting; as we shall see, this is not always the case.
\end{itemize}

The machine used for testing is a Dell Optiplex 9020 with 4
dual-core Intel i7-4770 processors at 3.40GHz (8 cores in total),
running Linux Mint v15 (Olivia) x86-64, Kernel 3.8.0-19. The machine
is equipped with 32Gb of RAM. We used a Java VM 1.7.0-45 provided
with 16Gb of maximum heap size.

\renewcommand{\tabcolsep}{4pt}
\begin{table}[t]
    \centering
    \tbl{The impact of caching (per-query averages).\label{tab:caching statistics}}{
        \begin{tabular}{|l|c|c|c|c|c|c|c|}
            \cline{2-8}
            \multicolumn{1}{l|}{~} & ~ & ~ & ~ & ~ & ~ & ~ & ~\\
            \multicolumn{1}{l|}{~} & $\mathsf{V}$ & $\mathsf{S}$ & $\mathsf{U}$ & $\mathsf{A}$ & $\mathsf{P5}$ & $\mathsf{SF}$ & $\mathsf{CLQ}$\\
            \multicolumn{1}{l|}{~} & ~ & ~ & ~ & ~ & ~ & ~ & ~\\
            \hline
            \textbf{Factorization}    & 0 & 1 & 0 & 37k & 9k & 0 & 0 \\
            Same invocation (\%)  & 0 & 0 & 0 & 0   & 0  & 0 & 0 \\
            Distinct input       & 0 & 1 & 0 & 37k & 9k & 0 & 0 \\
            \hline
            \textbf{Atom coverage}      & 12 & 18 & 12   & 642k & 202k & 10 & N/A \\
            Same invocation (\%)  & 0  & 12 & 11.2 & 86   & 63.4 & 0  & N/A \\
            Distinct input       & 12 & 2  & 9    & 3k   & 2k   & 10 & N/A \\
            \hline
            \textbf{Homomorphism check} & 447 & 12 & 8 & 863k & 145 & 351 & 1.4k \\
            Same invocation (\%)  & 0   & 0  & 0 & 0    & 0   & 0   & 0 \\
            Distinct input       & 447 & 12 & 8 & 863k & 145 & 351 & 1.4k \\
            \hline
            \textbf{MGU computation}              & 457  & 97   & 25   & 294k & 93k & 500k & 24k \\
            Same invocation (\%)  & 72.3 & 50.4 & 34.8 & 98.2 & 79.6 & 77.6 & 98.3 \\
            Distinct input       & 73   & 27   & 13   & 384  & 6k & 92 & 345 \\
            \hline
            \textbf{Canonical renaming}         & 5k   & 249  & 558  & 4.6M & 63k  & 17k & 6k \\
            Same invocation (\%)  & 90.6 & 66.8 & 64.6 & 97.2 & 78.8 & 63.4 & 75.6 \\
            Distinct input       & 372  & 69   & 106  & 76k  & 15k  & 8k &  2k \\
            \hline
            \hline
        \end{tabular}
    }
\end{table}

\subsection{Caching Mechanism}\label{sec:caching-mechanism}

During the rewriting process, several operations, such as the
computation of the MGU for a set of atoms, are likely to be applied
multiple times for the same input. This might occur either within a
single FO-Rewriter, e.g., because the same CQ is generated more than
once in different branches of the rewriting procedure, or due to
multiple FO-Rewriters exploring the same CQ in two different
branches of the search space.
For this reason, we have analyzed the behavior of
$\mathsf{XRewrite}$ to identify operations that might benefit from
caching. In order to determine these operations, and also to
dimension the caches, we set up an experiment recording, for each
target operation, the total number of invocations, the number of
invocations on the same input, and the number of distinct inputs
that these operations have been invoked on. The number of
invocations on the same inputs corresponds to the maximum number of
cache \emph{hits} we can achieve, while the number of invocations on
distinct inputs corresponds to the \emph{size} of the cache that is
necessary to obtain the maximum number of cache hits.
The target operations considered in this experiment are the
following:
\begin{itemize}
\item[--] \emph{Factorization:} given a CQ $q$, a TGD $\sigma$, and a set $S \subseteq \body{q}$
which is factorizable w.r.t.~$\sigma$, compute the query
$\gamma_S(q)$, i.e., the query obtained by factorizing $q$.

\item[--] \emph{Query elimination:} given a CQ $q$, and a set $\dep$ of TGDs,
compute the query ${\lfloor q \rfloor}_{\dep}$, that is, the query
obtained by applying the query elimination step on $q$.

\item[--] \emph{Homomorphism check:} given two sets of atoms $S_1$
and $S_2$, check whether there exists a homomorphism from $S_1$ to
$S_2$.

\item[--] \emph{MGU computation}: given a set of atoms $S$, compute
the MGU for $S$.

\item[--] \emph{Canonical renaming:} given a CQ $q$, compute the query
$\mathit{can_q}(q)$; for the definition of the canonical renaming
$\mathit{can_q}$ see the last paragraph of
Section~\ref{sec:termination-of-tgd-rewrite}.
\end{itemize}

%

Table~\ref{tab:caching statistics} summarizes the results of our
experiment. The values are reported as per-query averages on all
test ontologies. A first observation is that, despite the fact that
factorization and homomorphism check are very frequent, they are
mostly invoked on different inputs. As a consequence, caching the
output of these two operations would be rather ineffective, and thus
representing an unnecessary burden on the rewriting engine. This is
not necessarily a negative result, since it shows that \irispm
explores the rewriting search-space effectively without a caching
mechanism in place.
On the other hand, MGU computation and canonical renaming are often
invoked on the same input with a hit rate for an MGU cache ranging
between 50.4\% to 98.2\%, and a hit rate for a canonical renaming
cache ranging from 63.4\% to 97.2\%. Contrasting results have been
observed for the query elimination cache, with a hit rate ranging
from 0 (i.e., totally ineffective) on $\mathsf{V}$ and $\mathsf{SF}$
to 86\% on $\mathsf{A}$. The reason of this difference has to be
found in the query decomposition. In fact, on $\mathsf{A}$ the input
queries cannot be effectively decomposed thus making the query
elimination more likely to find queries that can be reduced via atom
coverage. On the contrary, for those ontologies where the queries
are highly decomposable, e.g., $\mathsf{V}$ and $\mathsf{SF}$, the
atom coverage is rarely applied on queries which are already small.
The fact that a query elimination cache can potentially be useful in
some of the ontologies, led us to keep it in our implementation.

We now come to the problem of dimensioning the various caches and
deciding suitable caching algorithms. Since the ontologies for which
caching is likely to be more effective are $\mathsf{A}$ and
$\mathsf{P5}$ due to the fact that their queries are poorly
decomposable, we dimensioned the caches at roughly the 75\% of the
optimal size, i.e., the MGU cache has been designed for 4.5k
entries, the canonical renaming cache for 55k entries, and the query
elimination cache for 2k entries.

Caches are implemented as maps $M(K,V)$, where the nature of keys
and values varies depending on the particular cache. The
$\mathsf{MGUCache}$ caches MGU computations; a key is a set of atoms
and the value is their MGU. The $\mathsf{EliminationCache}$ caches
the result of the application of query elimination on a query; a key
is a CQ $q$ and the value is the CQ ${\lfloor q \rfloor}_{\dep}$,
where $\dep$ is the input set of TGDs. Finally, the
$\mathsf{RenameCache}$ stores canonical renamings of queries; the
key is a CQ $q$ and the value is the CQ $\mathit{can}_q(q)$.

\subsection{Internal Evaluation}\label{sec:internal-evaluation}
The aim of the internal evaluation is to quantify the impact of our
optimizations on the rewriting. In particular, they aim at
\begin{inparaenum}[\itshape(i)]
\item reducing the number of redundant queries in the final rewriting while preserving its completeness, and
\item intelligently explore the rewriting search space, e.g., by avoiding the exploration of redundant queries.
\end{inparaenum}

\renewcommand{\tabcolsep}{0.5pt}
\begin{table}[t]
    \centering
    \tbl{The impact of query elimination on the rewriting. \label{tab:atom-coverage}}{
        \begin{tabular}{cc|c|c|c|c|c|c|c|c|c|c|c|c|c|c|}
            \cline{3-16}
            ~ & ~ & \multicolumn{2}{c|}{Size} & \multicolumn{2}{c|}{\#Atoms} & \multicolumn{2}{c|}{\#Joins} & \multicolumn{2}{c|}{Explored} & \multicolumn{2}{c|}{Generated} & \multicolumn{2}{c|}{Time (ms)} & \multicolumn{2}{c|}{Memory (MB)}\\
            \cline{3-16}
            ~&~&~&~&~&~&~&~&~&~&~&~&~&~&~&~\\
            ~ & ~ & \textsc{Base} & \textsc{QE} & \textsc{Base} & \textsc{QE} & \textsc{Base}& \textsc{QE} & \textsc{Base}& \textsc{QE} & \textsc{Base}& \textsc{QE} & \textsc{Base}& \textsc{QE} & \textsc{Base}& \textsc{QE}\\
            ~&~&~&~&~&~&~&~&~&~&~&~&~&~&~&~\\
            \hline
            \multicolumn{1}{|c|}{\multirow{5}{*}{$\mathsf{V}$}} & $q_1$ & 15 & 15 & 15 & 15 & 0 & 0 & 15 & 15 & 14 & 14 & 9 & 9 & 4.3 & 4.3 \\
            \multicolumn{1}{|c|}{}                      & $q_2$ & 10 & 10 & 30 & 30 & 30 & 30 & 10 & 10 & 9 & 9 & 7 & 7 & 4.3 & 6.3\\
            \multicolumn{1}{|c|}{}                      & $q_3$ & 72 & 72 & 216 & 216 & 144 & 144 & 72 & 72 & 71 & 71 & 44 & 45 & 4.7 & 6.7 \\
            \multicolumn{1}{|c|}{}                      & $q_4$ & 185 & 185 & 555 &555 & 370 & 370 & 185 & 185 & 184 & 184 & 111 & 115 & 5.4 & 7.4 \\
            \multicolumn{1}{|c|}{}                      & $q_5$  & 30 & 30 & 210 & 210 & 270 & 270 & 30 & 30 & 29 & 29 & 26 & 28 & 4.6 & 6.6\\
            \hline
            \multicolumn{1}{|c|}{\multirow{5}{*}{$\mathsf{S}$}} & $q_1$ & 6 & 6 & 6 & 6 & 0 & 0 & 6 & 6 & 7 & 7 & 2 & 2 & 4.1 & 4.1\\
            \multicolumn{1}{|c|}{}                      & $q_2$ & 160 & 2 & 480 & 2 & 320 & 0 & 160 & 2 & 244 & 1 & 43 & 2 & 5.9 & 8.2 \\
            \multicolumn{1}{|c|}{}                      & $q_3$ & 504 & 4 & 2,520 & 8 & 2,520 & 4 & 504 & 4 & 823 & 3 & 198 & 8 & 16.8 & 8.2 \\
            \multicolumn{1}{|c|}{}                      & $q_4$  & 960 & 4 & 4,800 & 8 & 4,800 & 4 & 960 & 4 & 1,445 & 3 & 363 & 2 & 25.3 & 8.2 \\
            \multicolumn{1}{|c|}{}                      & $q_5$ & 3,024 & 8 & 21,168 & 24 & 27,216 & 24 & 3,024 & 8 & 4,892 & 7 & 1.7s & 3 & 12.6 & 8.3 \\
            \hline
            \multicolumn{1}{|c|}{\multirow{5}{*}{$\mathsf{U}$}}    & $q_1$  & 2 & 2 & 4 & 4 & 2 & 2 & 5 & 5 & 4 & 4 & 3 & 3 & 4.1 & 6.2 \\
            \multicolumn{1}{|c|}{}                      & $q_2$  & 148 & 1 & 444 & 1 & 296 & 0 & 240 & 1 & 250 & 0 & 73 & 1 & 5.8 & 4.1 \\
            \multicolumn{1}{|c|}{}                      & $q_3$ & 224 & 4 & 1,344 & 16 & 2,016 & 20 & 1,008 & 12 & 1,007 & 11 & 432 & 7 & 18.5 & 8.3 \\
            \multicolumn{1}{|c|}{}                      & $q_4$ & 1,628 & 2 & 4,884 & 2 & 1,628 & 0 & 5,000 & 5 & 6,094 & 4 & 1.6s & 3 & 54.1 & 8.2 \\
            \multicolumn{1}{|c|}{}                      & $q_5$ & 3,009 & 10 & 12,036 & 20 & 18,054 & 20 & 8,154 & 25 & 11,970 & 24 & 3.2s & 8 & 119.2 & 8.4 \\
            \hline
            \multicolumn{1}{|c|}{\multirow{5}{*}{$\mathsf{A}$}}    & $q_1$ & 402 & 299 & 779 & 573 & 377 & 274 & 782 & 679 & 847 & 725 & 818 & 729 & 7.9 & 17.0 \\
            \multicolumn{1}{|c|}{}                      & $q_2$ & 103 & 94 & 256 & 238 & 153 & 144 & 1,784 & 1,772 & 1,783 & 1,783 & 1.1s & 1.2s & 19.1 & 33.4 \\
            \multicolumn{1}{|c|}{}                      & $q_3$ & 104 & 104 & 520 & 520 & 520 & 520 & 4,752 & 4,752 & 4,751 & 4,751 & 3.2s & 3.5s & 62.7 & 97.5 \\
            \multicolumn{1}{|c|}{}                      & $q_4$ & 492 & 456 & 1,288 & 1,216 & 796 & 760 & 7,110 & 6,740 & 7,110 & 6,838 & 3.8s & 3.5s & 67.8 & 65.8 \\
            \multicolumn{1}{|c|}{}                      & $q_5$ & 624 & 624 & 3,120 & 3,120 & 3,120 & 3,120 & 76,122 & 69,448 & 76,121 & 70,457 & 52.3s & 49.8s & 1.1G & 981.5 \\
            \hline
            \multicolumn{1}{|c|}{\multirow{5}{*}{$\mathsf{P5}$}}   & $q_1$ & 6 & 6 & 6 & 6 & 0 & 0 & 14 & 14 & 13 & 13 & 1 & 2 & 4.1 & 4.1 \\
            \multicolumn{1}{|c|}{}                      & $q_2$ & 10 & 10 & 16 & 16 & 6 & 6 & 77 & 77 & 76 & 80 & 55 & 8 & 4.5 & 8.6 \\
            \multicolumn{1}{|c|}{}                      & $q_3$ & 13 & 13 & 29 & 29 & 16 & 16 & 410 & 400 & 409 & 413 & 57 & 52 & 7.4 & 11.4 \\
            \multicolumn{1}{|c|}{}                      & $q_4$ & 15 & 15 & 44 & 44 & 29 & 29 & 2,275 & 2,210 & 2,274 & 2,273 & 368 & 403 & 30.3 & 33.5 \\
            \multicolumn{1}{|c|}{}                      & $q_5$  & 16 & 16 & 60 & 60 & 44 & 44 & 13,522 & 13,085 & 13,521 & 13,424 & 3.2s & 3.2s & 211.7 & 208.3 \\
            \hline
            \multicolumn{1}{|c|}{\multirow{5}{*}{$\mathsf{SF}$}} & $q_1$ & 1 & 1 & 3 & 3 & 2 & 2 & 1 & 1 & 0 & 0 & 1 & 1 & 0.053 & 2.1  \\
            \multicolumn{1}{|c|}{}                        & $q_2$ & 125 & 125 & 375 & 375 & 250 & 250 & 125 & 125 & 124 & 124 & 30 & 33 & 5.1 & 7.1 \\
            \multicolumn{1}{|c|}{}                        & $q_3$ & 1,000 & 1,000 & 3,000 & 3,000 & 2,000 & 2,000 & 1,000 & 1,000 & 999 & 999 & 227 & 237 & 12.6 & 14.7 \\
            \multicolumn{1}{|c|}{}                        & $q_4$ & 8,000 & 8,000 & 24,000 & 24,000 & 16,000 & 16,000 & 8,000 & 8,000 & 7,999 & 7,999 & 2s & 2.2s & 77.2 & 77.0 \\
            \multicolumn{1}{|c|}{}                        & $q_5$ & 27,000 & 27,000 & 162,000 & 162,000 & 108,000 & 108,000 & 27,000 & 27,000 & 26,999 & 26,999 & 12.4s & 12.4s & 560.7 & 561.6  \\
            \hline
            \hline
        \end{tabular}}
\end{table}

\renewcommand{\tabcolsep}{1pt}
\begin{table}[t]
    \centering
    \tbl{The impact of parallelization on the rewriting. \label{tab:decomposition}}{
        \begin{tabular}{cc|c|c|c|c|c|c|c|c|c|c|c|c|c|c|c|c|c|c|}
            \cline{4-16}
            \multicolumn{1}{c}{~} & \multicolumn{1}{c}{~} & \multicolumn{1}{c|}{~} & \multicolumn{2}{c|}{Size} & \multicolumn{2}{c|}{Explored} & \multicolumn{2}{c|}{Generated} & \multicolumn{5}{c|}{Time (ms)} &  \multicolumn{2}{c|}{{Memory (MB)}}\\
            \cline{3-16}
            ~&~&~&~&~&~&~&~&~&~&~&~&~&~&~&~\\
            ~ & ~ & \multicolumn{1}{c|}{Comp} & \textsc{Base} & \textsc{Para} & \textsc{Base} & \textsc{Para} & \textsc{Base} & \textsc{Para} & \textsc{Base} & \textsc{Para} & Rew & Split & Unfold & \textsc{Base} & \textsc{Para} \\
            ~&~&~&~&~&~&~&~&~&~&~&~&~&~&~&~\\
            \hline
            \multicolumn{1}{|c|}{\multirow{5}{*}{$\mathsf{V}$}} & $q_1$ & 1 & 15 & 15 & 15 & 15 & 14 & 14 & 9 & 14 & 14 & 0 & 0 & 4.3 & 4.3 \\
            \multicolumn{1}{|c|}{}                   & $q_2$ & 3 & 10 & 10 & 10 & 12 & 9 & 9 & 7 & 4 & 3 & 1 & 0 & 6.3 & 6.4 \\
            \multicolumn{1}{|c|}{}                   & $q_3$ & 3 & 72 & 72 & 72 & 28 & 71 & 25 & 45 & 25 & 24 & 1 & 2 & 6.7 & 6.7 \\
            \multicolumn{1}{|c|}{}                   & $q_4$ & 3 & 185 & 185 & 185 & 43 & 184 & 40 & 115 & 26 & 26 & 0 & 3 & 7.5 & 7.4 \\
            \multicolumn{1}{|c|}{}                   & $q_5$ & 7 & 30 & 30 & 30 & 14 & 29 & 7 & 28 & 16 & 16 & 0 & 2 & 6.6 & 6.7 \\
            \hline
            \multicolumn{1}{|c|}{\multirow{5}{*}{$\mathsf{S}$}} & $q_1$ & 1 & 6 & 6 & 6 & 6 & 7 & 7 & 2 & 2 & 2 & 0 & 0 & 4.2 & 4.2 \\
            \multicolumn{1}{|c|}{}                   & $q_2$ & 1 & 2 & 2 & 2 & 2 & 1 & 1 & 2 & 2 & 1 & 0 & 0 & 8.2 & 8.2 \\
            \multicolumn{1}{|c|}{}                   & $q_3$ & 1 & 4  & 4 & 4 & 4 & 3 & 3 & 8 & 3 & 2 & 0 & 0 & 8.2 & 8.2 \\
            \multicolumn{1}{|c|}{}                   & $q_4$ & 2 & 4 & 4 & 4 & 4 & 3 & 2 & 2 & 3 & 2 & 0 & 0 & 8.2 & 8.2 \\
            \multicolumn{1}{|c|}{}                   & $q_5$ & 2 & 8 & 8 & 8 & 6 & 7 & 4 & 3 & 4 & 3 & 1 & 0 & 8.3 & 8.3 \\
            \hline
            \multicolumn{1}{|c|}{\multirow{5}{*}{$\mathsf{U}$}} & $q_1$ & 2 & 2 & 2 & 5 & 6 & 4 & 4 & 3 & 4 & 3 & 0 & 0 & 6.2 & 6.2 \\
            \multicolumn{1}{|c|}{}                   & $q_2$ & 1 & 1 & 1 & 1 & 1 & 0 & 0 & 1 & 1 & 1 & 0 & 0 & 4.1 & 4.1 \\
            \multicolumn{1}{|c|}{}                   & $q_3$ & 4 & 4 & 4 & 12 & 9 & 11 & 5 & 7 & 4 & 3 & 1 & 1 & 8.3 & 8.3 \\
            \multicolumn{1}{|c|}{}                   & $q_4$ & 1 & 2 & 2 & 5 & 5 & 4 & 4 & 3 & 3 & 2 & 0 & 0 & 8.2 & 8.2 \\
            \multicolumn{1}{|c|}{}                   & $q_5$ & 2 & 10 & 10 & 25 & 10 & 24 & 8 & 8 & 5 & 5 & 0 & 0 & 8.3 & 8.3 \\
            \hline
            \multicolumn{1}{|c|}{\multirow{5}{*}{$\mathsf{A}$}} & $q_1$ & 1 & 299 & 299 & 679 & 679 & 725 & 725 & 729 & 282 & 281 & 1 & 0 & 17.0 & 17.0 \\
            \multicolumn{1}{|c|}{}                   & $q_2$ & 1 & 94 & 94 & 1,772 & 1,772 & 1,783 & 1,783 & 1.2s & 853 & 852 & 1 & 0 & 33.4 & 33.4 \\
            \multicolumn{1}{|c|}{}                   & $q_3$ & 3 & 104 & 104 & 4,752 & 4,754 & 4,751 & 4,751 & 3.5s & 2.5s & 2.5s & 3 & 7 & 97.5 & 49.8 \\
            \multicolumn{1}{|c|}{}                   & $q_4$ & 1 & 456 & 456 & 6,740 & 6,740 & 6,838 & 6,838 & 3.5s & 3.5s & 3.5s & 1 & 0 & 65.9 & 93.9 \\
            \multicolumn{1}{|c|}{}                   & $q_5$ & 2 & 624 & 624 & 69,448 & 69,449 & 70,457 & 70,486 & 49.8s & 43.4s & 43.4s & 5 & 18 & 981.5 & 865.0 \\
            \hline
            \multicolumn{1}{|c|}{\multirow{5}{*}{$\mathsf{P5}$}}& $q_1$ & 1 & 6 & 6 & 14 & 14 & 13 & 13 & 2 & 2 & 1 & 0 & 0 & 4.1 & 4.1 \\
            \multicolumn{1}{|c|}{}                   & $q_2$ & 1 & 10 & 10 & 77 & 77 & 80 & 80 & 8 & 9 & 8 & 0 & 0 & 8.6 & 8.6 \\
            \multicolumn{1}{|c|}{}                   & $q_3$ & 1 & 13 & 13 & 400 & 400 & 413 & 413 & 52 & 61 & 61 & 0 & 0 & 11.4 & 11.4 \\
            \multicolumn{1}{|c|}{}                   & $q_4$ & 1 & 15 & 15 & 2,210 & 2,210 & 2,273 & 2,273 & 403 & 400 & 399 & 1 & 0 & 33.5 & 33.5 \\
            \multicolumn{1}{|c|}{}                   & $q_5$ & 1 & 16 & 16 & 13,085 & 13,085 & 13,424 & 13,424 & 3.2s & 3.1s & 3.1s & 0 & 0 & 208.2 & 208.6 \\
            \hline
            \multicolumn{1}{|c|}{\multirow{5}{*}{$\mathsf{SF}$}}& $q_1$ & 3 & 1 & 1 & 1 & 3 & 0 & 0 & 1 & 3 & 2 & 1 & 1 & 2.1 & 6.2 \\
            \multicolumn{1}{|c|}{}                   & $q_2$ & 3 & 125 & 125 & 125 & 15 & 124 & 12 & 33 & 6 & 5 & 0 & 1 & 7.1 & 6.6 \\
            \multicolumn{1}{|c|}{}                   & $q_3$ & 3 & 1,000 & 1,000 & 1,000 & 30 & 999 & 27 & 237 & 15 & 14 & 1 & 10 & 14.7 & 9.1 \\
            \multicolumn{1}{|c|}{}                   & $q_4$& 3 & 8,000 & 8,000 & 8,000 & 60 & 7,999 & 57 & 2.2s & 82 & 82 & 0 & 73 & 770.4 & 274.4 \\
            \multicolumn{1}{|c|}{}                   & $q_5$ & 6 & 27,000 & 27,000 & 27,000 & 39 & 26,999 & 33 & 12.4s & 472 & 471 & 1 & 464 & 561.6 & 121.4 \\
            \hline
            \hline
            \multicolumn{1}{|c|}{\multirow{5}{*}{$\mathsf{CLQ}$}}& $q_1$ & 1 & 38 & 38 & 38 & 38 & 57 & 57 & 102 & 8 & 8 & 0 & 0 & 4.6 & 4.6 \\
            \multicolumn{1}{|c|}{}                   & $q_2$ & 2 & 38 & 38 & 38 & 39 & 54 & 56 & 140 & 15 & 14 & 1 & 1 & 4.6 & 4.7 \\
            \multicolumn{1}{|c|}{}                   & $q_3$ & 4 & 152 & 152 & 152 & 44 & 223 & 59 & 864 & 17 & 17 & 0 & 6 & 7.5 & 5.5 \\
            \multicolumn{1}{|c|}{}                   & $q_4$ & 5 & 5,776 & 5,776 & 5,776 & 82 & 9,871 & 112 & 48.3s & 317 & 316 & 1 & 304 & 287.4 & 87.08 \\
            \hline
            \hline
        \end{tabular}
    }
\end{table}

\paragraph{\small{\textsf{Query Elimination.}}}
The first optimization we consider is query elimination (introduced
in Section~\ref{sec:ucq-optimization}). Query elimination requires
linearity of the TGDs, therefore we exclude the $\mathsf{CLQ}$
ontology from the analysis. Table~\ref{tab:atom-coverage} quantifies
the gain produced by query elimination (\textsc{QE}) against a
baseline (\textsc{Base}), where $\mathsf{XRewrite}$ is run without
applying any additional optimization steps (see
Section~\ref{sec:algorithm-tgd-rewrite}).

Query elimination provides a substantial advantage in terms of the
size of the rewriting for the ontologies $\mathsf{U}$ and
$\mathsf{S}$. In particular, for $q_2$ in $\mathsf{U}$ and
$\mathsf{S}$, all but one atoms are eliminated from the input
queries, thus resulting in a 98\% reduction in the size of the
rewriting.
On the other side, query elimination is ineffective on $\mathsf{V}$
and $\mathsf{P5}$. For the ontology $\mathsf{V}$, the test queries,
as well as all the queries generated during the rewriting process,
are already ``minimal'' in the sense that no atoms are eliminated
after applying query elimination.
As a natural consequence, query elimination has also a beneficial
effect on the exploration of the rewriting search space since entire
branches of the exploration space are pruned. This also impacts the
running time and the memory consumption. Again, a substantial
improvement is observed on $\mathsf{S}$ and $\mathsf{U}$ both in
terms of explored and generated queries. For ontologies
$\mathsf{P5}$ and $\mathsf{A}$ we observe a gain in the exploration
and generation of queries, although this does not translates to a
substantially smaller size of the final rewriting. It is worth
noting that, even when query elimination is less effective (i.e.,
$\mathsf{A}$, $\mathsf{P5}$ and $\mathsf{V}$), the impact of the
additional checks on the rewriting time and memory consumption is
negligible.

\paragraph{\small{\textsf{Parallelize the Rewriting.}}}
%
We now discuss how the decomposition-based parallelization of the
rewriting procedure (Section~\ref{sec:parallelize}) impacts the
rewriting metrics. Differently from query elimination,
parallelization is applicable regardless of the expressive power of
the input ontology. Table~\ref{tab:decomposition} summarizes the
results, where \textsc{Para} denotes $\mathsf{XRewrite}$ with
parallelization (and query elimination). The comparison is carried
out against a baseline ({\textsc{Base}}), where only query
elimination is applied. Since the parallelization cannot reduce the
final size of the rewriting, we report the size of the rewriting
only to complement the results of Table~\ref{tab:atom-coverage} with
the size of the rewriting for $\mathsf{CLQ}$, where query
elimination is not applied.
The number of components (Comp), computed for each
query and for each ontology, is also reported. As before, we also
give the number of explored and generated CQs. Along with the
overall rewriting time, we also report the time to rewrite all
components (Rew), the time necessary to decompose the query under
consideration (Split), and to unfold the rewritten components
(Unfold). As usual, we also report the impact of the optimization on
memory consumption.

An immediate conclusion is that, when the input query is
decomposable, the rewriting search space can often be explored more
efficiently. For certain ontologies, such as $\mathsf{SF}$ and
$\mathsf{CLQ}$, the gain is substantial and is also generally
reflected into a lower rewriting time and memory consumption. For
other ontologies, such as $\mathsf{V}$, even if the input query is
fully decomposable into atomic components, e.g., $q_5$, the
decomposition could result in a loss of performance due to the
overhead introduced by multi-threaded execution of FO-Rewriters. On
the other hand, it is worth noting that this occurs for queries that
can be already rewritten very quickly even without applying query
elimination. The results on the ontology $\mathsf{A}$ deserve
further explanation. As it can be seen, for both $q_3$ and $q_5$,
the number of explored queries increases. The reason is that for
$q_3$ (resp., $q_5$) two (resp., one) of the computed components do
not get rewritten, and therefore they count as two (resp., one)
additional explored queries, but no substantial gain is obtained
from such a decomposition. This is not the case without
decomposition since they would have all be part of a unique query,
counting as a single explored query. In addition, parallelization
can potentially prevent applicability of query elimination if the
covered and covering atoms reside in two different components.
Another interesting observation is that the decomposition is more
effective when the rewriting search space can be partitioned into
fairly similar subsets that can be explored by rewriting
independently each component. This is the case, e.g., for $q_5$ on
$\mathsf{SF}$ but not for $q_3$ and $q_5$ on $\mathsf{A}$, where
some components do not generate any rewriting.
If we consider those tests where decomposition is more effective,
e.g., $\mathsf{SF}$ and $\mathsf{CLQ}$, we observe that most of the
time is spent unfolding the rewritten components into a UCQ. A
possible way of tackling this problem is to keep the rewriting
``folded'', i.e., as a non-recursive Datalog rewriting; more details
can be found in Section~\ref{appsec:nr-datalog-rewriting}.

\renewcommand{\tabcolsep}{0.4pt}
\begin{table}[t]
    \centering
    \tbl{The impact of subsumption check on the rewriting. \label{tab:subsumption-check}}{
        \begin{tabular}{cc|c|c|c|c|c|c|c|c|c|c|c|c|c|c|c|c|}
            \cline{3-18}
             ~ & ~ & \multicolumn{4}{c|}{{Size}} & \multicolumn{2}{c|}{Explored} & \multicolumn{2}{c|}{Generated} & \multicolumn{4}{c|}{Time (ms)} &  \multicolumn{4}{c|}{Memory (MB)}\\
            \cline{3-18}
            ~&~&~&~&~&~&~&~&~&~&~&~&~&~&~&~&~&~\\
            ~ & ~ & \textsc{Base} & \textsc{Tail} & \textsc{IDec} & \textsc{IRew} & \textsc{Base} & \textsc{IRew} & \textsc{Base} & \textsc{IRew} & \textsc{Base} & \textsc{Tail} & \textsc{IDec} & \textsc{IRew} & \textsc{Base} & \textsc{Tail} & \textsc{IDec} & \textsc{IRew}\\
            ~&~&~&~&~&~&~&~&~&~&~&~&~&~&~&~&~&~\\
            \hline
            \multicolumn{1}{|c|}{\multirow{5}{*}{$\mathsf{V}$}} & $q_1$ & 15 & 15 & 15 & 15 & 15 & 15 & 14 & 14 & 14 & 12 & 13 & 10 & 4.3 & 4.3 & 4.3 & 4.3 \\
            \multicolumn{1}{|c|}{}                   & $q_2$ & 10 & 10 & 10 & 10 & 12 & 12 & 9 & 9 & 4 & 9 & 7 & 26 & 6.4 & 6.4 & 6.4 & 6.4 \\
            \multicolumn{1}{|c|}{}                   & $q_3$ & 72 & 72 & 72 & 72 & 28 & 28 & 25 & 25 & 25 & 28 & 15 & 24 & 6.6 & 6.6 & 6.6 & 6.6 \\
            \multicolumn{1}{|c|}{}                   & $q_4$ & 185 & 185 & 185 & 185 & 43 & 43 & 40 & 40 & 26 & 75 & 28 & 55 & 7.4 & 7.4 & 7.4 & 7.4 \\
            \multicolumn{1}{|c|}{}                   & $q_5$ & 30 & 30 & 30 & 30 & 14 & 14 & 7 & 7 & 16 & 12 & 11 & 14 & 6.7 & 6.7 & 6.7 & 6.7 \\
            \hline
            \multicolumn{1}{|c|}{\multirow{5}{*}{$\mathsf{S}$}} & $q_1$ & 6 & 6 & 6 & 6 & 6 & 6 & 7 & 7 & 2 & 2 & 3 & 3 & 4.2 & 4.2 & 4.2 & 4.2 \\
            \multicolumn{1}{|c|}{}                   & $q_2$ & 2 & 2 & 2 & 2 & 2 & 2 & 1 & 1 & 2 & 2 & 3 & 2 & 8.2 & 8.2 & 8.2 & 8.2 \\
            \multicolumn{1}{|c|}{}                   & $q_3$ & 4 & 4 & 4 & 4 & 4 & 4 & 3 & 3 & 3 & 3 & 3 & 2 & 8.3 & 8.3 & 8.3 & 8.3 \\
            \multicolumn{1}{|c|}{}                   & $q_4$ & 4 & 4 & 4 & 4  & 4 & 4 & 2 & 2 & 3 & 3 & 4 & 5 & 8.3 & 8.3 & 8.3 & 8.3 \\
            \multicolumn{1}{|c|}{}                   & $q_5$ & 8 & 8 & 8 & 8 & 6 & 6 & 4 & 4 & 4 & 3 & 6 & 5 & 8.3 &8.3 & 8.3 & 8.3 \\
            \hline
            \multicolumn{1}{|c|}{\multirow{5}{*}{$\mathsf{U}$}} & $q_1$ & 2 & 2 & 2 & 2 & 6 & 6 & 4 & 4 & 4 & 4 & 6 & 5 & 6.2 & 6.2 & 6.2 & 6.2 \\
            \multicolumn{1}{|c|}{}                   & $q_2$ & 1 & 1 & 1 & 1 & 1 & 1 & 0 & 0 & 1 & 1 & 1 & 2 & 4.2 & 4.2 & 4.2 & 4.2 \\
            \multicolumn{1}{|c|}{}                   & $q_3$ & 4 & 4 & 4 & 4 & 9 & 9 & 5 & 5 & 4 & 4 & 3 & 5 & 8.3 & 8.3 &8.3 & 8.3 \\
            \multicolumn{1}{|c|}{}                   & $q_4$ & 2 & 2 & 2 & 2 & 5 & 5 & 4 & 4 & 3 & 2 & 4 & 3 & 8.3 & 8.3 & 8.3 & 8.3 \\
            \multicolumn{1}{|c|}{}                   & $q_5$ & 10 & 10 & 10 & 10 & 10 & 10 & 8 & 8 & 5 & 3 & 6 & 5 & 8.3 & 8.3 & 8.3 & 8.3 \\
            \hline
            \multicolumn{1}{|c|}{\multirow{5}{*}{$\mathsf{A}$}} & $q_1$ & 299 & 27 & 27 & 27 & 679 & 41 & 725 & 45 & 282 & 771 & 325 & 168 & 17.0 & 12.0 & 17.0 & 8.7 \\
            \multicolumn{1}{|c|}{}                   & $q_2$ & 94 & 50 & 50 & 50 & 1,772 & 1,431 & 1,783 & 1,456 & 853 & 1.2s & 917 & 15s & 33.5 & 33.7 & 33.2 & 32.3 \\
            \multicolumn{1}{|c|}{}                   & $q_3$ & 104 & 104 & 104 & 104 & 4,754 & 4,468 & 4,751 & 4,467 & 2.5s & 2.8s & 2.5s & 2m & 49.9 & 50.0 & 46.3 & 43.2 \\
            \multicolumn{1}{|c|}{}                   & $q_4$ & 456 & 224 & 224 & 224 & 6,740 & 3,159 & 6,838 & 3,410 & 3.4s & 3.6s & 3.4s & 1.3m & 93.9 & 111.9 & 97.5 & 50.2 \\
            \multicolumn{1}{|c|}{}                   & $q_5$ & 624 & 624 & 624 & 624 & 69,449 & 32,922 & 70,486 & 38,902 & 43.4s & 44.5s & 43.2s & \dag & 865.1 & 859.7 & 863.9 & \dag \\
            \hline
            \multicolumn{1}{|c|}{\multirow{5}{*}{$\mathsf{P5}$}} & $q_1$ & 6 & 6 & 6 & 6 & 14 & 14 & 13 & 13 & 2 & 2 & 3 & 2 & 4.2 & 4.2 & 4.2 & 4.2 \\
            \multicolumn{1}{|c|}{}                   & $q_2$ & 10 & 10 & 10 & 10 & 77 & 25 & 80 & 47 & 9 & 9 & 11 & 15 & 8.6 & 8.6 & 8.6 & 8.6 \\
            \multicolumn{1}{|c|}{}                   & $q_3$ & 13 & 13 & 13 & 13 & 400 & 60 & 413 & 208 & 61 & 52 & 53 & 303 & 11.4 & 11.4 & 11.4 & 14.95 \\
            \multicolumn{1}{|c|}{}                   & $q_4$ & 15 & 15 & 15 & 15 & 2210 & 180 & 2273 & 936 & 400 & 391 & 375 & 11s & 33.5 & 33.5 & 33.5  & 124.2\\
            \multicolumn{1}{|c|}{}                   & $q_5$ & 16 & 16 & 16 & 16 & 13085 & 725 & 13424 & 5188 & 3s & 3.1s & 3.3s & \dag & 208.6 & 208.4 & 208.3 & \dag \\
            \hline
            \multicolumn{1}{|c|}{\multirow{5}{*}{$\mathsf{SF}$}} & $q_1$ & 1 & 1 & 1 & 1 & 3 & 3 & 0 & 0 & 3 & 30 & 3 & 5 & 6.2 & 6.2 & 6.2 & 6.2 \\
            \multicolumn{1}{|c|}{}                   & $q_2$ & 125 & 125 & 125 & 125 & 15 & 15 & 12 & 12 & 6 & 97 & 11 & 11 & 6.6 & 6.6 & 6.6 & 6.6 \\
            \multicolumn{1}{|c|}{}                   & $q_3$ & 1,000 & 1,000 & 1,000 & 1,000 & 30 & 30 & 27 & 27 & 15 & 1.5s & 23 & 28 & 9.1 & 6.6 & 9.1 & 9.1\\
            \multicolumn{1}{|c|}{}                   & $q_4$ & 8,000 & 8,000 & 8,000 & 8,000 & 60 & 60 & 57 & 57 & 82 & 83s & 84 & 89 & 27.4 & 83.6 & 27.4 & 27.4 \\
            \multicolumn{1}{|c|}{}                   & $q_5$ & 27,000 & 27,000 & 27,000 & 27,000 & 39 & 39 & 33 & 33 & 472 & \dag & 427 & 415 & 121.5 & 384.5 & 121.5 & \dag \\
            \hline
            \hline
            \multicolumn{1}{|c|}{\multirow{5}{*}{$\mathsf{CLQ}$}} & $q_1$ & 38 & 38 & 38 & 38 & 38 & 38 & 57 & 57 & 8 & 41 & 28 & 45 & 4.6 & 5.1 & 5.1 & 5.1 \\
            \multicolumn{1}{|c|}{}                   & $q_2$ & 38 & 38 & 38 & 38 & 39 & 39 & 56 & 56 & 15 & 41 & 31 & 60 & 4.7 & 5.3 & 5.3 & 5.3 \\
            \multicolumn{1}{|c|}{}                   & $q_3$ & 152 & 152 & 152 & 152 & 44 & 44 & 59 & 59 & 17 & 1.3s & 38 & 56 & 5.5 & 17.7 & 6.0 & 6.0 \\
            \multicolumn{1}{|c|}{}                   & $q_4$ & 5,776 & 5,776 & 5,776 & 5,776 & 82 & 82 & 112 & 112 & 317 & \dag & 41 & 426 & 87.1 & \dag & 88.1 & 88.1 \\
            \hline
            \hline
        \end{tabular}
    }
\end{table}

\paragraph{\small{\textsf{Query Subsumption.}}}
An common way of reducing the size of the rewriting is to check for
queries that are subsumed by some other queries in the rewriting and
eliminate them. Formally, given two CQs $q_1$ and $q_2$, we say that
$q_1$ \emph{subsumes} $q_2$ if there exists a homomorphism $h$ such
that $h(\body{q_1}) \subseteq \body{q_2}$ and $h(\head{q_1}) =
\head{q_2}$. Let us clarify that such a (query) subsumption check is
not explicitly included as part of $\mathsf{XRewrite}$; it is a well-known
technique that can be exploited by any rewriting algorithm.
\irispm implements query subsumption using three different modes.
The first mode (\textsc{Tail}) consists of applying an exhaustive
subsumption check for each pair of queries in the final rewriting,
and by eliminating the subsumed ones together with all its
descendants according to the query graph. The procedure preserves
the subsumee in case it is a descendant of the subsumed query. This
mode guarantees a minimal number of CQs in the final rewriting.
The \emph{intra-decomposition} mode (\textsc{IDec}) applies the
subsumption check at the end of the rewriting of a single component
obtained after the decomposition of the input query. This mode has
the advantage that the subsumption check is applied on smaller
queries and on smaller rewriting sets; however, it does not
guarantee minimality of the final rewriting since a redundant query
may be obtained during the unfolding step. Note that, if the query
is not decomposable, then \textsc{IDec} coincides with
\textsc{Tail}.
The \emph{intra-rewriting} mode (\textsc{IRew}) applies the
subsumption check every time a new query is generated by a rewriting
step. This mode has the advantage of shrinking the rewriting search
space by pruning redundant CQs as soon as they are generated, but
has the disadvantage that it might prevent completeness. As for
\textsc{IDec}, if a query is decomposable, then \textsc{IRew} does
not guarantee minimality; otherwise, \textsc{IRew} coincides with
\textsc{Tail}.

Table~\ref{tab:subsumption-check} reports on the impact of the three modes
above on the final rewriting. The
comparison is carried out against a baseline (\textsc{Base}), where
query elimination and parallelization are applied. Notice that the
number of the explored and generated CQs is reported only for
\textsc{IRew} since is the only subsumption check mode that has a
potential effect on the exploration of the rewriting search space.
The last two groups of columns report on the effect of the different
subsumption check modes on the rewriting time and memory
consumption. The symbol ``\dag'' denotes that the rewriting
did not terminate within 15 minutes.

A first interesting observation is that the baseline algorithm
already computes a minimal rewriting in most of the cases, with the
exception of queries $q_1$, $q_2$ and $q_4$ on $\mathsf{A}$. Also,
the number of explored and generated queries matches those explored
and generated by the intra-rewriting subsumption check for all
queries in $\mathsf{S}$, $\mathsf{U}$, $\mathsf{SF}$ and
$\mathsf{CLQ}$. On the other hand, \textsc{IRew} adds a substantial
burden in terms of rewriting time, becoming impractical for $q_5$ on
$\mathsf{A}$ and $\mathsf{P5}$, where our algorithm does not
terminate within 15 minutes from its invocation. Another observation
is that, despite the fact that only \textsc{Tail} provably
guarantees the minimality of the rewriting, both \textsc{IDec} and
\textsc{IRew} produce a minimal number of CQs for the given input
queries and ontologies. Also, \textsc{Tail} becomes impractical for
complex queries on $\mathsf{SF}$ and $\mathsf{CLQ}$, whereas both
\textsc{IDec} and \textsc{IRew} terminate with timings comparable to
the baseline.

In summary, our tests indicate that \textsc{IDec} provides a good
trade-off between the need for minimization of the rewriting and
performance. Also, it seems that the amount of resources necessary
to remove redundant queries via \textsc{Tail} or \textsc{IRew} is
not justified by the gain in size, especially if we consider caching
mechanisms at the database level.

\renewcommand{\tabcolsep}{0.4pt}
\begin{table}[t]
    \centering
    \tbl{Propagation and cover graphs. \label{tab:prop-graph}}{
        \begin{tabular}{c|c|c|c|c|c|c|c|}
            \cline{2-8}
            ~ & \multicolumn{2}{c|}{~Size (\#nodes,\#edges)~} & \multicolumn{1}{c|}{LP} & \multicolumn{2}{c|}{Time (ms)} &  \multicolumn{2}{c|}{Memory}\\
            \cline{2-8}
            ~&~&~&~&~&~&~&~\\
            ~ & \textsc{P-Graph} & \textsc{C-Graph} & \textsc{C-Graph} & \textsc{P-Graph} & \textsc{C-Graph} & \textsc{P-Graph} & \textsc{C-Graph} \\
            ~&~&~&~&~&~&~&~\\
            \hline
            \multicolumn{1}{|c|}{$\mathsf{V}$} & (214,445) & (214,1194) & 7 & 4 & 158 & 190Kb & 4.7Mb \\
            \hline
            \multicolumn{1}{|c|}{$\mathsf{S}$} & (41,103) & (41,405) & 8 & 1 & 120 & 45Kb & 4,4Mb \\
            \hline
            \multicolumn{1}{|c|}{$\mathsf{U}$} & (86,189) & (86,416) & 6 & 1 & 37 & 81Kb & 4.4Mb \\
            \hline
            \multicolumn{1}{|c|}{$\mathsf{A}$} & (135,319) & (135,1133) & 10 & 43 & 708 & 154Kb &   4.7Mb \\
            \hline
            \multicolumn{1}{|c|}{$\mathsf{P5}$} & (15,32) & (15,43) & 3 & 0 & 1 & 14Kb & 4.3Mb \\
            \hline
            \multicolumn{1}{|c|}{$\mathsf{SF}$} & (100,195) & (100,1,050) & 19 & 1 & 346 & 84Kb & 4.8Mb \\
            \hline
            \multicolumn{1}{|c|}{$\mathsf{CLQ}$} & (11,143) & N/A & N/A & 2 & N/A & 39Kb & N/A \\
            \hline
            \hline
        \end{tabular}
    }
\end{table}

\subsection{Computing the Support Data
Structures}\label{sec:computing-support-data-structures}
%

%
$\mathsf{XRewrite}$ relies on a number of data structures, i.e.,
query, propagation, and cover graphs, supporting the rewriting
process. A natural question is how large such data structures can be
and how long does it take to compute them.

For the query graph the answer to such questions is straightforward
since its maximum size corresponds to the number of queries
generated by \tgdrewrite when no subsumption check is applied.
Similarly, the time to compute it and the memory consumption
correspond roughly to the rewriting time and the total memory usage
of \tgdrewrite.

Differently from the query graph, the propagation and the cover
graph depend only on the input ontology and not on the input query.
Table~\ref{tab:prop-graph} reports the characteristics of both the
propagation graph (\textsc{P-Graph}) and the cover graph
(\textsc{C-Graph}) constructed for each ontology.
In particular, we report on the size of the two structures in terms
of the number of nodes and edges, the time necessary to construct
them, and their memory footprint. For the cover graph, we also
report the length of the longest label on an edge (LP),
corresponding to the longest tight sequence of TGDs that we have to
consider during the computation of cover sets. Since query
elimination can be applied only to linear TGDs, for the sticky
ontology $\mathsf{CLQ}$ no cover graph is computed.

Apart from $\mathsf{S}$ and $\mathsf{V}$, in all other cases, the
time to compute the cover graph is either negligible or comparable
to the time to rewrite a query w.r.t. the corresponding ontology. For
$\mathsf{S}$ and $\mathsf{V}$, the reason of the higher cost
compared with the time necessary to rewrite the input queries is to
be found in the fact that these ontologies are relatively simple and
most of the machinery devised for the general case is not needed to
efficiently handle these cases. On the other hand, considered the
improvements that these two structures bring in terms of rewriting
size, rewriting time, and memory consumption for the general case,
it is certainly worthwhile to make use of them.


\renewcommand{\tabcolsep}{1pt}
\begin{table}[t]
    \centering
    \tbl{\alaska vs \irispm. \label{tab:comparative}}{
        \begin{tabular}{cc|c|c|c|c|c|c|c|c|c|c|c|}
            \cline{3-13}
            ~ & ~ & \multicolumn{2}{c|}{{Size}} & \multicolumn{2}{c|}{Explored} & \multicolumn{2}{c|}{Generated} & \multicolumn{3}{c|}{Time (ms)} &  \multicolumn{2}{c|}{Memory (MB)}\\
            \cline{3-13}
            ~&~&~&~&~&~&~&~&~&~&~&~&~\\
            ~ & ~ & \alaska & \irispm & \alaska & \irispm & \alaska & \irispm & \alaska & \irispm & S & \alaska & \irispm \\
            ~&~&~&~&~&~&~&~&~&~&~&~&~\\
            \hline
           \multicolumn{1}{|c|}{\multirow{5}{*}{$\mathsf{V}$}} & $q_1$ & 15 & 15 & 15 & 15 & 14 & 14 & 116 & 13 & \checkmark & .024 & 4.3 \\
          \multicolumn{1}{|c|}{}                       & $q_2$ & 10 & 10 & 10 & 12 & 9 & 9 & 19 & 11 & $\times$ & .024 & 6.4 \\
          \multicolumn{1}{|c|}{}                       & $q_3$ & 72 & 72 & 72 & 28 & 117 & 25 & 36 & 21 & \checkmark & .054 & 6.7 \\
           \multicolumn{1}{|c|}{}                       & $q_4$ & 185 & 185 & 185 & 43 & 328 & 40 & 60 & 37 & \checkmark & .69 & 7.4 \\
           \multicolumn{1}{|c|}{}                       & $q_5$ & 30 & 30 & 30 & 14 & 59 & 7 & 5 & 13 & $\times$ & .174 & 6.7 \\
            \hline
            \multicolumn{1}{|c|}{\multirow{5}{*}{$\mathsf{S}$}} & $q_1$ & 6 & 6 & 6 & 6 & 9 & 7 & 0 & 2 & \checkmark & .039 & 4.2 \\
            \multicolumn{1}{|c|}{}                       & $q_2$ & 2 & 2 & 48 & 2 & 288 & 1 & 7 & 2 & \checkmark & .004 & 8.2 \\
            \multicolumn{1}{|c|}{}                       & $q_3$ & 4 & 4 & 54 & 4 & 686 & 3 & 25 & 3 & \checkmark & .002 & 8.3 \\
            \multicolumn{1}{|c|}{}                       & $q_4$ & 4 & 4 & 192 & 4 & 1,632 & 2 & 56 & 4 & \checkmark & .005 & 8.3 \\
            \multicolumn{1}{|c|}{}                       & $q_5$ & 8 & 8 & 224 & 6 & 3,424 & 4 & 195 & 5 & \checkmark & .013 & 8.3 \\
            \hline
            \multicolumn{1}{|c|}{\multirow{5}{*}{$\mathsf{U}$}} & $q_1$ & 2 & 2 & 5 & 6 & 4 & 4 & 23 & 6 & \checkmark & .011 & 6.3 \\
            \multicolumn{1}{|c|}{}                       & $q_2$ & 1 & 1 & 42 & 1 & 148 & 0 & 119 & 2 & \checkmark & .002 & 4.2 \\
            \multicolumn{1}{|c|}{}                       & $q_3$ & 4 & 4 & 48 & 9 & 260 & 5 & 82 & 5 & \checkmark & .001 & 8.3 \\
            \multicolumn{1}{|c|}{}                       & $q_4$ & 2 & 2 & 1,300 & 5 & 6,092 & 4 & 2.4s & 4 & \checkmark & .006 & 8.3 \\
            \multicolumn{1}{|c|}{}                       & $q_5$ & 10 & 10 & 100 & 10 & 1,430 & 8 & 233 & 5 & \checkmark & .003 & 8.3 \\
            \hline
            \multicolumn{1}{|c|}{\multirow{5}{*}{$\mathsf{A}$}} & $q_1$ & 27 & 27 & 457 & 679 & 1,307 & 725 & 517 & 324 & \checkmark & 16 & 17.0 \\
            \multicolumn{1}{|c|}{}                       & $q_2$ & 50 & 50 & 1,598 & 1,772 & 4,658 & 4,704 & 2s & 1.21s & \checkmark & .050 & 17.85 \\
            \multicolumn{1}{|c|}{}                       & $q_3$ & 104 & 104 & 4,477 & 4,754 & 1,3871 & 4,751 & 4.5s & 2.5s & \checkmark & .697 & 46.6 \\
            \multicolumn{1}{|c|}{}                       & $q_4$ & 224 & 224 & 4,611 & 6,740 & 15,889 & 6,838 & 3.8s & 3.5s & \checkmark & .716 & 97.7 \\
            \multicolumn{1}{|c|}{}                       & $q_5$ & 624 & 624 & 50,508 & 69,449 & 231,899 & 70,486 & 12.8m & 42.4s & \checkmark& 3.5 & 863.9 \\
            \hline
            \multicolumn{1}{|c|}{\multirow{5}{*}{$\mathsf{P5}$}} & $q_1$ & 6 & 6 & 14 & 14 & 13 & 13 & 0 & 2 & \checkmark & .004 & 4.2 \\
            \multicolumn{1}{|c|}{}                        & $q_2$ & 10 & 10 & 67 & 77 & 130 & 80 & 4 & 9 & \checkmark & .007 & 8.6 \\
            \multicolumn{1}{|c|}{}                        & $q_3$ & 13 & 13 & 332 & 400 & 1,001 & 413 & 74 & 50 & $\times$ & .010 & 11.4 \\
            \multicolumn{1}{|c|}{}                        & $q_4$ & 15 & 15 & 1,647 & 2,210 & 7,065 & 2,273 & 2.6s & 378 & \checkmark & 3.5 & 33.5 \\
            \multicolumn{1}{|c|}{}                        & $q_5$ & 16 & 16 & 8,186 & 13,085 & 47,608 & 13,424 & 2m & 3s & \checkmark & .914 & 208.3\\
            \hline
            \multicolumn{1}{|c|}{\multirow{5}{*}{$\mathsf{SF}$}} & $q_1$ & 1 & 1 & 1 & 3 & 0 & 0 & 0 & 3.5 & \checkmark & 1 & 6.2 \\
            \multicolumn{1}{|c|}{}                       & $q_2$ & 125 & 125 & 125 & 15 & 300 & 12 & 10 & 7 & $\times$ & 122 & 6.6 \\
            \multicolumn{1}{|c|}{}                       & $q_3$ & 1,000 & 1,000 & 1,000 & 30 & 2,800 & 27 & 193 & 19 & \checkmark & 973 & 9.1 \\
            \multicolumn{1}{|c|}{}                       & $q_4$ & 8,000 & 8,000 & 8,000 & 60 & 23,600 & 57 & 7.1s & 93 & \checkmark & 6.4 & 27.4 \\
            \multicolumn{1}{|c|}{}                       & $q_5$ & 27,000 & 27,000 & 27,000 & 39 & 135,000 & 33 & 3.6m & 425 & \checkmark & 40.0 & 121.5  \\
            \hline
            \hline
            \multicolumn{1}{|c|}{\multirow{5}{*}{$\mathsf{CLQ}$}} & $q_1$ & 38 & 38 & 38 & 38 & 218 & 57 & 23 & 25 & \checkmark & 37 & 5.1 \\
            \multicolumn{1}{|c|}{}                         & $q_2$ & 38 & 38 & 38 & 39 & 218 & 56 & 65 & 32 & \checkmark & 41 & 5.3 \\
            \multicolumn{1}{|c|}{}                         & $q_3$ & 152 & 152 & 152 & 44 & 1,452 & 59 & 1.3s & 36 & \checkmark & 193 & 6.0 \\
            \multicolumn{1}{|c|}{}                         & $q_4$ & 5,776 & 5,776 &  82 & 82 & 112 & 112 & \dag & 346 & - & \dag &
            48.8\\
            \hline
            \hline
        \end{tabular}
    }
\end{table}

\subsection{Comparative Evaluation}\label{sec:comparative evaluation}

Although several DL-based systems exist that can deal with the
DL-Lite$_{\R}$ ontologies in our tests, to the best of our knowledge
only \alaska (i.e., the reference implementation of~\cite{KLMT12})
supports ontological query answering under general TGDs. We believe
that limiting the comparison to these two systems is fair. DL-based
systems leverage specificities of DLs, such as the limitation to
unary and binary relations only, and the absence of variable
permutations in the axioms, that enable more efficient rewriting
techniques that cannot be easily extended to more general languages
such as TGDs; in fact, DL-based systems often resort to case-by-case
analysis on the syntactic form of DL axioms. In addition to the
queries provided by the benchmarks, we also generated 492 additional
queries using \tool{SyGENiA}~\cite{ISC*12}, an automatic query
generation tool for testing the completeness of rewriting-based DL
systems. These queries do not cover the non-DL ontologies
$\mathsf{SF}$ and $\mathsf{CLQ}$. For space reasons,
Table~\ref{tab:comparative} limits the results of the evaluation to
the benchmark queries. Results for the full (internal and
comparative) evaluation are available
online.\footnote{\url{https://www.dropbox.com/s/llueoa39y9xidfa/full_evaluation.zip}.}

For \alaska we chose the setting that consistently reported the smallest size of the rewriting
and, in case of a tie, the one with lower rewriting time, namely
{\tt ar-single} in \alaska terminology. In case of \irispm, we
apply query elimination, parallelization, and
intra-decomposition subsumption check.

Transient states of the experimental machines can bias running time
and memory consumption values. For a fair comparison, we run both
systems 10 times and report the median of the values to limit biases
due to outliers. Also, since code instrumentation for running time
can interfere with memory consumption values and vice-versa, 10 runs
have been performed only with code instrumented for running time and
other 10 with code instrumented for memory consumption. Moreover,
the column (S) shows whether the difference in running time between
\alaska and \irispm is statistically significant (\checkmark) or not
($\times$). For a query $q$, we say that the difference in running
time is significant if it is greater than the maximum standard
deviation recorded for $q$ on the two systems, i.e., if
$|\mathit{time}(q,\alaska) - \mathit{time}(q,\irispm)|
> \max\{f(q,\alaska), f(q,\irispm)\}$, where $\mathit{time}(q,s)$
is the rewriting time for $q$ on system $s$, and $f(q,s)$ denotes
the standard deviation recorded for $q$ on $s$ over the 10 runs.
As before, the symbol ``\dag'' denotes test-cases where the
rewriting process either did not terminate within 15 minutes, or it
did run out of memory. Regarding the running time, a value of 0
indicates a running time below the millisecond.

A first observation is that both systems return minimal UCQ
rewritings on the given test cases. A second observation is that query elimination allows \irispm to perform a better exploration of the rewriting search space on $\mathsf{V}$, $\mathsf{S}$ and $\mathsf{U}$, where it is more effective, while \alaska explores the search space better on $\mathsf{A}$ and
$\mathsf{P5}$. This is due to the better normalization of TGDs with
multiple heads applied by \alaska that we are planning to consider
also for \irispm. On the other hand, on these ontologies caching
allows \irispm to perform better than \alaska since both query
elimination and parallelization are rather ineffective on these
ontologies. On $\mathsf{SF}$ and $\mathsf{CLQ}$, parallelization
provides a fundamental contribution towards making the rewriting
manageable as the number of explored and generated queries is
drastically reduced. As expected, \alaska consumes substantially
less memory than \irispm and delivers better performance
than \irispm on simpler queries.

By extending the comparison to the full set of
\tool{SyGENiA}-generated queries, the following facts can be
observed. All generated queries have length (i.e., number of atoms)
less than 3, and are therefore considerably simpler than those
provided by the benchmark. This is due to the fact that
\tool{SyGENiA}'s goal is to test for completeness and is not meant
to stress-test the rewriting engines. On 80\% of the test queries,
\irispm generates a rewriting of the same size as \alaska while, for
the remaining 20\%, \alaska produces smaller rewritings. This is due
to the parallelization that prevents subsumption check across
components. By running \irispm with \textsc{Tail} subsumption check,
it can be verified that the outputs of \alaska and of \irispm
coincide in size for all queries. In terms of exploration and
generation of queries, \irispm explores and generates less queries
than \alaska in 78\% of the cases, while \alaska explores the search
space better in 22\% of the cases. This is again due to the
parallelization that prevents atom coverage from identifying
redundant atoms across different components.

\section{Conclusions}\label{sec:conclusions}

The problem of designing a practical query rewriting algorithm,
which is able to treat arbitrary TGDs, has been investigated. In
particular, a resolution-based query rewriting algorithm, called
\tgdrewrite, for linear and sticky TGDs has been proposed,
and several optimization techniques have been studied.
%
%
An extensive analysis on the impact of the proposed optimizations on
the rewriting process, as well as a comparison of our system with
the only known system which supports query rewriting under arbitrary
TGDs, that is, \textsc{Alaska} (i.e., the reference implementation
of~\cite{KLMT12}), have been also performed.
In the future, we would like to study in more depth the problem of
parallelizing the rewriting process. In particular, we are planning
to investigate more sophisticated techniques of decomposing the
input query into smaller queries that can be rewritten
independently. Also, effective execution of large rewritings in
forms of UCQs as well as Datalog rewritings will be investigated.


\bibliographystyle{bib-style}
\bibliography{main}


\elecappendix

\medskip

\section{Definitions and Background}\label{appsec:preliminaries}

\subsection{Technical Definitions}\label{appsec:technical-definitions}

\paragraph{\small{\textsf{Tuple-Generating Dependencies.}}}
A set $\dep$ of TGDs is in \emph{normal form} if each of its TGDs
has a single head-atom which contains only one occurrence of an
existentially quantified variable.
As shown, e.g., in~\cite{CaGP12}, every set $\dep$ of TGDs over a
schema $\R$ can be transformed in logarithmic space into a set
$\norm{\dep}$ over a schema $\R_{\norm{\dep}}$ in normal form of
size at most quadratic in $|\dep|$, such that $\dep$ and
$\norm{\dep}$ are equivalent w.r.t.~query answering.
For a TGD $\sigma \in \dep$, if $\sigma$ is already in normal form,
then $\norm{\sigma} = \{\sigma\}$; otherwise, assuming that
$\{\atom{a}_1,\ldots,\atom{a}_k\} = \head{\sigma}$,
$\{X_1,\ldots,X_n\} = \var{\body{\sigma}} \cap \var{\head{\sigma}}$,
and $Z_1,\ldots,Z_m$ are the existentially quantified variables of
$\sigma$, let $\norm{\sigma}$ be the set
\[
\begin{array}{rcl}
\body{\sigma} &\ra& \exists Z_1 \,
\mathit{p}_{\sigma}^{1}(X_1,\ldots,X_n,Z_1)\\
\mathit{p}_{\sigma}^{1}(X_1,\ldots,X_n,Z_1) &\ra& \exists
Z_2 \, \mathit{p}_{\sigma}^{2}(X_1,\ldots,X_n,Z_1,Z_2)\\
\mathit{p}_{\sigma}^{2}(X_1,\ldots,X_n,Z_1,Z_2) &\ra& \exists
Z_{3} \, \mathit{p}_{\sigma}^{3}(X_1,\ldots,X_n,Z_1,Z_2,Z_3)\\
&\vdots&\\
\mathit{p}_{\sigma}^{m-1}(X_1,\ldots,X_n,Z_1,\ldots,Z_{m-1}) &\ra&
\exists
Z_m \, \mathit{p}_{\sigma}^{m}(X_1,\ldots,X_n,Z_1,\ldots,Z_{m})\\
\mathit{p}_{\sigma}^{m}(X_1,\ldots,X_n,Z_1,\ldots,Z_m) &\ra&
\atom{a}_1\\
&\vdots&\\
\mathit{p}_{\sigma}^{m}(X_1,\ldots,X_n,Z_1,\ldots,Z_m) &\ra&
\atom{a}_k,
\end{array}
\]
where $p_{\sigma}^{i}$ is an $(n+i)$-ary auxiliary predicate not
occurring in $\R$, for each $i \in [m]$.
Let $\norm{\dep} = \bigcup_{\sigma \in \dep} \norm{\sigma}$, and
$\R_{\norm{\dep}}$ be the schema obtained by adding to $\R$ the
auxiliary predicates occurring in $\norm{\dep}$.

\paragraph{\small{\textsf{The TGD Chase Procedure.}}}
Here is an example of how the TGD chase procedure works. Consider
the set $\dep$ of TGDs consisting of
\[
\sigma_1\ :\ p(X,Y,Z)\ \ra\ s(Y,X) \qquad \textrm{and} \qquad
\sigma_2\ :\ s(X,Y)\ \ra\ \exists Z \exists W \, p(Y,Z,W),
\]
and let $D = \{p(a,b,c)\}$.
An infinite chase of $D$ w.r.t~$\dep$ is:
\[
\begin{array}{c}
D\\
\tup{\sigma_1,h_1 = \{X \ra a,Y \ra b,Z \ra c\}}\\
D \cup \{s(b,a)\}\\
\tup{\sigma_2,h_2 = \{X \ra b,Y \ra a\}}\\
D \cup \{s(b,a),p(a,z_1,z_2)\}\\
\tup{\sigma_1,h_3 = \{X \ra a,Y \ra z_1,Z \ra z_2\}}\\
D \cup \{s(b,a),p(a,z_1,z_2),s(z_1,a)\}\\
\vdots\\
\tup{\sigma_2,h_{2i+2} = \{X \ra z_{2i-1},Y \ra a\}}\\
D \cup \{s(b,a),p(a,z_1,z_2)\} \cup \bigcup_{j=1}^{i}
\{s(z_{2j-1},a),p(a,z_{2j+1},z_{2j+2})\}\\
\vdots\\
\end{array}
\]
Clearly, $\chase{D}{\dep}$ is the infinite instance
\[
\{p(a,b,c),s(b,a),p(a,z_1,z_2)\}\ \cup\ \bigcup_{j=1}^{\infty}
\{s(z_{2j-1},a),p(a,z_{2j+1},z_{2j+2})\},
\]
where $z_1,z_2,\ldots$ are nulls of $\freshdom$.

\subsection{Query Answering via Rewriting}\label{appsec:query-rewriting}

The problem of deciding whether a set of TGDs guarantees the
first-order rewritability of CQ answering is undecidable. This
negative result holds already for the class of \emph{full} TGDs,
i.e., TGDs without existentially quantified variables. To establish
this we first need to define when a set of full TGDs is bounded.
Consider a database $D$, and a set $\dep$ of full TGDs. The
\emph{level} of an atom $\atom{a} \in \chase{D}{\dep}$ is defined
inductively as follows: if $\atom{a} \in D$, then $\level{\atom{a}}
= 0$; otherwise, if $\atom{a}$ is obtained during the chase step
$I_i \tup{\sigma,h} I_{i+1}$, then $\level{\atom{a}} =
\max_{\atom{b} \in h(\body{\sigma})} \{\level{\atom{b}}\} + 1$.
The chase of $D$ w.r.t.~$\dep$ up to level $k \geqslant 0$, denoted
$\pchase{k}{D}{\dep}$, is defined as the instance
$\{\atom{a}~|~\atom{a} \in \chase{D}{\dep} \textrm{~and~}
\level{\atom{a}} \leqslant k\}$.
A set $\dep$ of full TGDs over a schema $\R$ is \emph{bounded} if
there exists an integer constant $k \geqslant 0$ such that
$\chase{D}{\dep} = \pchase{k}{D}{\dep}$, for every database $D$ for
$\R$.

It is not difficult to show that a set of full TGDs guarantees the
first-order rewritability of CQ answering iff is bounded.
The ``only-if'' direction follows from the fact that classes of TGDs
which enjoy the so-called \emph{bounded-derivation depth property
(BDDP)} guarantee the first-order rewritability of CQ
answering~\cite{CaGL12}. The BDDP implies that, for query answering
purposes, we can consider the chase up to a level which depends only
on the query and the set of TGDs (but not on the database); clearly,
a set of full TGDs which is bounded trivially enjoys the BDDP.
The ``if'' direction is implicit in~\cite{AjGu89}, where it is shown
that each first-order expressible Datalog query is bounded.
Since the problem of deciding whether a set of full TGDs is bounded
is undecidable, which is implicit in~\cite{GMSV93} where it is shown
that the same problem for Datalog programs is undecidable, the
desired result follows.

\section{UCQ Rewriting}\label{appsec:ucq-rewriting}

\subsection{Additional Modeling
Features}\label{appsec:additional-modeling-features}

We discuss how linear and sticky sets of TGDs can be safely combined
with functional dependencies (FDs) and negative constraints, that
is, modeling features which are vital for representing ontologies.

\paragraph{\small{\textsf{Functional Dependencies.}}}
The interaction of general TGDs and FDs has been proved to lead to
undecidability of query answering. In fact, this is true even in
simple cases such that of inclusion and functional dependencies
\cite{ChVa85}, or inclusion and key dependencies, see,
e.g.,~\cite{CaLR03}. Thus, we cannot hope to safely combine the
classes of TGDs discussed above with FDs, unless suitable syntactic
restrictions are applied which would guarantee the decidability of
query answering.

A \emph{functional dependency} $\phi$ over a schema $\R$ is an
assertion $r : \insA \ra \insB$, where $r \in \R$ and $\insA,\insB$
are sets of attributes of $r$, asserting that the attributes of
$\insB$ functionally depend on the attributes of $\insA$.
Formally, $\phi$ is satisfied by an instance $I$ for $\R$ if the
following holds: whenever there exist two (distinct) atoms
$r(\tuple{t_1})$ and $r(\tuple{t_2})$ in $I$ such that
$\tuple{t_1}[\insA] = \tuple{t_2}[\insA]$, where $\tuple{t}[\insA]$
denotes the projection of tuple $\tuple{t}$ over $\insA$, then
$\tuple{t_1}[\insB] = \tuple{t_2}[\insB]$.

\begin{example}
Having the binary relation $\mathit{fatherOf}$, we can assert that
each person has at most one father by asserting that the first
attribute of $\mathit{fatherOf}$ functionally depends on the second
attribute, i.e., $\mathit{fatherOf} : \{2\} \ra \{1\}$.
\hfill\markfull
\end{example}

Note that FDs can be identified with sets of equality rules (a.k.a.
equality-generating dependencies). For example, the FD given in the
above example can be equivalently written as
$\mathit{fatherOf}(Y,X),\mathit{fatherOf}(Z,X) \ra Y=Z$.
As said, suitable syntactic restrictions are needed which would
guarantee the decidability of query answering.
A crucial concept towards this direction is separability, which
formulates a controlled interaction of TGDs and FDs; see,
e.g.,~\cite{CaGP12}.
Formally speaking, a set $\dep = \tdep \cup \fdep$ over a schema
$\R$, where $\tdep$ and $\fdep$ are sets of TGDs and FDs,
respectively, is \emph{separable} if, for every database $D$ for
$\R$, either $D \not\models \fdep$, or, for every CQ $q$ over $\R$,
$\ans{q}{D}{\dep}$\footnote{The answer to a CQ $q$ w.r.t. a database
$D$ and a set $\dep$ of TGDs can be naturally extended to sets of
TGDs and FDs (or even an arbitrary first-order theory).} $=
\ans{q}{D}{\tdep}$.
Notice that separability is a semantic notion. A sufficient
syntactic criterion for separability of TGDs and FDs is given
in~\cite{CaGP12}, and sets of TGDs and FDs satisfying this criterion
are called \emph{non-conflicting}. The formal definition of the
non-conflicting condition is beyond the scope of this paper, and for
more details we refer the reader to~\cite{CaGP12}.

Obviously, to perform query answering under non-conflicting TGDs and
FDs, we just need to apply a preliminary check whether the given
database satisfies the FDs, and if this is the case, then we
eliminate them, and proceed by considering only the set of TGDs.
This preliminary check can be reduced to the problem of CQ
evaluation. For example, given a ternary relation $r$, we can check
if the FD $r : \{1\} \ra \{3\}$ is satisfied by the database $D$ by
checking whether the CQ $q : p() \la
r(X,Y,Z),r(X,Y',Z'),\mathit{neq}(Z,Z')$ answers negatively over the
database $D_{\neq} = D \cup \{\mathit{neq}(a,b)~|~\{a,b\} \subseteq
\adom{D} \textrm{~and~} a \neq b\}$, i.e., $q(D_{\neq}) =
\emptyset$. Clearly, the atom $\mathit{neq}(a,b)$ implies that $a$
and $b$ are different constants.

\paragraph{\small{\textsf{Negative Constraints.}}}
A \emph{negative constraint} $\nu$ over a schema $\R$ is a
first-order formula of the form $\forall \insX \, \varphi(\insX) \ra
\bot$, where $\insX \subset \variables$, $\varphi$ is a conjunction
of atoms over $\R$ (possibly with constants), and $\bot$ denotes the
Boolean constant \emph{false}. Formula $\varphi$ is the \emph{body}
of $\nu$, denoted as $\body{\nu}$. Henceforth, the universal
quantifiers are omitted for brevity.

\begin{example}
With negative constraints we can assert disjointness assertions such
as students and professors are disjoint sets:
$\mathit{student}(X),\mathit{professor}(X) \ra \bot$. We can also
express non-participation assertions such as a student cannot be the
director of a research group:
$\mathit{student}(X),\mathit{directs}(X,Y) \ra \bot$.
\hfill\markfull
\end{example}

A negative constraint $\nu$ is satisfied by an instance $I$ if there
is no homomorphism $h$ such that $h(\varphi(\insX)) \subseteq I$.
Checking whether a set of negative constraints is satisfied by a
database and a set of non-conflicting TGDs and FDs is tantamount to
query answering~\cite{CaGL12}. Formally speaking, given a database
$D$, a set $\dep$ of non-conflicting TGDs and FDs, and a set
$\dep_\bot$ of negative constraints, for each $\nu \in \dep_\bot$,
we compute the answer to the CQ $q_\nu$ of the form $p() \la
\body{\nu}$ w.r.t. $D$ and $\dep$. If at least one of such queries
$q_{\nu}$ answers positively, i.e., $\tup{} \in
\ans{q_{\nu}}{D}{\dep \cup \dep_\bot}$, then there is no instance
$I$ such that $I \supseteq D$ and $I \models \dep \cup \dep_\bot$,
or, equivalently, there is no model of $D$ w.r.t. $\dep \cup
\dep_\bot$, and thus query answering is trivial since every query is
entailed; otherwise, $\ans{q}{D}{\dep \cup \dep_\bot} =
\ans{q}{D}{\dep}$, for every CQ $q$, i.e., we can answer queries by
ignoring the negative
constraints.\\

From the above discussion, we conclude that our techniques for
answering CQs under linear and sticky sets of TGDs apply immediately
even if we additionally consider FDs, providing that the
non-conflicting condition holds, and negative constraints.
Notice that the formalism obtained by taking together
non-conflicting linear or sticky sets of TGDs and FDs, and negative
constraints, is strictly more expressive than the most
widely-adopted tractable ontology languages, in particular
DL-Lite$_{\A}$, DL-Lite$_{\F}$ and DL-Lite$_{\R}$, without loosing
the desirable property of first-order rewritability; for more
details, we refer the reader to~\cite{CaGL12,CaGP12}.

\subsection{Proof of Claim~\ref{cla:auxiliary-claim}}\label{appsec:complete-auxiliary-claim}

Clearly, there exists a set $A$ such that $h(\body{p} \setminus A)
\subseteq \apchase{i-1}{D}{\dep}$ and $h(A) = \atom{a}$. Observe
that the null value that occurs in $\atom{a}$ at position
$\pi_{\exists}(\sigma)$ does not occur in $\apchase{i-1}{D}{\dep}$
or in $\atom{a}$ at a position other than $\pi_{\exists}(\sigma)$.
Therefore, the variables that occur in the atoms of $A$ at
$\pi_{\exists}(\sigma)$ do not appear at some other position.
Consequently, $A$ can be partitioned into $A_1,\ldots,A_m$, where $m
\geqslant 1$, in such a way that the following holds: for each $i
\in [m]$, in the atoms of $A_i$ at position $\pi_{\exists}(\sigma)$
the same variable $U_i$ occurs, and also $U_i$ does not occur in
$\{A_1,\ldots,A_m\} \setminus \{A_i\}$ or in $A_i$ at some position
other than $\pi_{\exists}(\sigma)$. It is easy to verify that each
set $A_i$ is factorizable w.r.t.~$\sigma$.
Suppose that we factorize $A_1$. Then, the query $p_1 =
\gamma_1(p)$, where $\gamma_1$ is the MGU for $A_1$, is obtained.
Observe that $h$ is a unifier for $A_1$. By definition of the MGU,
there exists a substitution $\theta_1$ such that $h = \theta_1 \circ
\gamma_1$. Clearly, $\theta_1(\body{p_1} \setminus \gamma_1(A)) =
\theta_1(\gamma_1(\body{p}) \setminus \gamma_1(A)) = h(\body{p}
\setminus A) \subseteq \apchase{i-1}{D}{\dep}$, $\theta_1(\insV_1) =
\theta_1(\gamma_1(\insV)) = h(\insV) = \tuple{t}$, where $\insV_1$
are the distinguished variables of $p_1$, and $\theta_1(\gamma_1(A))
= h(A) = \atom{a}$.
Now, observe that the set $\gamma_1(A_2) \subseteq \body{p_1}$ is
factorizable w.r.t.~$\sigma$. By applying factorization we get the
query $p_2 = \gamma_2(p_1)$, where $\gamma_2$ is the MGU for
$\gamma_1(A_2)$. Since $\theta_1$ is a unifier for $\gamma_1(A_2)$,
there exists a substitution $\theta_2$ such that $\theta_1 =
\theta_2 \circ \gamma_2$. Clearly, $\theta_2(\body{p_2} \setminus
\gamma_2(\gamma_1(A))) = \theta_2(\gamma_2(\body{p_1}) \setminus
\gamma_2(\gamma_1(A))) = \theta_1(\gamma_1(\body{p}) \setminus
\gamma_1(A)) = h(\body{p} \setminus A) \subseteq
\apchase{i-1}{D}{\dep}$, $\theta_2(\insV_2) =
\theta_2(\gamma_2(\insV_1)) = \theta_2(\gamma_2(\gamma_1(\insV))) =
\theta_1(\gamma_1(\insV)) = h(\insV) = \tuple{t}$, where $\insV_2$
are the distinguished variables of $p_2$, and
$\theta_2(\gamma_2(\gamma_1(A))) = \theta_1(\gamma_1(A)) = h(A) =
\atom{a}$.
Eventually, by applying the factorization step as above, we will get
the CQ $p_m = \gamma_m \circ \ldots \circ \gamma_1(p)$, where
$\gamma_j$ is the MGU for the set $\gamma_{j-1} \circ \ldots \circ
\gamma_1(A_j)$, for $j \in \{2,\ldots,m\}$ (recall that $\gamma_1$
is the MGU for $A_1$), such that $\theta_m(\body{p_m} \setminus
\gamma_m \circ \ldots \circ \gamma_1(A)) \subseteq
\apchase{i-1}{D}{\dep}$, $\theta_m(\insV_m) = \tuple{t}$, where
$\insV_m$ are the distinguished variables of $p_m$, and
$\theta_m(\gamma_m \circ \ldots \circ \gamma_1(A)) = \atom{a}$.
It is easy to verify that $\sigma$ is applicable to $\gamma_m \circ
\ldots \circ \gamma_1(A)$. The claim follows with $p' = p_m$, $S =
\gamma_m \circ \ldots \circ \gamma_1(A)$ and $\lambda = \theta_m$.

\subsection{XRewrite under More Expressive Classes of
TGDs}\label{appsec:more-expressive-classes}

\paragraph{\small{\textsf{Multi-linear.}}}
An interesting extension of linear TGDs, proposed in~\cite{CaGL12},
are the so-called multi-linear TGDs. A TGD $\sigma$ is called
\emph{multi-linear} if, for each atom $\atom{a} \in \body{\sigma}$,
$\var{\atom{a}} = \var{\body{\sigma}}$, i.e., each body-atom of
$\sigma$ contains all the body-variables of $\sigma$. The goal of
multi-linearity was the definition of a natural class of TGDs which
is strictly more expressive than DL-Lite$_{\R,\sqcap}$, that is, the
extended version of DL-Lite$_\R$ which allows for concept
conjunction~\cite{CGLL*13}.

Interestingly, our rewriting algorithm can also treat multi-linear
TGDs. Since Theorem~\ref{the:TGD-rewrite-sound-complete} holds for
arbitrary TGDs, we get that $\mathsf{XRewrite}$ is correct even if
we consider multi-linear TGDs. The non-trivial part is the
termination of $\mathsf{XRewrite}$ under this extended class.
It is possible to show that the final rewriting contains (modulo
bijective variable renaming) at most $|\mathit{terms}(q)| +
|\body{q}| \cdot \arity{\R}$ symbols (variables and constants),
where $q$ is the input query and $\R$ is the underlying schema,
which in turn implies termination of $\mathsf{XRewrite}$. This can
be established by induction on the number of atoms in the given
query.

\begin{figure}[t]
  \epsfclipon \centerline {\hbox{
      \leavevmode \epsffile{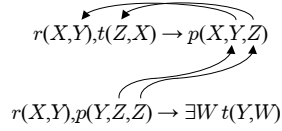} }}
  \epsfclipoff \caption{Sticky-join condition.}
  \label{fig:sticky-join-condition}
\end{figure}

\paragraph{\small{\textsf{Sticky-join.}}}
Although the class of sticky sets of TGDs is a relevant and
applicable modeling tool, it is not expressive enough to model
simple cases such as the linear TGD $r(X,Y,X) \ra \exists Z \,
s(Z,Y)$; clearly, after applying $\mathsf{SMarking}$, the variable
$X$ is marked, and thus the stickiness condition is violated. The
question whether stickiness and linearity can be safely combined was
investigated in~\cite{CaGP12}, and the class of sticky-join sets of
TGDs was proposed.
Intuitively speaking, the sticky-join condition allows a marked
variable to appear more than once in the body of a TGD $\sigma$ as
long as \emph{(i)} it appears only in one atom of $\body{\sigma}$,
and \emph{(ii)} its marking is not propagated in more than one
body-atoms of a TGD $\sigma'$ during the marking procedure (i.e.,
the situation illustrated in Figure~\ref{fig:sticky-join-condition},
where the marking of the variable $Z$ in the body of
$r(X,Y),p(Y,Z,Z) \ra \exists W \, t(Y,W)$ is propagated in two
different atoms, is forbidden).
The formal definition of this class is in the same spirit as the one
for sticky sets of TGDs, but a more involved marking procedure which
keeps track of the origin of each marking is applied; for more
details we refer the reader to~\cite{CaGP12}.

Sticky-join sets of TGDs can also be treated by our rewriting
algorithm. As for multi-linearity, the non-trivial part is the
termination of $\mathsf{XRewrite}$ under this extended class. This
can be shown by establishing a syntactic property of the rewritten
query analogous to the one for sticky sets of TGDs stated in
Lemma~\ref{lem:property}. More precisely, given a CQ $q$ over a
schema $\R$, and a sticky-join set $\dep$ of TGDs over $\R$, it can
be proved that, for each $q' \in q_{\dep}$, every variable of
$(\var{q'} \setminus \var{q})$ occurs only in one atom of
$\body{q'}$ (possibly more than once). Then, by giving an argument
similar to that in the proof of
Theorem~\ref{the:tgdrewrite-termination}, we can show that the
maximum number of CQs that can be constructed during the execution
of $\mathsf{XRewrite}$ is bounded by the number of different CQs
that can be constructed using terms of $T = \adom{q} \cup
\{\star_1,\ldots,\star_w\}$, where $w = \arity{\R}$ (recall that in
the case of sticky sets of TGDs just one special symbol is enough),
and predicates of $\R$; this immediately implies termination of
$\mathsf{XRewrite}$.

\section{Optimize the Rewriting for Linear TGDs}\label{appsec:ucq-optimization}

\subsection{Proof of Claim~\ref{cla:atom-coverage}}\label{appsec:atom-coverage-claim}

Let us first construct the TGD $\sigma$. Since $\atom{a}
\prec_{\dep}^{q} \atom{b}$, there exists a tight sequence
$\sigma_1,\ldots,\sigma_m$, for $m \geqslant 1$, of TGDs of $\dep$
which is compatible to $\atom{a}$. If $m=1$, then $\sigma =
\sigma_1$; in this case, trivially $\dep \models \sigma$, i.e., for
every instance $I$ that satisfies $\dep$, $I \models \sigma$. The
interesting case is when $m > 1$. We define $\sigma$ via an
inductive construction. Without loss of generality, we assume that
the TGDs $\sigma_1,\ldots,\sigma_m$ do not have variables in common.
By definition, there exists a homomorphism $\gamma_{12}$ such that
$\gamma_{12}(\body{\sigma_2}) = \head{\sigma_1}$. By applying the
resolution inference rule\footnote{Notice that we do not need to
Skolemise since the MGU is a homomorphism from $\body{\sigma_2}$ to
$\head{\sigma_1}$.}, we get the TGD $\sigma[12] :
\gamma_{12}(\body{\sigma_1}) \ra \gamma_{12}(\head{\sigma_2})$.
Notice that $\gamma_{12}$ is the identity on the variables of
$\sigma_1$, and hence $\sigma[12]$ is actually the TGD
$\body{\sigma_1} \ra \gamma_{12}(\head{\sigma_2})$.
Let us now show that we can obtain the TGD $\sigma[12\ldots k]$ from
$\sigma[12\ldots(k-1)]$ and $\sigma_k$ by applying the resolution
inference rule. Observe that
$\gamma_{12\ldots(k-1)}(\head{\sigma_{k-1}}) =
\head{\sigma[12\ldots(k-1)]}$. Since, by definition, there exists a
homomorphism $\gamma$ such that $\gamma(\body{\sigma_k}) =
\head{\sigma_{k-1}}$, we get that $\gamma_{12 \ldots k} =
\gamma_{12\ldots(k-1)} \circ \gamma$ maps $\body{\sigma_k}$ to
$\head{\sigma[12 \ldots (k-1)]}$.
Clearly, $\gamma_{12 \ldots k}$ is a MGU for $\body{\sigma_k}$ and
$\head{\sigma[12 \ldots (k-1)]}$. By applying the resolution
inference rule, we get $\sigma[12 \ldots k] : \gamma_{12 \ldots
k}(\body{\sigma[12 \ldots (k-1)]}) \ra \gamma_{12 \ldots
k}(\head{\sigma_k})$. Notice that $\gamma_{12 \ldots k}$ is the
identity on the variables of $\sigma[12 \ldots (k-1)]$, and thus
$\sigma[12 \ldots k] = \body{\sigma[12 \ldots (k-1)]} \ra \gamma_{12
\ldots k}(\head{\sigma_k})$. The desired TGD $\sigma$ is $\sigma[12
\ldots m]$. Notice that $\body{\sigma[12 \ldots m]} =
\body{\sigma_1}$, and hence $\sigma$ is the TGD $\body{\sigma_1} \ra
\gamma_{12 \ldots m}(\head{\sigma_m})$.

To show that $\dep \models \sigma$ it suffices to show that, given
two TGDs $\sigma'$ and $\sigma''$ such that there exists a
substitution $\gamma$ that maps $\body{\sigma''}$ to
$\head{\sigma'}$, then $\{\sigma',\sigma''\} \models \sigma'''$,
where $\sigma'''$ is the TGD $\body{\sigma'} \ra
\gamma(\head{\sigma''})$. Consider an instance $J$ that satisfies
$\{\sigma',\sigma''\}$, and assume that there exists a homomorphism
$g$ such that $g(\body{\sigma'}) \in J$ (otherwise, the claim
follows immediately). We need to show that there exists an extension
$g'$ of $g$ such that $g'(\gamma(\head{\sigma''})) \in J$. Since $J
\models \sigma'$, there exists an extension $g''$ of $g$ such that
$g''(\head{\sigma'}) \in J$. Thus, $g''(\gamma(\body{\sigma''})) \in
J$. Since $J \models \sigma''$, there exists an extension $\rho$ of
$g'' \circ \gamma$ such that $\rho(\head{\sigma''}) \in J$. Assuming
that $\head{\sigma''} = r(\insX,\insZ)$, where $\insZ =
Z_1,\ldots,Z_k$, for $k \geqslant 1$, are the existentially
quantified variables of $\sigma''$, we define the substitution $g' =
g \cup \{\gamma(Z_i) \ra \rho(Z_i)\}_{i \in [k]}$; if $\insZ =
\emptyset$, then $g' = g$. Notice that $g'$ is well-defined since
none of the variables $\gamma(Z_1), \ldots, \gamma(Z_k)$ occurs in
$g$. Clearly, $g'(\gamma(\head{\sigma''})) =
r(g'(\gamma(\insX)),g'(\gamma(\insZ))) = r(g(\insX),\rho(\insZ)) =
\rho(r(\insX,\insZ)) \in J$, as needed.

Let us now establish the existence of $\lambda$ and $\mu$. By
definition, there exists a substitution $\lambda'$ such that
$\lambda'(\body{\sigma}) = \atom{a}$. We define $\lambda$ to be the
extension of $\lambda'$ that maps each existentially quantified
variable of $\sigma$ to a ``fresh'' symbol of $\freshdom$. Let
$\mu'$ be the substitution that maps each variable occurring in
$\body{q} \setminus \{\atom{b}\}$ to itself. We obtain $\mu$ by
adding to $\mu'$ the following: for each term $t \in \adom{\atom{b}}
\setminus T(q,\atom{b})$, if $t$ occurs in $\atom{b}$ at position
$\pi$, then add $t \ra \lambda(t')$, where $t'$ is the term at
position $\pi$ in $\head{\sigma}$. By construction,
$\lambda(\head{\sigma}) = \mu(\atom{b})$, and the claim follows.

\begin{table*}[t]
  \centering
  \tbl{Case analysis in the proof of Lemma~\ref{lem:unique-elimination-strategy}. \label{tab:cases}}{
  \begin{tabular}{|c||c|c||c|}
    \hline
    Category & $\mathit{cover}(\atom{a}_{k})$ & $\mathit{cover}(\atom{a}_{k+1})$ & Result \\
    \hline
      & $= \emptyset$ & $= \emptyset$ &  \\
     & $= \emptyset$ & $\neq \emptyset$ &  \\
      & $\neq \emptyset$ & $= \emptyset$ &  \\
     A & $\supset \{\atom{a}_{k+1}\}$ & $\supset \{\atom{a}_{k}\}$ & $\mathsf{eliminate}(q,S_1,\dep) = \mathsf{eliminate}(q,S_2,\dep)$ \\
      & $\supset \{\atom{a}_{k+1}\}$ & $\not\ni \atom{a}_{k}$ &  \\
      & $\not\ni \atom{a}_{k+1}$ & $\supset \{\atom{a}_{k}\}$ &  \\
      & $\not\ni \atom{a}_{k+1}$ & $\not\ni \atom{a}_{k}$ &  \\
    \hline
    B & $= \{\atom{a}_{k+1}\}$ & $= \{\atom{a}_{k}\}$ & $\mathsf{eliminate}(q,S_1,\dep) \neq \mathsf{eliminate}(q,S_2,\dep)$ \\
    & & & $|\mathsf{eliminate}(q,S_1,\dep)| = |\mathsf{eliminate}(q,S_2,\dep)|$\\
    \hline
      & $\supset \{\atom{a}_{k+1}\}$ & $= \{\atom{a}_{k}\}$ &  \\
    C & $= \{\atom{a}_{k+1}\}$ & $\supset \{\atom{a}_{k}\}$ & not applicable \\
      & $= \{\atom{a}_{k+1}\}$ & $\not\ni  \atom{a}_{k}$, $\neq \emptyset$ &  \\
      & $\not\ni \atom{a}_{k+1}$, $\neq \emptyset$ & $= \{\atom{a}_{k}\}$ &  \\
    \hline
    \hline
  \end{tabular}}
\end{table*}

\subsection{Proof of Lemma~\ref{lem:unique-elimination-strategy}}\label{appsec:unique-elimination-lemma}

We assume that $S_1$ and $S_2$ are exactly the same except two
consecutive elements. In other words, for each $i \in
\{1,\ldots,k-1,k+2,\ldots,n\}$, $S_1[i] = S_2[i]$, $S_1[k] =
S_2[k+1]$ and $S_1[k+1] = S_2[k]$.
Notice that the above assumption does not harm the generality of the
proof since, given any two strategies $S$ and $S'$, $S$ can be
obtained from $S'$ (and vice versa) by applying finitely many times
an operator which swaps two consecutive elements of a strategy.
For example, assuming that $S_1 =
[\atom{a},\atom{b},\atom{c},\atom{d}]$ and $S_2 =
[\atom{c},\atom{a},\atom{d},\atom{b}]$, $S_2$ can be obtained from
$S_1$ as follows: $S_1 = [\atom{a},\atom{b},\atom{c},\atom{d}] \ra
[\atom{a},\atom{c},\atom{b},\atom{d}] \ra
[\atom{c},\atom{a},\atom{b},\atom{d}] \ra
[\atom{c},\atom{a},\atom{d},\atom{b}] = S_2$.
Let us now establish the claim. For notational convenience, given a
strategy $S$, let $\mathsf{eliminate}^{\ell}(q,S,\dep)$ be the
subset of $\mathsf{eliminate}(q,S,\dep)$ computed after $\ell$
applications of the for-loop; clearly,
$\mathsf{eliminate}^{k-1}(q,S_1,\dep) =
\mathsf{eliminate}^{k-1}(q,S_2,\dep)$.
In what follows, let $\atom{a}_k = S_1[k] = S_2[k+1]$ and
$\atom{a}_{k+1} = S_1[k+1] = S_2[k]$. The proof proceeds by case
analysis whether $\mathit{cover}(\atom{a}_k)$ and
$\mathit{cover}(\atom{a}_{k+1})$ are empty or not after $k-1$
applications of the for-loop.
All the possible cases are grouped in three categories which are
depicted in Table~\ref{tab:cases}. Observe that for category A,
$\mathsf{eliminate}(q,S_1,\dep)$ and
$\mathsf{eliminate}(q,S_2,\dep)$ coincide, which immediately implies
that they have the same cardinality. The interesting case is
category B where $\mathsf{eliminate}(q,S_1,\dep)$ and
$\mathsf{eliminate}(q,S_2,\dep)$ are different, but they have the
same cardinality. Finally, the cases of category C are not
applicable since it is not possible to occur. In the rest of the
proof, we prove the first case of each category; all the other cases
can be shown in a similar way.


\renewcommand{\tabcolsep}{2pt}
\begin{table*}
    \centering
    \tbl{Test queries. \label{apptab:test-queries}}{
    \begin{tabular}{|l|lll|}
        \hline
        \multicolumn{1}{|c|}{Ontology} & \multicolumn{3}{c|}{Queries} \\
        \hline
        \multicolumn{1}{|c|}{\multirow{6}{*}{$\mathsf{V}$}} & $q_1(A)$ & $\leftarrow$ & $\mathit{ Location(A).}$\\
        ~ & $q_2(A,B)$ & $\leftarrow$ & $\mathit{Military\_Person(A), hasRole(B,A), related(A,C).}$\\
        ~ & $q_3(A,B)$ & $\leftarrow$ & $\mathit{Time\_Dependant\_Relation(A), hasRelationMember(A,B), Event(B).}$\\
        ~ & $q_4(A,B)$ & $\leftarrow$ & $\mathit{Object(A), hasRole(A,B), Symbol(B).}$\\
        ~ & $q_5(A)$ & $\leftarrow$ & $\mathit{Individual(A), hasRole(A,B), Scientist(B), hasRole(A,C),}$\\
        ~&~&~& $\mathit{ Discoverer(C), hasRole(A,D), Inventor(D).}$\\
        \hline
        \multicolumn{1}{|c|}{\multirow{8}{*}{$\mathsf{S}$}} & $q_1(A)$ & $\leftarrow$ & $\mathit{ StockExchangeMember(A).}$\\
        ~ & $q_2(A,B)$ & $\leftarrow$& $\mathit{ Person(A), hasStock(A,B), Stock(B).}$\\
        ~ & $q_3(A,B,C)$& $\leftarrow$ &$\mathit{ FinantialInstrument(A), belongsToCompany(A,B), Company(B),}$\\
        ~&~&~& $\mathit{hasStock(B,C), Stock(C).}$\\
        ~ & $q_4(A,B,C)$& $\leftarrow$& $\mathit{ Person(A), hasStock(A,B), Stock(B), isListedIn(B,C),}$ \\
        ~&~&~&$\mathit{StockExchangeList(C).}$\\
        ~ & $q_5(A,B,C,D)$ &$\leftarrow$ &$\mathit{ FinantialInstrument(A), belongsToCompany(A,B), Company(B),}$\\
        ~ & ~ &~ &$\mathit{hasStock(B,C), Stock(C),isListedIn(B,D), StockExchangeList(D).}$ \\
        \hline
        \multicolumn{1}{|c|}{\multirow{6}{*}{$\mathsf{U}$}} & $q_1(A)$ &$\leftarrow$& $\mathit{ worksFor(A,B), affiliatedOrganizationOf(B,C).}$ \\
        ~ & $q_2(A,B)$ &$\leftarrow$& $\mathit{ Person(A), teacherOf(A,B), Course(B).}$ \\
        ~ & $q_3(A,B,C)$ &$\leftarrow$ &$\mathit{Student(A), advisor(A,B), FacultyStaff(B), takesCourse(A,C)}$\\
        ~&~&~& $\mathit{teacherOf(B,C), Course(C).}$\\
        ~ & $q_4(A,B)$ &$\leftarrow$& $\mathit{ Person(A), worksFor(A,B), Organization(B).}$ \\
        ~ & $q_5(A)$ &$\leftarrow$& $\mathit{ Person(A), worksFor(A,B), University(B), hasAlumnus(B,A).}$ \\
        \hline
        \multicolumn{1}{|c|}{\multirow{6}{*}{$\mathsf{A}$}} & $q_1(A)$& $\leftarrow$& $\mathit{ Device(A), assistsWith(A,B).}$ \\
        ~ & $q_2(A)$& $\leftarrow$ &$\mathit{ Device(A), assistsWith(A,B), UpperLimbMobility(B).}$ \\
        ~ & $q_3(A)$ &$\leftarrow$& $\mathit{ Device(A), assistsWith(A,B), Hear(B), affects(C,B), Autism(C).}$ \\
        ~ & $q_4(A)$& $\leftarrow$& $\mathit{ Device(A), assistsWith(A,B), PhysicalAbility(B).}$ \\
        ~ & $q_5(A)$& $\leftarrow$& $\mathit{ Device(A), assistsWith(A,B), PhysicalAbility(B), affects(C,B),}$ \\
        ~&~&~& $\mathit{Quadriplegia(C).}$\\
        \hline
        \multicolumn{1}{|c|}{\multirow{5}{*}{$\mathsf{P5}$}} & $q_1(A)$ &$\leftarrow$& $\mathit{ edge(A,B).}$ \\
        ~ & $q_2(A)$& $\leftarrow$& $\mathit{edge}(A,B), \mathit{edge}(B,C).$ \\
        ~ & $q_3(A)$& $\leftarrow$& $\mathit{edge}(A,B), \mathit{edge}(B,C), \mathit{edge}(C,D).$ \\
        ~ & $q_4(A)$& $\leftarrow$ &$\mathit{edge}(A,B), \mathit{edge}(B,C), \mathit{edge}(C,D), \mathit{edge}(D,E).$ \\
        ~ & $q_4(A)$& $\leftarrow$ &$\mathit{edge}(A,B), \mathit{edge}(B,C), \mathit{edge}(C,D), \mathit{edge}(D,E), \mathit{edge}(E,F).$ \\
        \hline
        \multicolumn{1}{|c|}{\multirow{5}{*}{$\mathsf{SF}$}} & $q_1(A,B)$ &$\leftarrow$& $\mathit{p_1(A), r_1(A,B), s_1(B).}$ \\
        ~& $q_2(A,B)$ &$\leftarrow$& $p_5(A), r_5(A,E,B), s_5(B).$\\
        ~& $q_3(A,B)$ &$\leftarrow$& $p_{10}(A), r_{10}(A,E,B), s_{10}(B).$\\
        ~& $q_4(A,B)$ &$\leftarrow$& $p_{20}(A), r_{20}(A,E,B), s_{20}(B).$\\
        ~& $q_5(A,B,C,D)$ &$\leftarrow$& $p_{10}(A), r_{10}(A,E,B), s_{10}(B), p_{20}(C), r_{20}(C,F,D), s_{20}(D).$\\
        \hline
        \multicolumn{1}{|c|}{\multirow{4}{*}{$\mathsf{CLQ}$}} & $q_1(A,B,C)$ &$\leftarrow$& $\mathit{C_{3}(A,B,C).}$ \\
        ~& $q_2(A,B,C)$ &$\leftarrow$& $C_{3}(A,B,C), Sp(A)$\\
        ~& $q_3(A,B,C)$ &$\leftarrow$& $C_{3}(A,B,C), C_{2}(A,D), Sp(A), Sp(D).$\\
        ~& $q_4(A,B,C,D,E,F)$ &$\leftarrow$& $C_{3}(A,B,C), C_{3}(D,E,F), C_{2}(A,D),Sp(A1), Sp(A2).$\\
        \hline
        \hline
    \end{tabular}}
\end{table*}


Case A1: It is not difficult to see that
$\mathsf{eliminate}^{k+1}(q,S_1,\dep)$ and
$\mathsf{eliminate}^{k+1}(q,S_2,\dep)$ coincide. Moreover, after the
$(k+1)$-th application of the for-loop, $\mathit{cover}(S_1[i])$ and
$\mathit{cover}(S_2[i])$, for each $i \in \{k+2,\ldots,n\}$, are the
same. Thus, $\mathsf{eliminate}^{n}(q,S_1,\dep)$ and
$\mathsf{eliminate}^{n}(q,S_2,\dep)$ are equal. By construction,
$\mathsf{eliminate}(q,S_i,\dep) =
\mathsf{eliminate}^{n}(q,S_2,\dep)$, for each $i \in \{1,2\}$.
Hence, $\mathsf{eliminate}(q,S_1,\dep) =
\mathsf{eliminate}(q,S_2,\dep)$, and the claim follows.

Case B1: Clearly, $\atom{a}_{k+i} \in
\mathsf{eliminate}^{k+1}(q,S_{1+i},\dep)$ and $\atom{a}_{k+i}
\not\in \mathsf{eliminate}^{k+1}(q,S_{2-i},\dep)$, for each $i \in
\{0,1\}$.
This implies that $|\mathsf{eliminate}^{k+1}(q,S_1,\dep)| =
|\mathsf{eliminate}^{k+1}(q,S_2,\dep)|$; notice that
$\mathsf{eliminate}^{k+1}(q,S_1,\dep) \neq
\mathsf{eliminate}^{k+1}(q,S_2,\dep)$ but they have the same
cardinality.
Now, consider an atom $\atom{b} \in \body{q}$. If $\atom{a}_k
\prec_{\dep}^{q} \atom{b}$, since $\atom{a}_{k+1} \prec_{\dep}^{q}
\atom{a}_{k}$, by transitivity of $\prec_{\dep}^{q}$, we get that
$\atom{a}_{k+1} \prec_{\dep}^{q} \atom{b}$. Conversely, if
$\atom{a}_{k+1} \prec_{\dep}^{q} \atom{b}$, since $\atom{a}_{k}
\prec_{\dep}^{q} \atom{a}_{k+1}$, we get that $\atom{a}_{k}
\prec_{\dep}^{q} \atom{b}$. Hence, $\atom{a}_{k} \prec_{\dep}^{q}
\atom{b}$ iff $\atom{a}_{k+1} \prec_{\dep}^{q} \atom{b}$, or,
equivalently, $\atom{a}_k \in \mathit{cover}(\atom{b})$ iff
$\atom{a}_{k+1} \in \mathit{cover}(\atom{b})$. Observe that, after
the $(k+1)$-th application of the for-loop, for each $i \in
\{k+2,\ldots,n\}$, either $\mathit{cover}(S_1[i]) =
\mathit{cover}(S_2[i])$, or $\mathit{cover}(S_1[i]) \setminus
\mathit{cover}(S_2[i]) = \{\atom{a}_{k+1}\}$ and
$\mathit{cover}(S_2[i]) \setminus \mathit{cover}(S_1[i]) =
\{\atom{a}_{k}\}$. Consequently, $|\mathsf{eliminate}(q,S_1,\dep)| =
|\mathsf{eliminate}(q,S_2,\dep)|$.

Case C1: Clearly, there exists an atom $\atom{b} \in \body{q}$,
other than $\atom{a}_k$ and $\atom{a}_{k+1}$, such that $\atom{b}
\prec_{\dep}^{q} \atom{a}_k$. Since $\atom{a}_k \prec_{\dep}^{q}
\atom{a}_{k+1}$, by transitivity of $\prec_{\dep}^{q}$, we get that
$\atom{b} \prec_{\dep}^{q} \atom{a}_{k+1}$. This implies that
$\mathit{cover}(\atom{a}_{k+1}) \supset \{\atom{a}_k\}$ which
contradicts our hypothesis that $\mathit{cover}(\atom{a}_{k+1}) =
\{\atom{a}_k\}$.

\section{Experimental Evaluation}\label{sec:experimental-evaluation-appendix}

\subsection{Test Queries}\label{appsec:test-queries}

Each ontology that we consider in our experimental evaluation has an
associated set of test queries (see Table~\ref{apptab:test-queries})
either obtained via an analysis of query logs or manually created.

\subsection{Remark on Non-recursive Datalog Rewritings}\label{appsec:nr-datalog-rewriting}

\renewcommand{\tabcolsep}{2pt}
\begin{table}[t]
    \centering
    \tbl{Non-recursive Datalog rewritings. \label{tab:dtg-rewriting}}{
        \begin{tabular}{cc|c|c|c|c|c|c|c|c|c|c|}
            \cline{3-12}
            ~ & ~ & \multicolumn{2}{c|}{Size} & \multicolumn{2}{c|}{\#Atoms} & \multicolumn{2}{c|}{\#Joins} & \multicolumn{2}{c|}{Time (ms)} & \multicolumn{2}{c|}{Memory (MB)}\\
            \cline{3-12}
            ~&~&~&~&~&~&~&~&~&~&~&~\\
            ~ & ~ & \textsc{Base} & \textsc{Datalog} & \textsc{Base} & \textsc{Datalog} & \textsc{Base}& \textsc{Datalog} & \textsc{Base}& \textsc{Datalog} & \textsc{Base}& \textsc{Datalog} \\
            ~&~&~&~&~&~&~&~&~&~&~&~\\
            \hline
            \multicolumn{1}{|c|}{\multirow{4}{*}{$\mathsf{V}$}} & $q_2$ & 10 & 13 & 30 & 15 & 30 & 3 & 4 & 7 & 6.4 & 6.4 \\
            \multicolumn{1}{|c|}{}                   & $q_3$ & 72 & 29 & 216 & 31 & 144 & 2 & 25 & 15 & 6.7 & 6.7 \\
            \multicolumn{1}{|c|}{}                   & $q_4$ & 185 & 44 & 555 & 46 & 370 & 2 & 26 & 19 & 7.4 & 7.4 \\
            \multicolumn{1}{|c|}{}                   & $q_5$ & 30 & 15 & 210 & 21 & 270 & 9 & 16 & 7 & 6.7 & 6.7 \\
            \hline
            \multicolumn{1}{|c|}{\multirow{2}{*}{$\mathsf{S}$}} & $q_4$ & 4 & 5 & 8 & 6 & 4 & 1 & 3 & 3 & 8.3 & 8.3 \\
            \multicolumn{1}{|c|}{}                   & $q_5$ & 8 & 7 & 24 & 12 & 24 & 5 & 4 & 14 & 8.3 & 8.3 \\
            \hline
            \multicolumn{1}{|c|}{\multirow{3}{*}{$\mathsf{U}$}} & $q_1$ & 2 & 4 & 4 & 5 & 2 & 1 & 4 & 6 & 6.2 & 6.2 \\
            \multicolumn{1}{|c|}{}                   & $q_3$ & 4 & 8 & 16 & 11 & 20 & 5 & 4 & 3 & 8.3 & 8.3 \\
            \multicolumn{1}{|c|}{}                   & $q_5$ & 10 & 8 & 20 & 9 & 20 & 2 & 5 & 5 & 8.3 & 8.3 \\
            \hline
            \multicolumn{1}{|c|}{\multirow{2}{*}{$\mathsf{A}$}} & $q_3$ & 104 & 107 & 520 & 317 & 520 & 210 & 2.5s & 2.4s & 49.9 & 49.9 \\
            \multicolumn{1}{|c|}{}                   & $q_5$ & 624 & 626 & 3,120 & 2,499 & 3,120 & 2,497 & 43s & 42s & 865.1 & 865.1 \\
            \hline
            \multicolumn{1}{|c|}{\multirow{5}{*}{$\mathsf{SF}$}}& $q_1$ & 1 & 4 & 3 & 6 & 2 & 2 & 3 & 3 & 6.2 & 6.2 \\
            \multicolumn{1}{|c|}{}                   & $q_2$ & 125 & 11 & 375 & 13 & 250 & 2 & 6 & 8 & 6.6 & 6.6 \\
            \multicolumn{1}{|c|}{}                   & $q_3$ & 1,000 & 31 & 3,000 & 33 & 2,000 & 2 & 15 & 6 & 9.1 & 9.1 \\
            \multicolumn{1}{|c|}{}                   & $q_4$ & 8,000 & 61 & 24,000 & 63 & 16,000 & 2 & 82 & 12 & 27.4 & 27.4 \\
            \multicolumn{1}{|c|}{}                   & $q_5$ & 27,000 & 37 & 162,000 & 42 & 108,000 & 4 & 472 & 6 & 121.5 & 121.5 \\
            \hline
            \hline
            \multicolumn{1}{|c|}{\multirow{4}{*}{$\mathsf{CLQ}$}}& $q_2$ & 38 & 40 & 140 & 105 & 166 & 97 & 15 & 16 & 4.8 & 4.8 \\
            \multicolumn{1}{|c|}{}                   & $q_3$ & 152 & 45 & 864 & 112 & 1,248 & 100 & 17 & 13 & 5.5 & 5.5 \\
            \multicolumn{1}{|c|}{}                   & $q_4$ & 5,776 & 83 & 48,336 & 215 & 83,296 & 198 & 317 & 15 & 87.1 & 87.1 \\
            \hline
            \hline
        \end{tabular}
    }
\end{table}

If we consider those tests where decomposition is more effective,
e.g., $\mathsf{SF}$ and $\mathsf{CLQ}$, we observe that most of the
time is spent unfolding the rewritten components into a UCQ. A possible way of tackling this problem is to keep the rewriting
``folded'', i.e., as a non-recursive Datalog rewriting. As mentioned
before, Datalog queries are in theory more complicated to execute
than UCQs; however, by effect of our decomposition technique, all Datalog rewritings constructed by our algorithm have a particular
shape. In fact, they consists of a set of UCQs, obtained by the
independent rewriting of the components, plus a (single)
reconciliation query (i.e., a view) that joins a number of relations
equal to the number of components constructed by the decomposition.
We conjecture that Datalog queries of this form do not represent a
major problem for current DBMSs, since they can be executed as a
simple two-levels nested SQL query.
Table~\ref{tab:dtg-rewriting} reports on the size of these
non-recursive Datalog rewritings, and on the effort to compute them
for the ontologies (and queries) where decomposition is effective.
Note that we do not report on the number of explored and generated
queries since they coincide with the corresponding values in the
column \textsc{Para} of Table~\ref{tab:decomposition}. The
comparison is carried out against a baseline (\textsc{Base}), where
query elimination and parallelization are applied, but the target
language for the rewriting is UCQs. Again, for $\mathsf{CLQ}$ query
elimination is not applied.

As expected, Datalog rewritings deliver, in average, a smaller
number of CQs to be executed. They also drastically reduce the
number of joins to be performed. The maximum gain we have observed
is for $q_5$ on $\mathsf{SF}$, where we have a gain of 99.8\% in
terms of queries to be executed and 99.99\% in terms of joins. This
has also impact on the rewriting time that is reduced by 99\%. On
the other hand, there are cases where the computation of a Datalog
rewriting increases the number of queries to be executed; e.g., for
$q_3$ on $\mathsf{U}$, where the size of the UCQ rewriting
is already smaller than the number of components computed from the
input query. It is worth noting that, even for these cases, the
number of atoms and joins is always reduced. As expected, on the $\mathsf{A}$
ontology, Datalog rewritings are not particularly effective since
almost the entire rewriting search-space is explored by one of the
components, while the others do not produce any rewritings.

\end{document}